\documentclass[journal]{IEEEtran}

\usepackage{amsmath}
\usepackage{amsfonts}
\usepackage{amssymb}
\usepackage{amsthm}
\usepackage{tabularx}
\usepackage{mathrsfs} 
\usepackage{soul}

\usepackage{url}
\usepackage{hyperref}
\newtheorem{theorem}{Theorem}

\newtheorem{proposition}{Proposition}

\usepackage{stfloats}
\usepackage{float}
\usepackage{graphicx}
\hyphenation{optical networks semi-conduc-tor}
\usepackage{cite}
\usepackage{xcolor}
\usepackage{subfigure}

\usepackage{cleveref}

\usepackage{threeparttable}

\renewcommand{\vec}[1]{\mathbf{#1}}

\makeatletter
\def\blfootnote{\xdef\@thefnmark{}\@footnotetext}
\makeatother

\usepackage{lipsum} 

\setlength{\floatsep}{1\baselineskip plus 0.2\baselineskip minus 0.5\baselineskip}

\begin{document}

\title{\huge{Secure Backscatter Communications Through RIS: \\Modeling and Performance
}} 
	\author{Masoud~Kaveh\IEEEmembership{}, Farshad Rostami Ghadi, \IEEEmembership{Member, IEEE}, Zhao Li, \IEEEmembership{Member, IEEE}, Zheng Yan, \IEEEmembership{Fellow, IEEE},\\ and  Riku Jäntti, \IEEEmembership{Senior Member, IEEE}
	}
	\maketitle
	\vspace{-0pt}
 \begin{abstract}
Backscatter communication (BC) has emerged as a pivotal wireless communication paradigm owing to its low-power and cost-effective characteristics. However, BC faces various challenges from its low signal detection rate to its security vulnerabilities. Recently, reconfigurable intelligent surfaces (RIS) have surfaced as a transformative technology addressing power and communication performance issues in BC. However, the potential of RIS in addressing the security challenges of BC remains uncharted. This paper investigates the secrecy performance of RIS-aided BC, where all channels are distributed according to the Fisher-Snedecor $\mathcal{F}$ distribution. Specifically, we consider a RIS with $N$ reflecting elements to help a backscatter device (BD) establish a smart environment and enhance the secrecy performance in BC. Due to the nature of BC systems, our analysis considers two possible scenarios (i) in the absence of direct links and (ii) in the presence of direct links. In both cases, we first derive compact analytical expressions of the probability density function (PDF) and cumulative distribution function (CDF) for the received signal-to-noise ratio (SNR) at both a legitimate receiver and an eavesdropper. Then, to analyze the secrecy performance, we further derive analytical expressions of the average secrecy capacity (ASC) and secrecy outage probability (SOP) for both mentioned scenarios. In addition, regarding the importance of system behavior in a high SNR regime, we provide an asymptotic analysis of the SOP and ASC. Eventually, the Monte-Carlo simulation is used to validate the analytical results, revealing that utilizing RIS can greatly improve the secrecy performance of the BC system relative to traditional BC setups that do not incorporate RIS.
	\end{abstract}
	\begin{IEEEkeywords}
		Backscatter communication, reconfigurable intelligent surfaces, Fisher-Snedecor $\mathcal{F}$ fading, physical layer security, secrecy outage probability, average secrecy capacity.
	\end{IEEEkeywords}
	\maketitle
 \blfootnote{\noindent  This work is supported in part by the Academy of Finland under Grants 345072 and 350464.}

	 	\blfootnote{\noindent M. Kaveh and R. Jäntti are with the Department of Information and Communication Engineering, Aalto University, Espoo, Finland. (e-mail: $\rm masoud.kaveh@aalto.fi, riku.jantti@aalto.fi$)}
	 	
	 	\blfootnote{\noindent  F. R. Ghadi is with the Department of Electronic and Electrical Engineering, University College London, WC1E
		6BT London, UK. (e-mail: $\rm f.rostamighadi@ucl.ac.uk$).}

   \blfootnote{\noindent Z. Li and Z. Yan are with the School of Cyber Engineering, Xidian University, Xi'an, China, (e-mail: $\rm zli@xidian.edu.cn, zyan@xidian.edu.cn$)}

	

	\section{Introduction}\label{introduction}
\IEEEPARstart{B}{ackscatter} communication (BC) is a wireless communication technique that enables low-power and low-data-rate devices to transmit data by modulating and reflecting existing radio frequency (RF) signals in the environment \cite{Srvy_AmBC_CompNetw2022}. This technology holds exceptional promise, particularly in the context of the Internet of Things (IoT), due to its inherent energy efficiency, cost-effectiveness, and simplicity \cite{BLE_INFOCOM2020}.
Backscatter devices (BDs) are designed to consume minimal energy as they do not need to generate their own signals; instead, they efficiently utilize the ambient RF signals already present in the environment. Therefore, BC allows the battery-less IoT devices operate for extended periods without requiring frequent battery replacements \cite{BC_Battery_SPM2018}.

While BC offers several advantages for IoT, there are some challenges that have yet to be addressed to ensure its reliable operation and widespread adoption. 
These challenges, inherently arising due to BC's low-power operation nature, unlicensed spectrum usage, and reliance on ambient signals, mainly include low signal detection rate, limited transmission range, low data rate and throughput, and concerns related to energy harvesting and consumption \cite{BC_Battery_CST2023}.  
Recently, reconfigurable intelligent surfaces (RIS) \cite{ref9} has shown promising potentials to improve the communication performance 
in BC systems 
\cite{RIS_BC_frontier_6G,RIS_BC_Survey_Proceeding}.
RIS is essentially a metasurface composed of a large number of passive elements,
which takes the advantage of meta-materials to dynamically control and shape the reflection and scattering of RF signals and improve the signal quality in wireless propagation environment \cite{RIS1,RIS2}. 
RIS has demonstrated remarkable efficacy in tackling significant challenges in BC including improving the channel conditions \cite{ref17,ref18}, transmission range \cite{ref20}, system throughput \cite{ref24,ref25,ref26}, energy efficiency \cite{ref29,ref31}, detection performance \cite{ref32,ref33,ref36}, and energy harvesting \cite{ref39,ref41} (see \ref{subsec_1_3}).
These enhancements been observed across various BC systems, including monostatic BC, bistatic BC, and ambient BC \cite{RIS_BC_frontier_6G}.
Moreover, RIS has shown to be harnessed in BC through different approaches i.e., serving as a helper element or a dedicated BD \cite{RIS_BC_Survey_Proceeding}.

Furthermore, BC signals are typically passive and operate in an open broadcast fashion, rendering them susceptible to interception and eavesdropping by malicious entities. As BDs often operate with stringent resource constraints, including limited processing power and memory, their ability to employ sophisticated security protocols and cryptographic primitives is further restricted. 
Therefore, physical layer security (PLS) stands as a compelling approach for establishing secure BC by offering lower complexity and better security features than cryptographic schemes \cite{ref43,ref45,sensors_BC_survey}.
In this regard, examining PLS performance is essential for establishing resilient and efficient secure communication within wireless networks. In this context, two critical PLS performance metrics, namely average secrecy capacity (ASC) and secrecy outage probability (SOP), have been assessed across various scenarios within BC through recent years \cite{Saad_BC_SOP, Zheng_MultiBC_SOP, ref57, Liu_BC_ASCSOP, PLS_AmBC_Muratkar, PLS_AmBC_NOMA, PLS_AmBC_ITS} (see \ref{subsec_1_1}).

\subsection{Research Gaps and Motivations}

While there have been several efforts to enhance PLS performance in BC during recent years \cite{Saad_BC_SOP, Zheng_MultiBC_SOP, ref57, Liu_BC_ASCSOP, PLS_AmBC_Muratkar, PLS_AmBC_NOMA, PLS_AmBC_ITS}, achieving optimal PLS performance in BC systems has been a challenging task for the related works, primarily due to constraints inherent in conventional BC paradigms, such as restricted signal strength, interference, limited channel knowledge, and resource constraints of BDs  \cite{sensors_BC_survey}.
Despite of RIS great enhancement to BC \cite{ref17,ref18,ref20,ref24,ref25,ref26,ref29,ref31,ref32,ref33,ref36,ref39,ref41}, it has been remained unexplored whether RIS can also enhance the PLS performance of BC systems.
Motivated by the suboptimal PLS performance in BC, and the notable benefits that RIS can offer to BC, we are propelled to leverage the great potential of RIS to improve the PLS performance in BC.
As an initial step in this process, developing analytical expressions for secrecy metrics enables a quantitative evaluation of how various system parameters influence PLS performance.
Specifically, our analysis focuses on assessing ASC and SOP as crucial PLS metrics, which to the best of our knowledge, have not been studied in previous works in RIS-aided BC. 

On the other hand, since the impact of multipath fading and shadowing on the received signal strength in BC has not yet been thoroughly explored, despite its substantial significance in evaluating the performance of BC \cite{ref25,ref56}, employing a flexible composite fading channel model, such as Fisher-Snedecor $\mathcal{F}$ \cite{Fisher_source1}, can provide a more precise assessment of secrecy performance.
Moreover, given the inherently weak direct links \footnote{In this context, \textit{direct link} refers to the links between BDs and either backscatter or eavesdropping receivers. As for the links between the RF source and BDs, it is referred to as the \textit{source link}.} in BC, it becomes imperative to encompass both potential scenarios in the secrecy performance evaluation process. 
This includes assessing the performance of RIS-aided BC systems with and without direct links. The inclusion of both scenarios allows for a comprehensive evaluation of BC's effectiveness and the potential enhancements offered by RIS technology \cite{Back_strngr_Direct_RIS}.
The proposed system and channel models have significant implications across a variety of real-world scenarios, e.g. smart homes, industrial applications, healthcare, smart city, and agriculture, where RIS-aided BC enhances connectivity, energy efficiency, monitoring resolution, and network coverage, for low-power IoT devices. The Fisher-Snedecor $\mathcal{F}$ fading channel can also accurately model indoor BC propagation with obstacles, the complex interactions between signals and the varied urban setting, and the shadowing from farm structures and natural terrain \cite{RIS5GCommSenseSec}.

\subsection{Contributions}

In this paper, we investigate the transformative potential of RIS in enhancing the performance of secure BC under Fisher-Snedecor $\mathcal{F}$ fading channels. Our work addresses critical gaps in the understanding of secure BC systems by encompassing a range of important aspects. Specifically, we offer comprehensive insights into the impact of RIS technology on BC's secrecy performance. The main contributions of this paper can be summarized as follows.

\textbullet \ For the first time in this paper, we assess the secrecy performance of the RIS-aided BC system, considering how different system parameters influence its effectiveness. In addition, due to the relatively weaker direct link in practical scenarios for BC, our analysis includes both possible cases, one with RIS-aided links alone and the other with a combination of RIS-aided and direct links.

\textbullet \ We employ the Fisher-Snedecor $\mathcal{F}$ distribution \cite{Fisher_source1} to model the fading channels, allowing us to accurately characterize the simultaneous occurrence of multi-path fading and shadowing in BC, and consequently, to reach a more accurate secrecy performance analysis compared to other fading distributions. 

\textbullet \ We derive compact analytical 
expressions for the probability density function (PDF) and cumulative distribution function (CDF) of the received SNR at the legitimate receiver and the eavesdropper in the RIS-aided BC system for both scenarios (i.e., with and without direct links) with assumption of a perfect channel state information (CSI) of BDs and imperfect CSI of eavesdroppers at the RIS.
Then, by using the derived PDFs and CDFs, we are able to derive accurate analytical expressions of the ASC and SOP based on the bivariate and multivariate Fox's H-function for analyzing the system's secrecy performance.

\textbullet \ We provide an asymptotic analysis of the obtained ASC and SOP by employing the residue method \cite{residal1}, for studying the RIS-aided BC system's behavior in high SNR regime and gaining insights into the fundamental performance trends and capabilities of BC systems under optimal conditions. 

\textbullet \ We validate the analytical results using Monte-Carlo simulation. 
Our findings indicate that RIS can significantly improve the PLS performance in BC systems across diverse system configurations.
Furthermore, utilizing the Fisher-Snedecor $\mathcal{F}$ distribution can accurately model channel and offer a more precise secrecy performance analysis in comparison to other fading distributions in BC.

\vspace{0 pt}

\subsection{Organization and Notations}

The rest of this paper is organized as follows. 
Section \ref{reWork} delves into the existing literature on the subject.
Section \ref{system-model} presents the RIS-aided BC system and channel models. Section \ref{SNR Distribution} demonstrates the SNR distributions at the legitimate receiver and eavesdropper. We analyze the secrecy performance of RIS-aided BC by deriving the compact analytical expressions of ASC and SOP in Section \ref{sop_section}. Section \ref{sec_asy} presents the asymptotic analysis of the secrecy metrics. The simulation results are discussed in Section \ref{num-results}, and finally a conclusion is drawn in the last section. 


\textit{Notations}: 
$\Gamma(.)$ is the complete Gamma function \cite[Eq. 8.31]{table_int}, $B(.,.)$ is the Beta function \cite[Eq. 8.38]{table_int}, $G^{m,n}_{p,q}(.)$ is the Meijer's G-function \cite[Eq. 8.2.1.1]{ref59}, $H^{m,n:m_1,n_1;...;m_r,n_r}_{p,q:p_1,q_1;...;p_r,q_r}(.)$ is the multivariate Fox's H-function \cite{ref61}, $j=\sqrt{-1}$, and $\vec{X}^\mathcal{T}$ is the transpose of $\vec{X}$.

\section{Related Works} \label{reWork}
In this section, we present a review of the literature concerning performance analysis frameworks for PLS in BC, strategies for enhancing PLS performance through RIS, and the utilization of RIS in BC systems.
Table \ref{table_1} shows the differences between previous studies and our research, highlighting the distinctive aspects of our analysis in this paper.

\begin{table} [h]
    \centering
    \begin{threeparttable}
    \caption{Comparison of Related Works: RIS-Aided BC Systems and PLS Performance Analysis in BC versus Our Work} 
    \label{table_1}
    {
    \begin{tabular}{c|ccccc}
    \hline \hline
      Works   & $I_1$ & $I_2$ & $I_3$ & $I_4$ & BC Performance Evaluation Metrics \\
    \hline
    [10], [11] & \checkmark & $\times$ & $\times$ & $\times$ & Channel Condition\\
    \hline
      [12] & \checkmark & $\times$ & $\times$ & $\times$ & Transmission Range \\
    \hline
        [13]–[15] & \checkmark & $\times$ & $\times$ & $\times$ & System Throughput \\
    \hline
        [16], [17] & \checkmark & $\times$ & $\times$ & $\times$ & Energy Efficiency \\
    \hline
        [18]–[20] & \checkmark & $\times$ & $\times$ & $\times$ & Bit Error Rate \\
    \hline
        [21], [22] & \checkmark & $\times$ & $\times$ & $\times$ & Energy Harvesting  \\
         \hline
        [26]–[28] & $\times$ & \checkmark & $\times$ & $\times$ & SOP \\
        \hline
        [29]–[32] & $\times$ & \checkmark & $\times$ & \checkmark & SOP, ASC  \\
        \hline
        [44] & $\times$ & \checkmark & $\times$ & \checkmark & SOP  \\
    \hline
        Ours & \checkmark & \checkmark & \checkmark & \checkmark & SOP, ASC \\
    \hline
    \end{tabular}
    } 
    \begin{tablenotes}
        \item $I_1$: RIS integration into BC, $I_2$: PLS performance evaluation, $I_3$: Considering effect of multipath fading and shadowing, $I_4$: Asymptotic analysis, \checkmark: Item is supported, $\times$: Item is not supported.
    \end{tablenotes}
\end{threeparttable}
\end{table}

\subsection{PLS Performance Analysis for BC Systems} \label{subsec_1_1}

In recent years, there has been a notable body of work dedicated to assessing the secrecy performance of BC systems \cite{Saad_BC_SOP, Zheng_MultiBC_SOP, ref57, Liu_BC_ASCSOP, PLS_AmBC_Muratkar, PLS_AmBC_NOMA, PLS_AmBC_ITS}.
In \cite{Saad_BC_SOP}, researchers conducted an examination of the performance of wireless backscatter systems, specifically focusing on the evaluation of SOP.
Additionally, \cite{Zheng_MultiBC_SOP} delved into the analysis of SOP within the context of a multi-tag BC system, considering the presence of an eavesdropper.
Furthermore, in \cite{ref57}, the authors improved SOP by introducing an optimal tag selection scheme for passive BC systems characterized by multiple tags and a single eavesdropper.
In \cite{Liu_BC_ASCSOP}, a novel tag selection scheme was proposed with the aim of enhancing both the ASC and SOP of a multi-tag self-powered BC system in scenarios involving an eavesdropper. The authors also derived an analytical expression for SOP to facilitate their analysis.

In \cite{PLS_AmBC_Muratkar}, the investigation centered around the influence of eavesdroppers and the motion of readers on the secrecy performance of ambient BC systems, particularly in cases where channel estimation is imperfect.
In \cite{PLS_AmBC_NOMA}, researchers introduced an overlay cognitive ambient BC non-orthogonal multiple access (NOMA) system tailored for intelligent transportation systems. Within this context, they scrutinized the secrecy performance of their proposed system model in the presence of an eavesdropping vehicle by deriving the SOP.
Additionally, in \cite{PLS_AmBC_ITS}, an exploration was conducted into secure multi-antenna transmission within ambient BC-based intelligent transportation systems. This investigation was carried out in the presence of a passive eavesdropper with jamming, where a cooperative jammer was strategically positioned within the system to disrupt the eavesdropper without affecting the reader. To assess the performance of this proposed scheme, a new closed-form expression for SOP was derived.
However, attaining an optimal PLS performance has proven to be a formidable endeavor in prior research. This difficulty can largely be ascribed to the inherent limitations of traditional BC systems \cite{sensors_BC_survey}.

\subsection{Enhancing PLS Performance Using RIS} 

Through recent years, RIS technology has been offering a promising avenue to enhance the PLS performance in various wireless communication systems. 
These include but are not limited to smart grid communications \cite{ref77}, vehicular networks \cite{RIS_PLS_1}, device-to-device communications \cite{RIS_PLS_2}, IoT networks \cite{RIS_PLS_3}, integrated satellite-vehicle networks \cite{RIS_PLS_4}, and networks involving unmanned aerial vehicles (UAVs) \cite{RIS_PLS_5}.
In addition, by assuming an ambient BC system, the authors in \cite{RIS_PLS_AmBC_Green} derived SOP for a RIS-aided set-up; however, they did not consider the source link (i.e., product channels) in their performance analysis. 
Tang et al. introduced an innovative RIS design aimed at enhancing PLS for RIS-aided NOMA networks, showcasing the potential of RIS to improve security measures dynamically \cite{RISPLS1}. Similarly, Zhang et al. explored the general benefits of RIS in enhancing PLS, providing robust strategies against eavesdroppers in diverse network conditions \cite{RISPLS2}.
Gu et al. addressed the challenge of uncertain eavesdropper locations by employing RIS in their security strategies, demonstrating the flexibility of RIS in adapting to varying security requirements and enhancing overall network resilience \cite{RISPLS3}. In a similar vein, Zhang et al. focused specifically on the integration of RIS into 6G networks using NOMA, illustrating significant improvements in secrecy performance due to the smart deployment of RIS elements \cite{RISPLS4}.
Extending the application of RIS, Wang et al. investigated the uplink secrecy performance of RIS-based radio frequency/free space optics (RF/FSO) three-dimensional heterogeneous networks, highlighting the versatility of RIS in different transmission mediums and its effectiveness in securing communications across multiple layers \cite{RISPLS5}.
Elhoushy et al. utilized RIS to limit information leakage in a cell-free massive multiple input multiple output (MIMO) setting, facing active eavesdroppers. This study particularly noted the efficacy of RIS in environments where traditional security measures fall short \cite{RIS_Leak3}.

\vspace{0pt}
\subsection{RIS-Aided BC Systems}
\label{subsec_1_3}

Quite recently, integrating RIS into BC has led to significant enhancements in different aspects.
The authors in \cite{ref17,ref18} demonstrated that RIS is able to address channel condition challenges observed in conventional BC systems. These challenges arise from various propagation effects, resulting in the arrival of multiple out-of-phase signals at the BD and reader, ultimately leading to a decline in system performance. However, through intelligent signal reflection, RIS effectively mitigates the adverse impacts of electromagnetic radiation and enhances the overall channel gain. 
RIS can also tackle the limited transmission range issue in BC systems by introducing efficient supplementary paths. This capability facilitates the widespread deployment of BC and prevents its utilization from being restricted to short-range applications \cite{ref20}.

Due to the shared transmission medium among BDs in BC, system throughput can be constrained by interference and collisions between different devices. To address this challenge, researchers in \cite{ref24,ref25,ref26} leveraged RIS in various BC scenarios, aiming to improve the quality of service (QoS) and weighted sum rate. This was achieved through optimization of the RIS phase-shift matrix and beamforming vectors at the transmitter.
In addition, RIS's passive reflections offer power gains, allowing RIS-aided BC to leverage this advantage for achieving higher performance gains while requiring less transmit power. As a result, RIS significantly enhances the energy efficiency of BC systems \cite{ref29,ref31}.
In BC systems, the reader's signal detection capability is often hindered by factors like direct-link interference. The authors in \cite{ref32,ref33,ref36} utilized the RIS capabilities to control signal direction, effectively mitigating direct-link interference and enhancing signal detection performance with reduced complexity compared to conventional schemes.
Furthermore, the authors in \cite{ref39,ref41} demonstrated that employing RIS for the coherent combination of reflected signals results in a substantial increase in the total received power within BC systems. This enhancement enables BDs to harvest energy from both direct and RIS-reflected signals, significantly boosting the total harvested energy and supporting long-term IoT network operation.

\vspace{-0pt}
	\section{System Model}\label{system-model}

Fig. \ref{fig:sysmodel} shows the system model of RIS-aided BC. In this model, we consider a BD (like a tag) as a semi-passive device which is powered up through a continuous wave carrier signal transmitted by a source. 
The tag aims to send its confidential messages to a legitimate receiver (like a reader) with the aid of an RIS that has \emph{N} reflecting elements.
We assume that there is also a passive eavesdropper named Eve that tries to decode the confidential message sent by the tag over both tag-to-reader and RIS-to-reader links. 
For simplicity and without loss of generality, since the RF source is sending unmodulated carriers, both reader and Eve can utilize cancellation methods to mitigate the impact of interference due to the source's link \cite{BLE_INFOCOM2020,Riku_DirPathInterf, Riku_IEEERFID1}, which is also a prevalent assumption in RIS-aided BC \cite{ref25,ref56}.
We assume that the CSI of cascade links are known by RIS, so RIS can adjust the phase shifting coefficients of its elements for maximizing the SNR at the reader \footnote{This process can be executed at the reader, which involves transmitting pilot symbols from BDs through different phase configurations of the RIS. This allows for a linear estimation of the cascaded channel by aggregating the received signals that correspond to each configuration. Then, the estimated channel coefficients are communicated back to the RIS controller \cite{BjornsonCSI}.}. 
We also assume that the tag, reader, and Eve are equipped with a single antenna for simplicity. Therefore, the received signal at the tag can be given as follows
\begin{figure}[t]
    \centering    \includegraphics[width=0.49\textwidth]{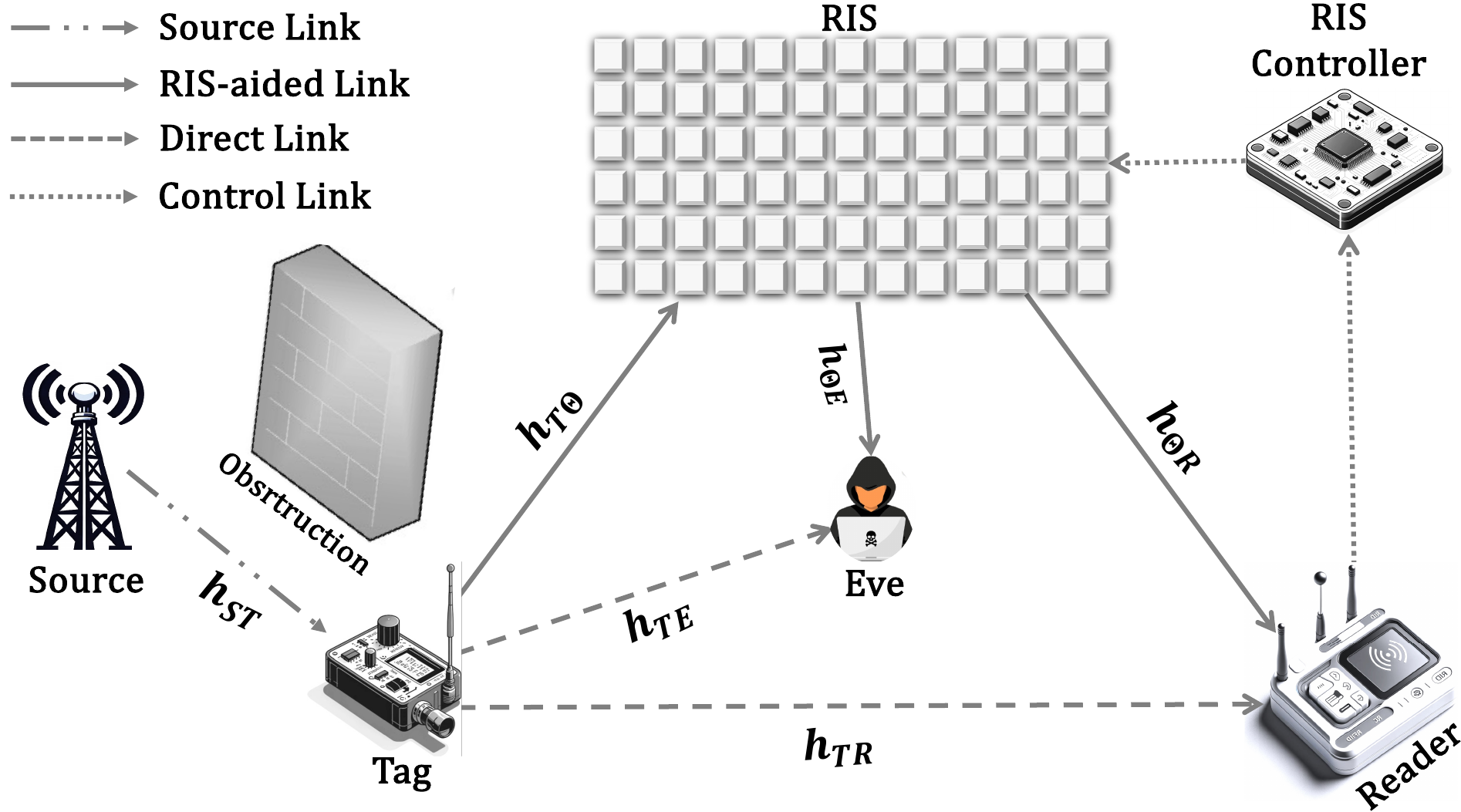}
    \caption{The system model of RIS-aided BC.}
    \label{fig:sysmodel}
\end{figure}
	\begin{align} 
y_T=\sqrt{P_s}h_{ST}+n_T,
\end{align} 
where $P_s$ represents the transmit power of the source, $h_{ST}$ is the channel coefficient between source and tag, and $n_T$ denotes the additive white Gaussian noise (AWGN) at the tag. Since the noise power caused by the tag's antenna is considerably smaller than the received signal from the source \cite{ref57}, we neglect it in the rest of this paper. 

In most practical
scenarios, the direct link between the BD and reader may not be feasible due to distance, obstructions, or channel conditions in BC.
Additionally, the use of low-power BDs further contributes to the limited signal strength and reduces the direct link's reliability. In such cases, RIS can act as the primary means of communication between the tag and reader \cite{Back_strngr_Direct_RIS}. In other cases where a reliable link exists between the BD and reader, RIS can serve as an enhancer, further optimizing signal strength and improving overall system capacity. Therefore, we divide our analysis into two scenarios: RIS-aided BC with and without direct links.
\subsection{Without Direct Links}
Assuming the direct links between tag-to-reader and tag-to-Eve is not feasible, the received signals at the reader and Eve can be expressed as
\begin{align} \label{eq-yr}
y_{R}=\sqrt{P_s}\left(h_{ST}\vec{H}_{T\Theta}\vec{\Theta}\vec{H}^{\mathcal{T}}_{\Theta R}\right)S(t)+n_R,
\end{align}
\begin{align} \label{eq-ye}
y_{E}=\sqrt{P_s}\left(h_{ST}\vec{H}_{T\Theta}\vec{\Theta}\vec{H}^{\mathcal{T}}_{\Theta E}\right)S(t)+n_E,
\end{align}
in which $S(t)$ is the information signal backscattered from the tag with a unit power, $n_R$ and $n_E$ represent the AWGN  at the reader and Eve with zero mean and variances $\sigma^2_R$ and $\sigma^2_E$, respectively, and $\vec{\Theta}$ is the adjustable phase matrix induced by the reflection of RIS elements, which is defined as $\vec{\Theta}=\text{diag}\left(\left[e^{j\theta_1}, e^{j\theta_2},...,e^{j\theta_N}\right]\right)$. 
The vectors $\vec{H}_{T\Theta}$, $\vec{H}_{\Theta R}$, and $\vec{H}_{\Theta E}$ contain the \emph{N} channel coefficients from the tag to
 the RIS, and from RIS to the reader and Eve, respectively.
The above channel vectors are given as $\vec{H}_{T\Theta}=d_{T\Theta}^{-\chi}.\left[h_{T{\Theta}_1}\mathrm{e}^{-j\alpha_1}, h_{T{\Theta}_2}\mathrm{e}^{-j\alpha_2},..., h_{T{\Theta}_N}\mathrm{e}^{-j\alpha_N}\right]$, $\vec{H}_{\Theta R}=d_{\Theta R}^{-\chi}.\left[h_{{\Theta R}_1}\mathrm{e}^{-j\beta_{1}}, h_{{\Theta R}_2}\mathrm{e}^{-j\beta_{2}},..., h_{{\Theta R}_N}\mathrm{e}^{-j\beta_{N}}\right]$, and $\vec{H}_{\Theta E}=d_{\Theta E}^{-\chi}.\left[h_{{\Theta E}_1}\mathrm{e}^{-j\epsilon_{1}}, h_{{\Theta E}_2}\mathrm{e}^{-j\epsilon_{2}},..., h_{{\Theta E}_N}\mathrm{e}^{-j\epsilon_{N}}\right]$, where $d_{T\Theta}$ denotes the distance between the tag and the RIS, $d_{\Theta R}$ is the distance between the RIS and the reader, and $d_{\Theta E}$ defines the distance between the RIS and Eve, respectively. 
The term $\chi$ indicates the path-loss exponent. 
Furthermore, the terms $h_{T{\Theta}_n}$, $h_{{\Theta R}_n}$, and $h_{{\Theta E}_n}$, for $n\in\left\{1,2,...,  N\right\}$, are the amplitudes of the corresponding channel coefficients, and $\mathrm{e}^{-j\alpha_n}$, $\mathrm{e}^{-j\beta_{n}}$, and $\mathrm{e}^{-j\epsilon_{n}}$ denote the phase of the respective links. 
In order to precisely capture the coexistence of multi-path fading and shadowing in BC and achieve a more precise evaluation of secrecy performance, we utilize the Fisher-Snedecor $\mathcal{F}$ distribution \cite{Fisher_source1} as a means to accurately model and characterize the system behavior in our analysis.

\subsection{With Direct Links}
Suppose the direct links between tag-to-reader and tag-to-Eve are existed, thereby, the received signals at the reader and Eve can be given by
\begin{align} \label{Reader_SNR}
y_{R}=\sqrt{P_s}S(t)\left(h_{ST}h_{TR}+h_{ST}\vec{H}_{T\Theta}\vec{\Theta}\vec{H}^{\mathcal{T}}_{\Theta R}\right)+n_R,
\end{align}
\begin{align} \label{Eve_SNR}
y_{E}=\sqrt{P_s}S(t)\left(h_{ST}h_{TE}+h_{ST}\vec{H}_{T\Theta}\vec{\Theta}\vec{H}^{\mathcal{T}}_{\Theta E}\right)+n_E,
\end{align}
where $h_{TR}$ and $h_{TE}$ denote the tag-to-reader and tag-to-Eve channel coefficients, respectively.

\section{SNR Distribution}\label{SNR Distribution}
In this section, an analysis is conducted on the SNR at the reader and Eve by considering both without and with direct link cases. The compact analytical expressions of PDF and CDF are then further derived based on the received SNR.

\subsection{Without Direct Links}

\subsubsection{Legitimate link} 
From \eqref{eq-yr}, the instantaneous
SNR at the reader can be determined as 
\begin{align} 
\gamma_R&=\frac{{\left|\sqrt{P_s}h_{ST}\vec{H}_{T\Theta}\vec{\Theta} \vec{H}_{\Theta R})\right|}^2}{n_R}  \\ 
&=\frac{P_s|h_{ST}|^2\left|\sum_{n=1}^{N}h_{T{\Theta}_n} h_{\Theta {R}_n} \mathrm{e}^{j\left(\theta_n-\alpha_n-\beta_{n}\right)}\right|^2}{{d}^\chi_{ST}{d}^\chi_{T\Theta}{d}^\chi_{\Theta R}{\sigma}^2_R}  \\ 
&\overset{(a)}{=}
\bar{\gamma}_{R}|h_{ST}|^2\left|\sum_{n=1}^{N}h_{T{\Theta}_n} h_{\Theta {R}_n}\right|^2 ,\label{gamma_Reader}
\end{align}
where $(a)$ 
is obtained by enabling ideal phase shifting for RIS \cite{ref41,ref50,ref_csiBC2}, and  $\bar{\gamma}_{R}$ is the average SNR at the reader due to the RIS-aided link.
 By defining \begin{math}X_1=|h_{ST}|^2\end{math} and \begin{math}Y_1=|\sum_{n=1}^{N}h_{T{\Theta}_n} h_{\Theta {R}_n}|^2\end{math}, where all the channels follow Fisher-Snedecor $\mathcal{F}$ fading model, we will have $f_{X_1}(x_1)$ \cite{Fisher_PHY} and $f_{Y_1}(y_1)$ \cite{Farshad_Fisher_RIS} as
 \begin{align} \label{f_{X_1}(x_1)}
 f_{X_1}(x_1)=\mathcal{C}G_{1,1}^{1,1}\left(\begin{array}{c}
				\lambda_{1} x_1\end{array}
			\Big\vert\begin{array}{c}
				-m_{S_{ST}}\\
				m_{ST}-1\\
			\end{array}\right),
 \end{align}
 \begin{align}  \label{f_{Y_1}(y_1)}
f_{Y_1}(y_1)= \frac{{y_1}^{\frac{c-1}{2}}\ \mathrm{e}^{-\frac{\sqrt{y_1}}{{\bar{y}_1}^{\frac{1}{d}}} }  } {2 \hspace{+2pt} {\bar{y_1}}^\frac{c+1}{2} \Gamma(c+1)\ d^{c+1}},
\end{align}
where $m_{ij}$ and $m_{S_{ij}}$ indicate the fading severity
parameter and  the amount of shadowing of the root-mean-square (rms) signal power
parameters, respectively, $\lambda_{1}=\frac{m_{ST}\sigma^2_T}{m_{S_{ST}}P_S}$, $\mathcal{C}=\frac{\lambda_1}{\Gamma\left(m_{ST}\right) \Gamma\left(m_{S_{ST}}\right)}$, 
$c=\frac{(N+1)B'^2 -A'C'}{A'C'-B'^2}$, $d=\frac{D'\left(A'C'-B'^2 \right)}{B'C'}$,  $A'=B\left(m_{\theta R} +1, m_{S_{\theta R}}-1 \right) B\left(m_{T\theta} +1, m_{S_{T\theta}}-1 \right)$, $B'=B\left(m_{\theta R} +\frac{1}{2}, m_{S_{\theta R}} -\frac{1}{2} \right) B\left(m_{T\theta} +\frac{1}{2}, m_{S_{T\theta}} -\frac{1}{2} \right) $, $C'=B\left(m_{\theta R}, m_{S_{\theta R}} \right) B\left(m_{T\theta}, m_{S_{T\theta}} \right) $, $D'=\sqrt{\frac{(m_{S_{\theta R}}-1)(m_{S_{T\theta}}-1) \Omega_{\theta R} \Omega_{T\theta}} {m_{\theta R} m_{T\theta}}}$, and $\Omega_{ij}$ is the mean power.

\begin{theorem} \label{theor_pdf_R}
Assuming all channels follow the Fisher-Snedecor $\mathcal{F}$ fading distribution, the PDF and CDF of $\gamma_R$ without direct links are given by
\begin{align} \label{f_{GR1}(GR1)}
 f_{\gamma_R}(\gamma_R)=\mathcal{G}G^{1,3}_{3,1}\left( \hspace{-6pt} \begin{array}{c}	\frac{4\lambda_{1}\gamma_R}{{\bar{y}_1}^{\frac{2}{d}}}   \end{array} \hspace{-2pt} 		\Big\vert \hspace{-2pt} \begin{array}{c} \frac{2-c}{2}, \frac{3-c}{2}, 2-m_{ST} \\ 1+m_{S_{ST}}\\ \end{array} \hspace{-6pt} \right),
 \end{align}
\begin{align} \label{F_{GR1}(GR1)}
 F_{\gamma_R}(\gamma_R)\hspace{-2pt}=\hspace{-2pt}\mathcal{G} \gamma_R G^{1,4}_{4,2}\left( \hspace{-6pt} \begin{array}{c}	\frac{4\lambda_{1}\gamma_R}{{\bar{y}_1}^{\frac{2}{d}}} \end{array} \hspace{-2pt} 		\Big\vert \hspace{-2pt} \begin{array}{c} 0, \frac{2-c}{2}, \frac{3-c}{2}, 2-m_{ST} \\ 1+m_{S_{ST}}, -1\\ \end{array} \hspace{-6pt} \right),
 \end{align}
where $\mathcal{G}=\frac{ 2^{\frac{2c-3}{2}}{\bar{y}_1}^\frac{c-1}{d} \mathcal{C}}{\sqrt{2\pi} \bar{\gamma}_R {\bar{y}_1}^\frac{c+1}{2} d^{c+1} \Gamma(c+1)}$ and $\bar{y}_1=\frac{P_S}{{d}^\chi_{T\Theta}{d}^\chi_{\Theta R}{\sigma}^2_R}$.
\end{theorem}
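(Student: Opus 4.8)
The plan is to obtain the PDF of $\gamma_R = \bar\gamma_R X_1 Y_1$ as the product of two independent random variables, and then integrate to get the CDF. First I would express the densities $f_{X_1}$ and $f_{Y_1}$ from \eqref{f_{X_1}(x_1)} and \eqref{f_{Y_1}(y_1)} in a common Meijer's $G$-function form. The density $f_{X_1}$ is already a $G^{1,1}_{1,1}$; the density $f_{Y_1}$ in \eqref{f_{Y_1}(y_1)} is of the shape $y_1^{(c-1)/2}\exp(-\sqrt{y_1}/\bar y_1^{1/d})$, so I would first perform the change of variable (or recognize the exponential-of-a-root structure) to write $\exp(-\sqrt{y_1}/\bar y_1^{1/d})$ as a $G^{1,0}_{0,1}$ argument. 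Using the Gauss multiplication / duplication formula for the Gamma function one can absorb the square root into the $G$-function argument, converting $f_{Y_1}$ into a single $G^{2,0}_{0,2}$ (or equivalently two-parameter) function of $y_1$ with argument proportional to $y_1$; this is where the constants $2^{(2c-3)/2}$, $\sqrt{2\pi}$ and the powers of $\bar y_1$ in $\mathcal G$ will be generated.

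Next I would use the standard Mellin-convolution result for the PDF of a product $Z = aUV$ of independent nonnegative variables, namely $f_Z(z) = \int_0^\infty \frac{1}{av} f_U\!\left(\frac{z}{av}\right) f_V(v)\,dv$, and invoke the integral identity for the product of two Meijer's $G$-functions \cite[Eq. 2.24.1.1]{ref59} (i.e., $\int_0^\infty x^{s-1} G^{\cdot,\cdot}_{\cdot,\cdot}(\omega x) G^{\cdot,\cdot}_{\cdot,\cdot}(\eta x)\,dx$), which collapses the convolution into a single higher-order $G$-function. Collecting the parameter lists from $f_{X_1}$ (contributing $2-m_{ST}$ on top and $1+m_{S_{ST}}$ on the bottom after the reflection rules for $G$) and from the transformed $f_{Y_1}$ (contributing $\tfrac{2-c}{2},\tfrac{3-c}{2}$ on top) yields the claimed $G^{1,3}_{3,1}$ with argument $4\lambda_1\gamma_R/\bar y_1^{2/d}$, and tracking all prefactors gives $\mathcal G$. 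For the CDF I would integrate $f_{\gamma_R}$ from $0$ to $\gamma_R$ using \cite[Eq. 2.24.2.1]{ref59} (the integral $\int_0^x t^{\mu-1} G(\omega t)\,dt$), which introduces one extra upper parameter $0$ and one extra lower parameter $-1$ and the linear factor $\gamma_R$ out front, producing the stated $G^{1,4}_{4,2}$.

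The main obstacle I anticipate is the bookkeeping in the $f_{Y_1}$ transformation step: correctly turning the $\exp(-\sqrt{y_1}/\text{const})$ term into a $G$-function of argument linear in $y_1$ requires a careful application of the Gamma multiplication theorem (here the duplication formula, since the root is a square root), and every stray factor of $2$, $\pi$, and fractional power of $\bar y_1$ must land exactly where $\mathcal G$ says it does. A secondary subtlety is verifying that the parameter conditions for the product-integral identity are met (convergence strips, no pole clashes), and checking that the relabelings of the $G$-function indices — in particular why $f_{X_1}$'s parameters $-m_{S_{ST}}$ and $m_{ST}-1$ appear as $2-m_{ST}$ and $1+m_{S_{ST}}$ after the convolution and argument rescaling — are handled by the standard $G^{m,n}_{p,q}(z) \leftrightarrow G^{n,m}_{q,p}(1/z)$ reflection together with the shift induced by the $1/v$ and $z/(av)$ substitutions. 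Once those are in hand, the remainder is routine simplification, and a sanity check via the Monte-Carlo curves in Section \ref{num-results} (or a moment check using the Mellin transform of the final $G$-function) confirms the constants.
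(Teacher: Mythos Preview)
Your proposal is correct and will produce the stated result, but the paper's Appendix~A takes a slightly different mechanical route. Instead of first rewriting $e^{-\sqrt{y_1}/\bar y_1^{1/d}}$ as a $G^{2,0}_{0,2}$ via the duplication formula and then invoking the $G\times G$ integral \cite[Eq.~2.24.1.1]{ref59}, the paper performs the substitution $u=\sqrt{y_1}$ so that the integrand becomes $u^{c-2}e^{-u/\bar y_1^{1/d}}G^{1,1}_{1,1}(\lambda_1 y\,u^2\mid\cdots)$ and then applies the exponential-times-$G$ identity \cite[Eq.~2.24.3.1]{ref59} directly. The two routes are equivalent (the duplication formula is exactly what underlies \cite[Eq.~2.24.3.1]{ref59} when the $G$-argument is quadratic in the integration variable), and both generate the same $2^{(2c-3)/2}/\sqrt{2\pi}$ and $\bar y_1$-power prefactors in $\mathcal G$; the paper's version just saves one $G$-function conversion step. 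For the CDF, the paper cites \cite[Eq.~1.16.2.1]{ref59} (the indefinite integral $\int_0^x G(\omega t)\,dt$), which is the identity you describe; your reference to \cite[Eq.~2.24.2.1]{ref59} is a Mellin-type $\int_0^\infty$ formula and not quite the right pointer, though the effect you state (extra upper parameter $0$, extra lower parameter $-1$, factor $\gamma_R$) is correct.
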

\begin{proof} The proof is elaborated in Appendix A.
\end{proof}

\subsubsection{Eavesdropper link} From \eqref{eq-ye}, the instantaneous
SNR at Eve can be determined as 
\begin{align} 
\gamma_E&=\frac{{\left|\sqrt{P_s}h_{ST}\vec{H}_{T\Theta}\vec{\Theta} \vec{H}_{\Theta E})\right|}^2}{n_E}  \\ 
&=\frac{P_s|h_{ST}|^2\left|\sum_{n=1}^{N}h_{T{\Theta}_n} h_{\Theta {E}_n} \mathrm{e}^{j\left(\theta_n-\alpha_n-\epsilon_{n}\right)}\right|^2}{{d}^\chi_{ST}{d}^\chi_{T\Theta}{d}^\chi_{\Theta E}{\sigma}^2_E}  \\ 
&=
\bar{\gamma}_{E}|h_{ST}|^2 \left|\sum_{n=1}^{N}h_{T{\Theta}_n} h_{\Theta {E}_n} \mathrm{e}^{j\left(\theta_n-\alpha_n-\epsilon_{n}\right)}\right|^2 , \label{gamma_Eve}
\end{align}
where $\bar{\gamma}_{E}$ is the average SNR at Eve due to the RIS-aided link. 
When the phase shifts of RIS elements are optimally designed based on the legitimate link's conditions, the resulting phase distributions for each of the Eve's links ($\vec{H}_{T\Theta}\vec{\Theta} \vec{H}^{\mathcal{T}}_{\Theta E}$) are uniformly distributed \cite{PHi_Error_Javier}.
If the phase shift errors are uniformly distributed within the range of [$-\pi$, $\pi$), indicating a complete lack of knowledge about the phases of the RIS, then the channel coefficient follows a circularly-symmetric complex normal distribution. This implies that the equivalent channel exhibits similarities to Rayleigh fading \cite{PHi_Error_1,SOP_Discrete_PHi}.
Therefore, by assuming $Y_2=\left|\sum_{n=1}^{N}h_{T{\Theta}_n} h_{\Theta {E}_n} \mathrm{e}^{j\left(\theta_n-\alpha_n-\epsilon_{n}\right)}\right|^2$, $f_{Y_2}(y_2)$ can be shown as 
\begin{align} \label{f_{Y_2}(y_2)}
f_{Y_2}(y_2)= \frac{1}{a} \  \mathrm{e}^{-\frac{y_2}{a}},  
\end{align}
where $a=\frac{N P_S}{{d}^\chi_{T\Theta}{d}^\chi_{\Theta E}{\sigma}^2_E}$. Now, by considering \eqref{f_{X_1}(x_1)} and \eqref{f_{Y_2}(y_2)}, the marginal distributions of $\gamma_E$ can be obtained as the following theorem. 

\begin{theorem} \label{theor_pdf_E}
Assuming all channels follow the Fisher-Snedecor $\mathcal{F}$ fading distribution, the PDF and CDF of $\gamma_E$ without direct links are given by
\begin{align} \label{f_{GE1}(GE1)}
 f_{\gamma_E}(\gamma_E)= \frac{\mathcal{C}}{a\bar{\gamma}_E} G^{1,2}_{2,1}\left( \hspace{-6pt} \begin{array}{c}	a\lambda_{1}  \gamma_E \end{array} \hspace{-2pt} 		\Big\vert \hspace{-2pt} \begin{array}{c} 1,  2-m_{ST} \\ 1+m_{S_{ST}}\\ \end{array} \hspace{-6pt} \right),
 \end{align}
\begin{align} \label{F_{GE1}(GE1)}
 F_{\gamma_E}(\gamma_E)= \frac{\mathcal{C}}{a\bar{\gamma}_E} \gamma_E G^{1,3}_{3,2}\left( \hspace{-6pt} \begin{array}{c}	a\lambda_{1}  \gamma_E \end{array} \hspace{-2pt} 		\Big\vert \hspace{-2pt} \begin{array}{c} 0, 1,  2-m_{ST} \\ 1+m_{S_{ST}} , -1\\ \end{array} \hspace{-6pt} \right).
 \end{align}    
\end{theorem}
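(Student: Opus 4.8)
\textbf{Proof proposal for Theorem~\ref{theor_pdf_E}.}

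The plan is to mirror the derivation used for Theorem~\ref{theor_pdf_R}, but with the exponential density \eqref{f_{Y_2}(y_2)} playing the role of the RIS-aided envelope instead of the half-Gamma-type density \eqref{f_{Y_1}(y_1)}. Since $\gamma_E = \bar{\gamma}_E X_1 Y_2$ is (up to the constant $\bar{\gamma}_E$) a product of two independent random variables, I would start from the standard product-PDF formula
\begin{align}
f_{\gamma_E}(\gamma_E) = \int_0^\infty \frac{1}{\bar{\gamma}_E\, y_2}\, f_{X_1}\!\left(\frac{\gamma_E}{\bar{\gamma}_E\, y_2}\right) f_{Y_2}(y_2)\, \mathrm{d}y_2 .
\end{align}
Substituting \eqref{f_{X_1}(x_1)} (the Meijer's $G$ form of the Fisher--Snedecor $\mathcal{F}$ density for $X_1$) and the exponential \eqref{f_{Y_2}(y_2)}, the inner integral becomes a Mellin-type convolution of an exponential against a Meijer's $G$-function.

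The key step is to evaluate $\int_0^\infty y_2^{-1}\, \mathrm{e}^{-y_2/a}\, G^{1,1}_{1,1}\!\big(\lambda_1 \gamma_E/(\bar{\gamma}_E y_2)\,\big|\cdots\big)\,\mathrm{d}y_2$. I would use the identity that $\mathrm{e}^{-t}$ is itself a Meijer's $G$-function, $\mathrm{e}^{-t}=G^{1,0}_{0,1}(t\,|\,_{\;0}^{\;-})$, together with the standard integral \cite[Eq.~2.24.1.1]{ref59} for the Mellin convolution of two $G$-functions; after the change of variable and collecting the argument, this produces a single $G^{1,2}_{2,1}$ with argument proportional to $a\lambda_1\gamma_E$. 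Matching the parameter rows carefully — the extra ``$1$'' in the upper row and the shift of the Fisher--Snedecor parameters by $1$ arising from the $1/y_2$ weight and the Jacobian — yields exactly \eqref{f_{GE1}(GE1)}, with the prefactor $\mathcal{C}/(a\bar{\gamma}_E)$ after absorbing the constants from $\mathcal{C}$, $1/a$, and $1/\bar{\gamma}_E$. For the CDF \eqref{F_{GE1}(GE1)}, I would integrate the PDF from $0$ to $\gamma_E$ and use the standard result that $\int_0^x t^{s-1} G^{m,n}_{p,q}(\omega t\,|\cdot)\,\mathrm{d}t$ is again a Meijer's $G$-function with one extra parameter in each of the upper and lower rows \cite[Eq.~7.811.2]{table_int}; here this appends a ``$0$'' to the top row and a ``$-1$'' to the bottom row and multiplies by $\gamma_E$, giving $G^{1,3}_{3,2}$ as stated.

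The main obstacle I anticipate is purely bookkeeping rather than conceptual: getting the argument scaling and all the parameter shifts consistent between the two representations of $\mathrm{e}^{-t}$ and the $G$-function reflection/scaling rules, so that the constant $\mathcal{G}$-analogue collapses cleanly to $\mathcal{C}/(a\bar{\gamma}_E)$ and the rows come out as $\{1,2-m_{ST}\}$ over $\{1+m_{S_{ST}}\}$. A secondary subtlety worth checking is the validity of interchanging the integral with the Mellin--Barnes contour defining $f_{X_1}$ (absolute convergence of the resulting double integral), which holds under the usual Fisher--Snedecor parameter constraints $m_{ST}>0$, $m_{S_{ST}}>1$; I would note this rather than belabor it. Everything else reduces to tabulated Meijer's $G$ integrals, so the argument should be short once the convolution is set up correctly.
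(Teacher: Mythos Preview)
Your proposal is correct and follows essentially the same route as the paper's Appendix~B: set up the product-PDF integral for $\gamma_E=\bar\gamma_E X_1 Y_2$, collapse it to a tabulated Meijer-$G$ integral, and then obtain the CDF by integrating the PDF. The only cosmetic difference is that the paper first applies the argument-inversion identity \cite[Eq.~8.2.1.14]{ref59} to turn $G^{1,1}_{1,1}\!\big(\lambda_1 x/y_2\,\big|\,\cdot\big)$ into $G^{1,1}_{1,1}\!\big(\lambda_1 x\,y_2\,\big|\,\cdot\big)$ and then invokes \cite[Eq.~2.24.3.1]{ref59}, whereas you propose to write $e^{-t}=G^{1,0}_{0,1}(t\,|\,0)$ and use the two-$G$ convolution \cite[Eq.~2.24.1.1]{ref59}; both paths yield the same $G^{1,2}_{2,1}$ and the same CDF.
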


\begin{proof} The proof is elaborated in Appendix B.
\end{proof}

\subsection{With Direct Links}
\subsubsection{Legitimate Link}
According to \eqref{Reader_SNR}, the instantaneous SNR at the reader can be determined as
\newcommand\myeq{\mathrel{\stackrel{\makebox[0pt]{\mbox{\normalfont\small def}}}{=}}}
\newcommand\myapprox{\mathrel{\stackrel{\makebox[0pt]{\mbox{\normalfont\footnotesize (a)}}}{\approx}}}
\begin{align} 
\gamma_R&=\frac{{\left|\sqrt{P_s}(h_{ST}h_{TR}+h_{ST}\vec{H}_{T\Theta}\vec{\Theta} \vec{H}_{\Theta R})\right|}^2}{n_k} \nonumber \\ 
\approx&\frac{P_s|h_{ST}|^2|h_{TR}|^2}{{d}^\chi_{ST}{d}^\chi_{TR}{\sigma}^2_R}
\hspace{-2pt}+\hspace{-2pt}
\frac{P_s|h_{ST}|^2\left|\sum_{n=1}^{N}h_{T{\Theta}_n} h_{\Theta {R}_n} \mathrm{e}^{j\left(\theta_n-\alpha_n-\beta_{n}\right)}\right|^2}{{d}^\chi_{ST}{d}^\chi_{T\Theta}{d}^\chi_{\Theta R}{\sigma}^2_R}  \\ 
\overset{(a)}{=}&\bar{\gamma}_{R_1}|h_{ST}|^2|h_{TR}|^2+
\bar{\gamma}_{R_2}|h_{ST}|^2\left|\sum_{n=1}^{N}h_{T{\Theta}_n} h_{\Theta {R}_n}\right|^2 ,\label{gamma_Reader_2}
\end{align}
where $\bar{\gamma}_{R_1}$ and $\bar{\gamma}_{R_2}$ are the average SNR at the reader due to the direct and the RIS-aided links, respectively.
By re-writing \eqref{gamma_Reader_2} as $\gamma_R=\gamma_{R_1}+\gamma_{R_2}$, $f_{\gamma_{R_1}}(\gamma_{R_1})$ can be given by \cite{pdf_Fisher_2}
\begin{align}
\scalebox{0.99}{$\displaystyle f_{\gamma_{R_1}}(\gamma_{R_1}) = \frac{\eta_1}{\gamma_{R_1}} G_{2,2}^{2,2}\left( \hspace{-6pt} \begin{array}{c}
	\frac{\delta_1\gamma_{R_1}}{\bar{\gamma}_{R_1}}\end{array} \hspace{-5pt}
\Bigg\vert \hspace{-4pt} \begin{array}{c}
	1-m_{s_{ST}},1-m_{s_{TR}}\\
	m_{ST},m_{TR}\\
\end{array} \hspace{-6pt} \right) $} ,\label{pdf_GR_1}
\end{align}
where $\eta_1=\frac{1}{\Gamma(m_{ST}) \Gamma(m_{S_{ST}}) \Gamma(m_{TR}) \Gamma(m_{S_{TR}})}$ and $\delta_1= \frac{m_{ST} m_{TR}}{(m_{S_{ST}}-1) (m_{S_{TR}}-1)}$. According to Thm. \ref{theor_pdf_R}, $f_{\gamma_{R_2}}(\gamma_{R_2})$ can be obtained as
\begin{align} \label{f_{GR2}(GR2)}
 \scalebox{0.95}{$\displaystyle f_{\gamma_{R_2}}(\gamma_{R_2})=\mathcal{G}G^{1,3}_{3,1}\left( \hspace{-6pt} \begin{array}{c}	4\lambda_{1} {\bar{y}_1}^{-\frac{2}{d}} \gamma_{R_2} \end{array} \hspace{-2pt} 		\Big\vert \hspace{-2pt} \begin{array}{c} \frac{2-c}{2}, \frac{3-c}{2}, 2-m_{ST} \\ m_{S_{ST}}+1\\ \end{array} \hspace{-6pt} \right) $}.
 \end{align}
Now, since $\gamma_R=\gamma_{R_1}+\gamma_{R_2}$, we exploit the Moment-Generating function (MGF) of \begin{math}\gamma_{R_1}\end{math} and \begin{math}\gamma_{R_2}\end{math} to obtain the PDF and CDF of \begin{math}\gamma_{R}\end{math} as 
\begin{align}  \label{pdf-laplace}
 f_{\gamma_{R}}(\gamma_{R})= \mathscr{L}^{-1}\left\{M_{\gamma_{R_1}}(s)\ M_{\gamma_{R_2}}(s)\right\},
\end{align}
\begin{align}  \label{cdf-laplace}
 F_{\gamma_{R}}(\gamma_{R})= \mathscr{L}^{-1}\left\{\frac{1}{s}\ M_{\gamma_{R_1}}(s)\ M_{\gamma_{R_2}}(s)\right\},
\end{align}
where \begin{math}\mathscr{L}^{-1}\end{math} shows the Laplace inverse transform and \begin{math}M_\gamma(t)=M_\gamma(-s)\end{math} denotes the MGF of \begin{math}\gamma\end{math}. 

\vspace{0pt}
\begin{theorem} \label{theor_R_2}
Assuming all channels follow Fisher-Snedecor $\mathcal{F}$ fading distribution, the PDF and CDF of $\gamma_R$ with direct links can be obtained as \eqref{pdf_SNR_Reader} and \eqref{cdf_SNR_Reader}, respectively. 
\end{theorem}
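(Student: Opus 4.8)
The plan is to compute the PDF and CDF of $\gamma_R = \gamma_{R_1} + \gamma_{R_2}$ via the MGF (Laplace) route already set up in \eqref{pdf-laplace}--\eqref{cdf-laplace}. First I would obtain the MGF of each summand from its PDF. For $\gamma_{R_1}$, whose PDF in \eqref{pdf_GR_1} is a single Meijer's $G$-function of argument proportional to $\gamma_{R_1}$, I would use the standard Laplace transform of $x^{\rho-1}G^{m,n}_{p,q}(\omega x|\cdots)$, which yields another $G$-function (with an extra upper parameter coming from the $\Gamma$ produced by integrating against $e^{-sx}$); concretely $M_{\gamma_{R_1}}(s) = \eta_1\, G^{2,3}_{3,2}\!\big(\tfrac{\delta_1}{\bar\gamma_{R_1}s}\big|\cdots\big)$ or an equivalent form in $1/s$. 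Similarly for $\gamma_{R_2}$, starting from \eqref{f_{GR2}(GR2)}, the Laplace transform produces $M_{\gamma_{R_2}}(s)$ as a $G$-function in $1/s$ as well. The key structural point is that both MGFs are single Mellin--Barnes integrals in the variable $s$.

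Next I would form the product $M_{\gamma_{R_1}}(s)M_{\gamma_{R_2}}(s)$ (and, for the CDF, multiply by $1/s$), and then invert the Laplace transform. Since each MGF is a Mellin--Barnes contour integral, the product is a double Mellin--Barnes integral in two complex variables, and the inverse Laplace transform of $s^{-\mu}$ type kernels evaluated termwise turns the $s$-dependence back into powers of $\gamma_R$ times $\Gamma$-factors. Collecting the two remaining contour integrals and identifying the resulting double Mellin--Barnes representation with the definition of the bivariate Fox's $H$-function gives \eqref{pdf_SNR_Reader} and \eqref{cdf_SNR_Reader}; the CDF picks up an extra $\Gamma$/pole from the $1/s$ factor, shifting one parameter row exactly as in the passage from \eqref{f_{GR1}(GR1)} to \eqref{F_{GR1}(GR1)}. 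I would carry the normalizing constants ($\eta_1$, $\delta_1$, $\mathcal{G}$, $\lambda_1$, $\bar y_1$, $\bar\gamma_{R_1}$, $\bar\gamma_{R_2}$) through symbolically and present them as a prefactor.

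I expect the main obstacle to be the bookkeeping in the two-fold inverse Laplace / Mellin--Barnes manipulation: one must justify interchanging the inverse Laplace integral with the two Mellin--Barnes contour integrals (absolute convergence of the contours, choice of the vertical lines so all $\Gamma$-arguments have positive real part, and the decay needed for Fubini), and then match the combined kernel precisely to the argument/parameter lists of the bivariate Fox's $H$-function — in particular getting the coefficients multiplying the two integration variables in each $\Gamma$ right (these become the "$c_i$, $d_j$" row vectors of the $H$-function). A secondary subtlety is that \eqref{gamma_Reader_2} uses an approximation (step $(a)$, neglecting the cross term between the direct and RIS-reflected paths), so the statement is "approximate equality"; I would note that the same approximation underlies \eqref{pdf_GR_1}--\eqref{f_{GR2}(GR2)} and simply propagate it. Once the double-contour form is assembled, recognizing it as the bivariate $H$-function is mechanical, so I would relegate the full computation to an appendix and state only the closed forms \eqref{pdf_SNR_Reader} and \eqref{cdf_SNR_Reader} here.
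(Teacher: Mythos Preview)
Your proposal is correct and matches the paper's approach essentially step for step: the paper computes $M_{\gamma_{R_1}}$ and $M_{\gamma_{R_2}}$ as Meijer's $G$-functions in $1/s$ via \cite[Eq.~2.24.3.1]{ref59}, writes each as a Mellin--Barnes contour integral, evaluates the remaining Bromwich integral $\oint e^{s\gamma_R}s^{-1-\zeta_1-\zeta_2}\,\mathrm{d}s$ via Hankel's formula for $1/\Gamma$, and then identifies the resulting double contour integral with the bivariate Fox's $H$-function (the CDF differing only by the extra $1/s$ factor). Your remarks on the approximation in \eqref{gamma_Reader_2} and on the Fubini/contour-choice bookkeeping are apt; the paper simply proceeds formally on both points.
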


\begin{proof}
    The proof is elaborated in Appendix C.
\end{proof}

\begin{figure*}[t]
			\normalsize
	\setcounter{equation}{24}
   \begin{align}  \label{pdf_SNR_Reader}
\scalebox{0.915}{$\displaystyle
f_{\gamma_{R}}(\gamma_{R}) \hspace{-2pt} = \hspace{-2pt} \mathcal{G} \hspace{1pt} \eta_1   {\gamma_{R}}^{-2}  H^{0,0: 2,3; 1,4}_{1,0: 3,2; 4,1} \hspace{-3pt} \left( \hspace{-8pt} \begin{array}{c}
\frac{\delta_1}{\bar{\gamma}_{R_1} \gamma_{R}} \\ 
\frac{4\lambda_1 {\bar{y}_1}^{-\frac{2}{d}}}{ \gamma_{R}}
    \end{array} \hspace{-6pt}
\Bigg\vert \hspace{-6pt} \begin{array}{c}
(-1;1,1):(1,1), (1-m_{S_{ST}}, 1), (1-m_{S_{TR}}, 1); (0,1), 
    (\frac{2-c}{2},1), (\frac{3-c}{2},1), (2-m_{ST}, 1)\\
			––––:
   (m_{ST},1), (m_{TR},1);
   (1+m_{S_{ST}}, 1)\\
\end{array} \hspace{-8pt} \right)$}.
			\end{align}
			\hrulefill
      \vspace{-5pt}

		\end{figure*}
  \setlength{\intextsep}{1pt plus 2pt minus 2pt}
\begin{figure*}[t]
			\normalsize
\setcounter{equation}{25}
\begin{align}  \label{cdf_SNR_Reader}
\scalebox{0.915}{$\displaystyle
F_{\gamma_{R}}(\gamma_{R}) \hspace{-2pt} = \hspace{-2pt} \mathcal{G} \hspace{1pt} \eta_1   {\gamma_{R}}^{-3}  H^{0,0: 2,3; 1,4}_{1,0: 3,2; 4,1} \hspace{-3pt} \left( \hspace{-8pt} \begin{array}{c}
\frac{\delta_1}{\bar{\gamma}_{R_1} \gamma_{R}} \\ 
\frac{4\lambda_1 {\bar{y}_1}^{-\frac{2}{d}}}{ \gamma_{R}}
    \end{array} \hspace{-6pt}
\Bigg\vert \hspace{-6pt} \begin{array}{c}
(-2;1,1):(1,1), (1-m_{S_{ST}}, 1), (1-m_{S_{TR}}, 1); (0,1), 
    (\frac{2-c}{2},1), (\frac{3-c}{2},1), (2-m_{ST}, 1)\\
			––––:
   (m_{ST},1), (m_{TR},1);
   (1+m_{S_{ST}}, 1)\\
\end{array} \hspace{-8pt} \right)$}.
			\end{align}
			\hrulefill
	\vspace{-5pt}		
		\end{figure*}

\subsubsection{Eavesdropper link}
According to \eqref{Eve_SNR}, the instantaneous SNR at Eve can be determined as
\begin{align} 
\gamma_E&=\frac{{\left|\sqrt{P_s}(h_{ST}h_{TE}+h_{ST}\vec{H}_{T\Theta}\vec{\Theta} \vec{H}_{\Theta E})\right|}^2}{n_k} \nonumber \\ 
 =&\scalebox{0.91}{$\displaystyle\frac{P_s|h_{ST}|^2|h_{TE}|^2}{{d}^\chi_{ST}{d}^\chi_{TE}{\sigma}^2_R} \hspace{-2pt} + \hspace{-2pt} \frac{P_s|h_{ST}|^2 \hspace{-2pt} \left|\sum_{n=1}^{N} \hspace{-2pt} h_{T{\Theta}_n} \hspace{-2pt} h_{\Theta {E}_n} \hspace{-2pt} \mathrm{e}^{j \hspace{-1pt} \left( \hspace{-1pt} \theta_n \hspace{-1pt} -\alpha_n \hspace{-1pt} -\epsilon_{n} \hspace{-1pt} \right)}\hspace{-1pt} \right|^2}{{d}^\chi_{ST}{d}^\chi_{T\Theta}{d}^\chi_{\Theta R}{\sigma}^2_E} $} 
 \\ 
=& \scalebox{0.89}{$\displaystyle \bar{\gamma}_{E_1}|h_{ST}|^2|h_{TR}|^2 \hspace{-2pt}+ \hspace{-2pt}\bar{\gamma}_{E_2}|h_{ST}|^2 \hspace{-3pt} \left|\sum_{n=1}^{N} \hspace{-2pt} h_{T{\Theta}_n} \hspace{-2pt} h_{\Theta {R}_n} \mathrm{e}^{j \hspace{-1pt} \left( \hspace{-1pt} \theta_n \hspace{-1 pt} - \hspace{-1pt} \alpha_n \hspace{-1pt} - \hspace{-2pt} \epsilon_{n} \hspace{-1pt} \right)} \hspace{-2pt} \right|^2 $},  \label{gamma_Eve_2}
\end{align}
where $\bar{\gamma}_{E_1}$ and $\bar{\gamma}_{E_2}$ are the average SNR at Eve due to the direct and the RIS-aided links, respectively. 
By re-writing \eqref{gamma_Eve_2} as $\gamma_E=\gamma_{E_1}+\gamma_{E_2}$, $f_{\gamma_{E_1}}(\gamma_{E_1})$ can be shown as follows \cite{pdf_Fisher_2}.
\begin{align}
\scalebox{0.99}{$\displaystyle f_{\gamma_{E_1}}(\gamma_{E_1}) = \frac{\eta_2}{\gamma_{E_1}} G_{2,2}^{2,2}\left( \hspace{-6pt} \begin{array}{c}
\frac{\delta_2\gamma_{E_1}}{\bar{\gamma}_{E_1}}\end{array} \hspace{-5pt}
\Bigg\vert \hspace{-4pt} \begin{array}{c}
	1-m_{s_{ST}},1-m_{s_{TE}}\\
	m_{ST},m_{TE}\\
\end{array} \hspace{-6pt} \right) $},\label{pdf_GE_1}
\end{align}
where $\eta_2=\frac{1}{\Gamma(m_{ST}) \Gamma(m_{S_{ST}}) \Gamma(m_{TE}) \Gamma(m_{S_{TE}})}$ and $\delta_2= \frac{m_{ST} m_{TE}}{(m_{S_{ST}}-1) (m_{S_{TE}}-1)}$.
As mentioned before, since Eve is a passive eavesdropper and the knowledge about its CSI is imperfect, 
and the equivalent channel reflected by RIS exhibits similarities to Rayleigh fading \cite{PHi_Error_Javier,PHi_Error_1,SOP_Discrete_PHi}. Thus, we can use the same analysis provided in Thm. \ref{theor_pdf_E} to obtain $f_{\gamma_{E_2}}(\gamma_{E_2})$  as
\begin{align} \label{f_{GE2}(GE2)}
 f_{\gamma_{E_2}}(\gamma_{E_2})= \frac{\mathcal{C}}{a\bar{\gamma}_{E_2}} G^{1,2}_{2,1}\left( \hspace{-6pt} \begin{array}{c}	a\lambda_{1}  \gamma_{E_2} \end{array} \hspace{-2pt} 		\Big\vert \hspace{-2pt} \begin{array}{c} 1,  2-m_{ST} \\ 1+m_{S_{ST}}\\ \end{array} \hspace{-6pt} \right).
 \end{align}
Now, by using the MGF of \begin{math}\gamma_{E_1}\end{math} and \begin{math}\gamma_{E_2}\end{math} and considering \eqref{pdf-laplace} and \eqref{cdf-laplace}, we can derive $f_{\gamma_{E}}(\gamma_{E})$ as the following theorem. 

\begin{theorem} \label{theor_E_2}
Assuming all channels follow the Fisher-Snedecor $\mathcal{F}$ fading distribution, the PDF and CDF of $\gamma_E$ with direct links can be obtained as \eqref{pdf_SNR_Eve_2} and \eqref{cdf_SNR_Eve_2}, respectively. 
\end{theorem}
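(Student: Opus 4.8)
### Proof Proposal for Theorem \ref{theor_E_2}

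The plan is to mirror exactly the argument used for Theorem \ref{theor_R_2}, since the eavesdropper link with direct links has the identical structure $\gamma_E = \gamma_{E_1} + \gamma_{E_2}$, where $\gamma_{E_1}$ is a product of two Fisher-Snedecor $\mathcal{F}$ variables (direct path, PDF given in \eqref{pdf_GE_1}) and $\gamma_{E_2}$ is the RIS-reflected contribution whose equivalent channel reduces to the Rayleigh-like form (PDF given in \eqref{f_{GE2}(GE2)}). First I would compute the MGF of each summand by applying the standard Mellin–Barnes / Laplace integral of a Meijer's $G$-function against $e^{-st}$; this converts each $G$-function PDF into an $s$-domain expression that is again a $G$-function (or a ratio of Gamma functions times a power of $s$). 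Concretely, $M_{\gamma_{E_1}}(s)$ follows from the known Laplace transform of $t^{-1}G_{2,2}^{2,2}(\delta_2 t/\bar\gamma_{E_1})$, and $M_{\gamma_{E_2}}(s)$ from the Laplace transform of $G_{2,1}^{1,2}(a\lambda_1 t)$, using \cite[Eq. 8.2.1.1]{ref59} or the corresponding table entry.

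Next I would form the product $M_{\gamma_{E_1}}(s)\,M_{\gamma_{E_2}}(s)$ (and, for the CDF, additionally divide by $s$), then invert the Laplace transform via \eqref{pdf-laplace} and \eqref{cdf-laplace}. Since the product of two Mellin–Barnes integrals in a shared variable $s$, when inverted, yields a double Mellin–Barnes contour integral, the result naturally collapses into a bivariate Fox's $H$-function of the type $H^{0,0:2,3;1,4}_{1,0:3,2;4,1}(\cdot,\cdot)$ with arguments $\delta_2/(\bar\gamma_{E_1}\gamma_E)$ and $4\lambda_1\bar y_1^{-2/d}/\gamma_E$ — precisely the pattern already seen in \eqref{pdf_SNR_Reader} and \eqref{cdf_SNR_Reader}, with the parameter substitutions $m_{TR}\!\to\! m_{TE}$, $m_{S_{TR}}\!\to\! m_{S_{TE}}$, $\eta_1\!\to\!\eta_2$, $\delta_1\!\to\!\delta_2$, $\bar\gamma_{R_1}\!\to\!\bar\gamma_{E_1}$. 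I would then identify the constant prefactor $\mathcal{G}\,\eta_2\,\gamma_E^{-2}$ (respectively $\mathcal{G}\,\eta_2\,\gamma_E^{-3}$ for the CDF) and the shift of the leading parameter pair $(-1;1,1)\!\to\!(-2;1,1)$ that accounts for the extra $1/s$ factor, exactly as in the legitimate-link case.

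The main obstacle I anticipate is bookkeeping rather than conceptual: correctly tracking the powers of $s$ and $\gamma_E$ through the Laplace inversion so that the exponent of $\gamma_E^{-1}$ in the prefactor and the constant term in the first group of Fox's $H$ parameters come out consistent, and verifying that the contour integrals converge (i.e., the parameter strips for $m_{ST}, m_{S_{ST}}, m_{TE}, m_{S_{TE}}, c, d$ are nonempty, which holds under the usual Fisher-Snedecor constraint $m_{S_{ij}}>1$). A secondary point to be careful about is justifying the interchange of the inverse Laplace integral with the two Mellin–Barnes contours; this is routine given absolute convergence on the strips, and I would simply cite the standard conditions for the multivariate Fox's $H$-function \cite{ref61}. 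Since all of this is structurally identical to Appendix C, the cleanest write-up is to state the MGFs, substitute into \eqref{pdf-laplace}–\eqref{cdf-laplace}, and invoke the same Fox's $H$ representation used there, pointing the reader to Appendix C for the detailed contour manipulations.
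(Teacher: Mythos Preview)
Your overall strategy---compute the two MGFs, multiply, and invert via \eqref{pdf-laplace}--\eqref{cdf-laplace} exactly as in Appendix~C---is precisely what the paper does in Appendix~D, so the approach is right. However, your anticipated final form is incorrect, and the error is more than bookkeeping: you claim the result is ``precisely the pattern already seen in \eqref{pdf_SNR_Reader} and \eqref{cdf_SNR_Reader}'' up to the substitutions $m_{TR}\to m_{TE}$, $\eta_1\to\eta_2$, etc., with second argument $4\lambda_1\bar y_1^{-2/d}/\gamma_E$ and prefactor $\mathcal{G}\,\eta_2$. That would only be true if $\gamma_{E_2}$ had the same distribution as $\gamma_{R_2}$, but it does not. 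You yourself correctly note that $f_{\gamma_{E_2}}$ is the $G^{1,2}_{2,1}(a\lambda_1 t)$ of \eqref{f_{GE2}(GE2)} (Rayleigh-type, because the RIS phases are not matched to Eve), whereas $f_{\gamma_{R_2}}$ is a $G^{1,3}_{3,1}$. This structural difference propagates: the second block of the bivariate Fox $H$-function in \eqref{pdf_SNR_Eve_2} is of type $;1,3\,/\,;3,1$ (not $;1,4\,/\,;4,1$), the second argument is $a\lambda_1/\gamma_E$ (not $4\lambda_1\bar y_1^{-2/d}/\gamma_E$), and the prefactor is $\tfrac{\eta_2\,\mathcal{C}}{a\,\bar\gamma_{E_2}}$ (not $\mathcal{G}\,\eta_2$). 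The parameter list in the second block also loses the $c$-dependent entries $(\tfrac{2-c}{2},1),(\tfrac{3-c}{2},1)$ and gains $(2,1),(1,1)$ instead.

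So when you carry out the Laplace inversion, do not shortcut by pattern-matching to \eqref{pdf_SNR_Reader}; instead compute $M_{\gamma_{E_2}}(s)$ from \eqref{f_{GE2}(GE2)} explicitly (it becomes a $G^{1,3}_{3,1}(a\lambda_1/s)$ times $s^{-1}$, cf.\ the paper's \eqref{M_Gamma_E2}), open both $G$-functions into their Mellin--Barnes contours, evaluate the $s$-integral via the same $\Gamma$-identity used for $I_1$, and then read off the bivariate $H$ with the correct orders. The rest of your outline---the extra $1/s$ for the CDF shifting $(-1;1,1)\to(-2;1,1)$, and the convergence remarks---is fine.
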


\vspace{-0pt}
\begin{proof} The proof is elaborated in Appendix D.
\end{proof}

\begin{figure*}[t]
			\normalsize
	\setcounter{equation}{30}
   \begin{align}  \label{pdf_SNR_Eve_2}
f_{\gamma_{E}}(\gamma_{E}) \hspace{-2pt} = \hspace{-2pt}  \frac{\eta_2 \hspace{1 pt} \mathcal{C}}{\bar{\gamma}_{E_2}a}   {\gamma_{E}}^{-2}  H^{0,0: 2,3; 1,3}_{1,0: 3,2; 3,1} \hspace{-3pt} \left( \hspace{-8pt} \begin{array}{c}
\frac{\delta_2}{\bar{\gamma}_{E_1} \gamma_{E}} \\ 
\frac{a\lambda_1 }{ \gamma_{E}}
    \end{array} \hspace{-6pt}
\Bigg\vert \hspace{-6pt} \begin{array}{c}
(-1;1,1):(1,1), (1-m_{S_{ST}}, 1), (1-m_{S_{TE}}, 1); (2,1), 
    (1,1),  (2-m_{ST}, 1)\\
			––––:
   (m_{ST},1), (m_{TE},1);
   (1+m_{S_{ST}}, 1)\\
\end{array} \hspace{-8pt} \right).
			\end{align}
			\hrulefill
\vspace{-13pt}		\end{figure*}
  \setlength{\intextsep}{1pt plus 2pt minus 2pt}
\begin{figure*}[t]
			\normalsize
	\setcounter{equation}{31}
   \begin{align}  \label{cdf_SNR_Eve_2}
F_{\gamma_{E}}(\gamma_{E}) \hspace{-2pt} = \hspace{-2pt}  \frac{\eta_2 \hspace{1 pt} \mathcal{C}}{\bar{\gamma}_{E_2}a}   {\gamma_{E}}^{-3}  H^{0,0: 2,3; 1,3}_{1,0: 3,2; 3,1} \hspace{-3pt} \left( \hspace{-8pt} \begin{array}{c}
\frac{\delta_2}{\bar{\gamma}_{E_1} \gamma_{E}} \\ 
\frac{a\lambda_1 }{ \gamma_{E}}
    \end{array} \hspace{-6pt}
\Bigg\vert \hspace{-6pt} \begin{array}{c}
(-2;1,1):(1,1), (1-m_{S_{ST}}, 1), (1-m_{S_{TE}}, 1); (2,1), 
    (1,1),  (2-m_{ST}, 1)\\
			––––:
   (m_{ST},1), (m_{TE},1);
   (1+m_{S_{ST}}, 1)\\
\end{array} \hspace{-8pt} \right).
			\end{align}
			\hrulefill	
\vspace{-13pt}		\end{figure*}


\vspace{-0pt}
\section{Secrecy Performance Analysis}\label{sop_section}
In this section, we derive the analytical expressions of ASC and SOP for RIS-aided BC, exploiting the distributions obtained in the previous section.
 Secrecy capacity (SC) refers to the highest possible transmission rate at which information can be sent over the BC channels while ensuring that the transmitted information remains confidential. Thus, SC can be expressed as
\begin{align} \label{SC1}
C_s(\gamma_R, \gamma_E)= {\Big[C_R-C_E\Big]}^+,
\end{align}
where $C_R=\log_2\left(1+\gamma_R\right)$ and $C_E=\log_2\left(1+\gamma_E\right)$ denote the wireless channel capacity between the tag and reader, and the tag and Eve, respectively. 

\subsection{Without Direct Links}
\subsubsection{ASC Analysis}
Since the SNRs are random variables, and thus, the SC is a random variable, we need to apply an expectation over the achievable secrecy rate to evaluate the system performance. 
ASC represents the average value of the SC across various potential channel conditions and serves as a crucial metric in assessing the PLS performance.
Referring to \eqref{SC1}, when considering a complex AWGN wiretap channel, the SC is defined as the difference between the main channel and the eavesdropper channel, specifically when Eve's channel experiences more noise than the main channel. By utilizing this definition and derived PDFs and CDFs in Thms. \ref{theor_pdf_R} and \ref{theor_pdf_E}, we can express the mathematical expression of the ASC as
\begin{align} \label{ASC_1}
\bar{C}_s\overset{\Delta}{=} \hspace{-4 pt} \int_{0}^{\infty} \hspace{-3 pt} \int_{0}^{\infty} \hspace{-2 pt} C_s(\gamma_R,\gamma_E) f_{\gamma_R}(\gamma_R)f_{\gamma_E}(\gamma_E)  \mathrm{d}{\gamma_R} \mathrm{d}{\gamma_E}.
\end{align}
Therefore, the ASC for the considered RIS-aided BC is derived in the following theorem.
\begin{theorem}  \label{theor_ASC_1}
The ASC for the considered RIS-aided BC system without direct links under Fisher-Snedecor $\mathcal{F}$ fading channels is given by \eqref{ASC1}, where $\mathcal{A}=\frac{\mathcal{G}\mathcal{C}}{a\bar{\gamma}_Eln(2)}$.
\end{theorem}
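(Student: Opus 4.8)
The plan is to evaluate the double integral in \eqref{ASC_1} by first reducing it to a single-integral form and then recognizing the result as a multivariate Fox's H-function. The standard trick for ASC is to rewrite $\bar{C}_s = \mathbb{E}[\log_2(1+\gamma_R)\mathbf{1}_{\gamma_R>\gamma_E}] - \mathbb{E}[\log_2(1+\gamma_E)\mathbf{1}_{\gamma_R>\gamma_E}]$, and then integrate by parts (or use the well-known identity $\bar{C}_s = \frac{1}{\ln 2}\int_0^\infty \frac{F_{\gamma_E}(\gamma)}{1+\gamma}\big(1-F_{\gamma_R}(\gamma)\big)\,\mathrm{d}\gamma$ together with the companion term) so that only the CDFs $F_{\gamma_R}$ and $F_{\gamma_E}$ from Theorems \ref{theor_pdf_R} and \ref{theor_pdf_E} enter. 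Concretely, I would use the decomposition $\bar{C}_s = \frac{1}{\ln 2}\int_0^\infty \frac{1-F_{\gamma_R}(\gamma)}{1+\gamma}\,\mathrm{d}\gamma - \frac{1}{\ln 2}\int_0^\infty \frac{1-F_{\gamma_E}(\gamma)}{1+\gamma}\,\mathrm{d}\gamma$ is \emph{not} quite right for ASC; instead the correct reduction is $\bar{C}_s = \frac{1}{\ln2}\int_0^\infty \frac{F_{\gamma_E}(\gamma)}{1+\gamma}\,\mathrm{d}\gamma - \frac{1}{\ln2}\int_0^\infty \frac{F_{\gamma_R}(\gamma)F_{\gamma_E}(\gamma)}{1+\gamma}\,\mathrm{d}\gamma + (\text{a term with }F_{\gamma_R}\text{ alone})$, so I would carefully track the three resulting single integrals, each of the generic form $\int_0^\infty \gamma^{s}(1+\gamma)^{-1} G(\,\cdot\,\gamma)\,\mathrm{d}\gamma$ or a product of two Meijer-$G$'s against $(1+\gamma)^{-1}$.

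Next, each single integral is evaluated by expressing the Meijer $G$-functions via their Mellin--Barnes contour-integral definitions, writing $(1+\gamma)^{-1} = \frac{1}{\Gamma(1)}G^{1,1}_{1,1}\!\big(\gamma \,\big|\, {0 \atop 0}\big)$, and then applying the standard Mellin-convolution formula \cite[Eq. 2.3]{ref61} (or \cite[Eq. 2.25.1]{ref59}) that turns $\int_0^\infty \gamma^{\rho-1} G(a\gamma) G(b\gamma)\,\mathrm{d}\gamma$ into a single Mellin--Barnes integral in two complex variables — i.e., a bivariate Fox $H$-function — and the triple product (three $G$'s, from $F_{\gamma_R}$, $F_{\gamma_E}$, and $(1+\gamma)^{-1}$) into a trivariate one. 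Collecting the three pieces, pulling out the common constant $\mathcal{A} = \frac{\mathcal{G}\mathcal{C}}{a\bar\gamma_E\ln 2}$ (which is exactly the product of the two PDF/CDF prefactors divided by $\ln 2$), and bookkeeping the shifted parameter lists $\{0,\frac{2-c}{2},\frac{3-c}{2},2-m_{ST}\}$, $\{1+m_{S_{ST}},-1\}$ for the reader and $\{0,1,2-m_{ST}\}$, $\{1+m_{S_{ST}},-1\}$ for Eve, yields the claimed expression \eqref{ASC1} in terms of multivariate Fox $H$-functions.

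The main obstacle I anticipate is \textbf{convergence and contour placement}: the Mellin--Barnes representations of $F_{\gamma_R}$ and $F_{\gamma_E}$ already contain the factor $\gamma$ (these CDFs grow like $\gamma^{1}$ near infinity before the $G$-function decay kicks in), so the integrand $\frac{F_{\gamma_E}(\gamma)F_{\gamma_R}(\gamma)}{1+\gamma}$ must be checked to be integrable at both $0$ and $\infty$, which constrains the real parts of the Mellin variables and hence the admissible contours in the resulting multivariate $H$-function — the parameter conditions $m_{S_{ST}}>1$, $m_{ST}>0$, etc., will be needed here. A secondary nuisance is getting the $\log_2(1+\gamma)$ weight handled cleanly: rather than integrating $\log(1+\gamma)$ directly against a product of $G$'s (which introduces an extra derivative-of-Gamma/digamma complication), the integration-by-parts step that moves the log onto $\frac{1}{1+\gamma}$ is essential, and one must verify the boundary terms vanish (they do, since $F_{\gamma_R}(\gamma)-F_{\gamma_R}(\gamma)F_{\gamma_E}(\gamma)\to 0$ fast enough at $\infty$ and the log vanishes at $0$). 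Once those technical points are settled, the remainder is routine Mellin-transform algebra, deferred to Appendix E.
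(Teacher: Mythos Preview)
Your overall strategy---reduce to single integrals, write each factor as a Mellin--Barnes integral, and collapse the result into a multivariate Fox $H$-function---matches the paper's, but the specific reduction you propose is a genuine detour compared to what the paper actually does. The paper uses the classical three-term split
\[
\bar C_s \;=\; \underbrace{\int_0^\infty \log_2(1+\gamma)\,f_{\gamma_R}(\gamma)F_{\gamma_E}(\gamma)\,\mathrm{d}\gamma}_{J_1}
\;+\;\underbrace{\int_0^\infty \log_2(1+\gamma)\,f_{\gamma_E}(\gamma)F_{\gamma_R}(\gamma)\,\mathrm{d}\gamma}_{J_2}
\;-\;\underbrace{\int_0^\infty \log_2(1+\gamma)\,f_{\gamma_E}(\gamma)\,\mathrm{d}\gamma}_{J_3},
\]
and then---this is the point you are missing---represents the logarithm \emph{itself} as a Meijer $G$-function via \cite[Eq.~8.4.6.5]{ref59}, namely $\ln(1+x)=G^{1,2}_{2,2}\!\big(x\,\big|\,{}^{1,1}_{1,0}\big)$. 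Each $J_i$ is then a product of two or three $G$-functions integrated against a power of $\gamma$; one $G$ is unfolded into its contour integral and the remaining $\gamma$-integral is closed by \cite[Eq.~2.24.1.1]{ref59}, yielding the bivariate $H$-functions in \eqref{ASC1} directly. No integration by parts, no boundary terms, no $(1+\gamma)^{-1}$ weight.

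Your belief that ``integrating $\log(1+\gamma)$ directly against a product of $G$'s introduces a digamma complication'' is the misconception driving you toward the more awkward route. It does not: because $\ln(1+x)$ already sits inside the Meijer $G$ class, the log contributes only ordinary Gamma factors to the Mellin--Barnes kernel, not $\psi$-functions. Your integration-by-parts alternative can be made to work, but it forces you to differentiate the product $f_{\gamma_R}F_{\gamma_E}$ (or juggle several CDF identities, as your own hesitation between formulas shows), and you then have to verify boundary behaviour that the paper's method never touches. The paper's path is shorter and the resulting parameter bookkeeping maps transparently onto the three terms of \eqref{ASC1}; I would recommend adopting it.
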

\begin{figure*}[t]
			\normalsize
	\setcounter{equation}{34}
   \begin{align}  \label{ASC1}
\bar{C}_s \hspace{0pt} &= \hspace{0pt}  \mathcal{A}  H^{2,1: 1,3; 1,3}_{2,2: 3,2; 3,2} \hspace{-1pt} \left( \hspace{-4pt} \begin{array}{c} \vspace{3pt}
a\lambda_1 \\ 
\frac{4\lambda_1 }{{\bar{y}_1}^\frac{2}{d}}
    \end{array} \hspace{-3pt}
\Bigg\vert \hspace{-4pt} \begin{array}{c}
(-2;1,1),(-1;1,1) :(0,1), (1,1),(2-m_{ST}, 1); (\frac{2-c}{2},1),(\frac{3-c}{2},1),(2-m_{ST}, 1)  \\
			(-2;1,1),(-2;1,1):
   (1+m_{S_{ST}}, 1), (-1,1);
   (1+m_{S_{ST}}, 1)\\
\end{array} \hspace{-5pt} \right)  \nonumber
\\  &+   
\hspace{0pt}  \mathcal{A}  H^{2,1: 1,4; 1,2}_{2,2: 4,2; 2,1} \hspace{-1pt} \left( \hspace{-4pt} \begin{array}{c} \vspace{3pt}
\frac{4\lambda_1 }{{\bar{y}_1}^\frac{2}{d}} \\ a\lambda_1 
    \end{array} \hspace{-3pt}
\Bigg\vert \hspace{-4pt} \begin{array}{c}
(-2;1,1),(-1;1,1) :(0,1), (\frac{2-c}{2},1),(\frac{3-c}{2},1),(2-m_{ST}, 1) ; (1,1),(2-m_{ST}, 1) \\
			(-2;1,1),(-2;1,1):
   (1+m_{S_{ST}}, 1), (-1,1);
   (1+m_{S_{ST}}, 1)\\
\end{array} \hspace{-5pt} \right)  \nonumber 
\\  &-  
\frac{\mathcal{A}}{\mathcal{G}}G^{3,3}_{4,3}\left( \hspace{-6pt} \begin{array}{c}	a\lambda_{1}  \end{array} \hspace{-2pt} 		\Big\vert \hspace{-2pt} \begin{array}{c} 1,  2-m_{ST}, -1, 0 \\ 1+m_{S_{ST}}, -1, -1 \\ \end{array} \hspace{-6pt} \right).
			\end{align}
			\hrulefill
\vspace{-13pt}
		\end{figure*}

\begin{proof}
The proof is elaborated in Appendix E.
\end{proof}

\subsubsection{SOP Analysis}
SOP is an important analytical measure used to assess the performance of PLS, which quantifies the probability that SC falls below a specific positive secrecy rate threshold, say $R_s>0$, i.e.,
\begin{align}
    P_{\mathrm{sop}}=\Pr\left(C_s\le R_s\right).
\end{align}
Now, by inserting \eqref{SC1} into SOP definition we have
\begin{align} 
P_\mathrm{sop}&=\Pr\left(\ln\left(\frac{1+\gamma_R}{1+\gamma_E}\right)\le R_s\right) 
\\ \label{SOP2}
&=\int_{0}^{\infty}F_{\gamma_R}(\gamma_t)f_{\gamma_E}(\gamma_E) d_{\gamma_E},
\end{align}
in which $\gamma_t=(1+\gamma_E) \mathrm{e}^{R_s}-1=\gamma_E \mathrm{e}^{R_s}+\mathrm{e}^{R_s}-1=\gamma_E R_t+{R'}_t$ is the SNR threshold.

\begin{theorem} \label{theor_SOP_1}
The SOP for the considered RIS-aided BC system without direct links under Fisher-Snedecor $\mathcal{F}$ fading channels is given by \eqref{SOP1}.
\end{theorem}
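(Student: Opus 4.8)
The plan is to evaluate the single integral \eqref{SOP2} in closed form by inserting the CDF of $\gamma_R$ from Theorem~\ref{theor_pdf_R}, evaluated at the threshold $\gamma_t=\gamma_E R_t+{R'}_t$, together with the PDF of $\gamma_E$ from Theorem~\ref{theor_pdf_E}. After this substitution, $P_\mathrm{sop}$ becomes a constant multiple of $\int_0^\infty (\gamma_E R_t+{R'}_t)\,G^{1,4}_{4,2}\!\big(\kappa(\gamma_E R_t+{R'}_t)\,\big\vert\,\cdots\big)\,G^{1,2}_{2,1}\!\big(a\lambda_1\gamma_E\,\big\vert\,\cdots\big)\,\mathrm{d}\gamma_E$, where $\kappa=4\lambda_1/{\bar{y}_1}^{2/d}$. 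The obstacle is immediately visible: the argument of the first Meijer-$G$ is the \emph{affine} expression $\gamma_E R_t+{R'}_t$, not a monomial in $\gamma_E$, so the textbook formula for the Mellin transform of a product of two Meijer-$G$ functions cannot be applied as it stands.

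To remove this obstacle, I would first write the first Meijer-$G$ as its Mellin--Barnes contour integral \cite[Eq.~8.2.1.1]{ref59}, which introduces a contour variable $s$ and leaves the factor $(\gamma_E R_t+{R'}_t)^{1+s}$ (the extra $+1$ comes from the pre-multiplying $\gamma_t$). I would then linearise the binomial by writing $(\gamma_E R_t+{R'}_t)^{1+s}={R'}_t^{\,1+s}\big(1+\tfrac{R_t}{{R'}_t}\gamma_E\big)^{1+s}$ and invoking the Mellin--Barnes representation of $(1+u)^{\sigma}$ as a contour integral of Gamma functions (equivalently, the Mellin--Barnes form of the binomial series ${}_1F_0$), which brings in a second contour variable $w$, a factor $\gamma_E^{\,w}$, and Gamma factors coupling $s$ and $w$. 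Now the $\gamma_E$-dependence of the integrand has collapsed to a single power $\gamma_E^{\,w}$ multiplying the Meijer-$G$ coming from $f_{\gamma_E}$, so the inner $\gamma_E$-integral is an ordinary Mellin transform of a Meijer-$G$ and evaluates in closed form to a ratio of Gamma functions \cite{ref59}. Interchanging the $\gamma_E$-integration with the two Bromwich integrals is justified by absolute convergence once the contours are suitably indented so that the left and right pole sequences stay separated; the parameter conditions that already make the densities in Theorems~\ref{theor_pdf_R}--\ref{theor_pdf_E} valid (for instance $m_{S_{ST}}>1$ and $c>0$) provide exactly this separation.

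What then remains is a double Mellin--Barnes integral in $(s,w)$ whose integrand is a quotient of Gamma functions in $s$, $w$, and a linear combination of $s$ and $w$, multiplied by $\kappa^{s}$, powers of ${R'}_t$, and powers of $a\lambda_1$; by the definition of the bivariate Fox $H$-function \cite{ref61}, this is exactly such a function, with arguments built from $\kappa$, $R_t$, ${R'}_t$ and $a\lambda_1$, and reading off its two parameter lists from the Gamma structure yields the claimed expression \eqref{SOP1}. I would close by recording the region of validity ($R_s>0$, positive average SNRs, and the fading-parameter constraints inherited from the component distributions). The main difficulty of the whole argument is precisely the affine argument $\gamma_E R_t+{R'}_t$ inside the Meijer-$G$: it is what forces the additional Mellin--Barnes layer, and hence the bivariate Fox $H$-function, and it requires care both in placing the contours so that every pole sequence lies on the correct side and in verifying that the resulting nested contour integrals converge and may be interchanged.
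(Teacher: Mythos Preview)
Your proposal is correct and lands on the same bivariate Fox $H$-representation as the paper, but the mechanics differ slightly at the step where the affine argument $\gamma_E R_t+{R'}_t$ is handled. The paper opens \emph{both} Meijer $G$-functions as Mellin--Barnes integrals (contour variables $\zeta_1,\zeta_2$), leaving the elementary integral $I_4=\int_0^\infty(R_t\gamma_E+{R'}_t)^{1+\zeta_1}\gamma_E^{\zeta_2}\,\mathrm{d}\gamma_E$, which is then evaluated in one stroke via the Beta-type identity \cite[Eq.~3.194.3]{table_int}; this produces $\Gamma(1+\zeta_2)\Gamma(-2-\zeta_1-\zeta_2)/\Gamma(-1-\zeta_1)$ together with the requisite powers of $R_t,{R'}_t$, and the bivariate $H$-function is read off after the sign change $\zeta_i\mapsto-\zeta_i$. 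Your route instead opens only the first $G$, inserts a second Mellin--Barnes layer for $(1+u)^{1+s}$, and then Mellin-transforms the remaining $G^{1,2}_{2,1}$. The two executions are equivalent---your binomial MB representation combined with the Mellin transform of $G^{1,2}_{2,1}$ reproduces exactly the Gamma structure the paper gets from the Beta integral---but the paper's path is one step shorter and avoids the convergence bookkeeping for the ${}_1F_0$ contour when the exponent $1+s$ has positive real part.
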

\begin{figure*}[t]
			\normalsize
	\setcounter{equation}{38}
   \begin{align}  \label{SOP1}
&P_\mathrm{sop}\hspace{0pt}=  \hspace{0pt}  \frac{\mathcal{G} R'^2_t\mathcal{C}}{a\bar{\gamma}_{E_2}R_t}     H^{0,1: 4,1; 3,1}_{1,0: 2,5; 1,3} \hspace{-2pt} \left( \hspace{-6pt} \begin{array}{c}
\frac{{\bar{y}_1}^\frac{2}{d}}{4\lambda_1 R'_t} \\ 
\frac{R_t}{ a\lambda_1 R'_t}
    \end{array} \hspace{-4pt}
\Bigg\vert \hspace{-4pt} \begin{array}{c}
(3;1,1):(1,1), (-m_{S_{ST}}, 1), (2,1); (-m_{S_{ST}}, 1) \\
			––––: (1,1),(\frac{c}{2},1), (\frac{c-1}{2},1),
   (m_{ST},1), (2,1); (0,1), 
   (m_{ST}-1,1), (1,1)\\
\end{array} \hspace{-4pt} \right).
			\end{align}
			\hrulefill
\vspace{-5pt}
		\end{figure*}
\begin{proof} The proof is elaborated in Appendix F.
\end{proof}

\subsection{With Direct Links}

\subsubsection{ASC Analysis} By utilizing the ASC definition in \eqref{SC1} and the obtained PDFs and CDFs in Thms. \ref{theor_R_2} and \ref{theor_E_2}, 
the ASC for RIS-aided BC with considering the direct links can be derived as the following theorem. 

\begin{theorem} \label{theor_ASC2}
    The ASC for RIS-aided BC with direct links under Fisher-Snedecor $\mathcal{F}$ fading channels is given by \eqref{ASC02}, where $\mathcal{A'}=\frac{\eta_1\eta_2\mathcal{G}\mathcal{C}}{a\bar{\gamma}_{E_2}ln(2)}$,   $\lambda_1= (-4;1,1,1,1),(-3;1,1,1,1)$, $\lambda_2= (1,1), (1-m_{S_{ST}}, 1), (1-m_{S_{TR},} 1)$, $\lambda_3= (0,1), (\frac{2-c}{2}, 1), (\frac{3-c}{2}, 1),   (2-m_{ST}, 1)$, $\lambda_4= (0,1), (1-m_{S_{ST}, 1}), (1-m_{S_{TE}, 1})$, $\lambda_5= (0,1), (1,1), (2-m_{ST}, 1)$, $\Pi_1= (-3;1,1,1,1),(-3;1,1,1,1)$, $\Pi_2= (m_{ST},1),(m_{TR},1)$, $\Pi_3= (1+m_{S_{ST}} , 1)$, $\Pi_4= (m_{ST},1), (m_{TE},1)$, $\Pi_5= (1+m_{S_{ST}} , 1)$, $\lambda'_1= (-3;1,1,1,1),(-3;1,1,1,1),(-3;1,1,1,1)$, $\Pi'_1= (-4;1,1,1,1)$, $\mathcal{A''}=\frac{\eta_2\mathcal{C}}{a\bar{\gamma}_{E_2}ln(2)}$, $\tau_1=(-1;1,1), (-1;1,1),(0;1,1)$, $\tau_2=(1,1), (1-m_{S_{ST}}, 1), (1-m_{S_{TE}}, 1)$, $ \tau_3=(2,1), (1,1),  (2-m_{ST}, 1)$, $\nu_1=(0;1,1),(0;1,1)$, $\nu_2=(m_{ST}, 1), (m_{TE}, 1)$, and $\nu_3= (1+m_{S_{ST}}, 1)$.
\end{theorem}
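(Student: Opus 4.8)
The plan is to obtain \eqref{ASC02} by the same three-integral route used for the no-direct-link case in Appendix~E, but feeding in the bivariate Fox's $H$ statistics of Theorems~\ref{theor_R_2} and~\ref{theor_E_2} in place of the Meijer-$G$ ones, which turns each of the three terms into a higher-order multivariate Fox's $H$ function. Starting from \eqref{ASC_1} with $C_s(\gamma_R,\gamma_E)=\frac{1}{\ln 2}\big[\ln(1+\gamma_R)-\ln(1+\gamma_E)\big]^{+}$, I would use that $\gamma_R>\gamma_E\Leftrightarrow\ln(1+\gamma_R)>\ln(1+\gamma_E)$, restrict the double integral to $\gamma_R>\gamma_E$, and integrate out one variable in each piece to get $\bar{C}_s\ln 2=\mathcal{J}_1+\mathcal{J}_2-\mathcal{J}_3$, where $\mathcal{J}_1=\int_0^\infty\ln(1+\gamma)\,f_{\gamma_R}(\gamma)\,F_{\gamma_E}(\gamma)\,\mathrm{d}\gamma$, $\mathcal{J}_2=\int_0^\infty\ln(1+\gamma)\,f_{\gamma_E}(\gamma)\,F_{\gamma_R}(\gamma)\,\mathrm{d}\gamma$, and $\mathcal{J}_3=\int_0^\infty\ln(1+\gamma)\,f_{\gamma_E}(\gamma)\,\mathrm{d}\gamma$; this is the exact analogue of the three terms in \eqref{ASC1}, with the PDFs/CDFs now taken from \eqref{pdf_SNR_Reader}, \eqref{cdf_SNR_Reader}, \eqref{pdf_SNR_Eve_2} and \eqref{cdf_SNR_Eve_2}.

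For $\mathcal{J}_3$ I would substitute $f_{\gamma_E}$ from \eqref{pdf_SNR_Eve_2}, replace the bivariate Fox's $H$ by its defining two-fold Mellin--Barnes contour integral, interchange the order of integration (legitimate once all contours are placed in a common strip of analyticity), and evaluate the inner $\gamma$-integral with the elementary Mellin identity $\int_0^\infty\gamma^{\rho-1}\ln(1+\gamma)\,\mathrm{d}\gamma=-\Gamma(\rho)\Gamma(-\rho)$, valid for $-1<\Re(\rho)<0$, where $\rho$ is the linear combination of the two contour variables produced by the Fox's $H$ arguments together with the $\gamma^{-2}$ prefactor. The two surviving contour integrals, together with the freshly generated Gamma factors, then reassemble into a bivariate Fox's $H$; collecting constants gives the $\mathcal{A''}$-term of \eqref{ASC02} with parameter blocks $\tau_1,\tau_2,\tau_3$ and $\nu_1,\nu_2,\nu_3$.

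$\mathcal{J}_1$ and $\mathcal{J}_2$ are handled identically, except that each integrand now carries two bivariate Fox's $H$ factors: $f_{\gamma_R}$ from \eqref{pdf_SNR_Reader} with $F_{\gamma_E}$ from \eqref{cdf_SNR_Eve_2} for $\mathcal{J}_1$, and $f_{\gamma_E}$ from \eqref{pdf_SNR_Eve_2} with $F_{\gamma_R}$ from \eqref{cdf_SNR_Reader} for $\mathcal{J}_2$. Expanding both kernels as Mellin--Barnes integrals, interchanging the integration order, and applying the same $\gamma$-integral leaves four free Mellin variables whose integrand --- the two original Fox's $H$ kernels multiplied by the $\Gamma(\rho)\Gamma(-\rho)$ factor --- is, by definition, a four-fold multivariate Fox's $H$. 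This yields the two $\mathcal{A'}$-terms of \eqref{ASC02}, the first carrying the parameter blocks $\lambda_1,\dots,\lambda_5$, $\Pi_1,\dots,\Pi_5$ and the second $\lambda'_1,\Pi'_1$ together with the blocks it shares with the first. Summing $\mathcal{J}_1+\mathcal{J}_2-\mathcal{J}_3$ and absorbing $1/\ln 2$ into $\mathcal{A'}$ and $\mathcal{A''}$ reproduces \eqref{ASC02}.

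The main obstacle is the bookkeeping rather than any new idea: (i) the contours of every Mellin--Barnes integral, and the strip $-1<\Re(\rho)<0$ imposed by the $\ln(1+\gamma)$ integral, must be chosen mutually consistently so that Fubini applies and all contour integrals converge; and (ii) the Gamma factors produced by the $\gamma$-integration --- stacked on top of those already hidden inside the Fox's $H$ statistics of Theorems~\ref{theor_R_2}--\ref{theor_E_2}, which themselves came from Laplace inversions --- must be tracked with care so that they line up into the numerator/denominator parameter vectors of a \emph{legitimate} multivariate Fox's $H$, whose existence conditions should be verified. Once the contours are fixed, the remaining work is the routine re-expression of the nested $G$'s and $H$'s as Mellin--Barnes integrals and the relabelling of indices, with the high-SNR simplifications deferred to Section~\ref{sec_asy}.
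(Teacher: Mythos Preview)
Your proposal is correct and follows essentially the same route as the paper: the three-term decomposition $\mathcal{J}_1+\mathcal{J}_2-\mathcal{J}_3$, opening the bivariate Fox's $H$ statistics of Theorems~\ref{theor_R_2}--\ref{theor_E_2} into their Mellin--Barnes contours, performing the inner $\gamma$-integral against $\ln(1+\gamma)$, and reassembling into multivariate Fox's $H$ functions. The only cosmetic difference is that the paper handles the $\gamma$-integral by first writing $\ln(1+\gamma)=G^{1,2}_{2,2}\!\left(\gamma\,\big|\,{1,1\atop 1,0}\right)$ and invoking the table Mellin transform \cite[Eq.~2.24.2.1]{ref59}, whereas you use the equivalent closed form $\int_0^\infty\gamma^{\rho-1}\ln(1+\gamma)\,\mathrm{d}\gamma=-\Gamma(\rho)\Gamma(-\rho)$ directly; the resulting Gamma factors are the same.
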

\begin{figure*}[t]
			\normalsize
			\setcounter{equation}{39}
   \begin{align}  \label{ASC02}
\bar{C}_s&= \mathcal{A'} \hspace{1pt}  H^{0,1: 2,3; 1,4; 2,3; 1,3}_{2,2: 3,2; 4,1; 3,2; 3,1}  \left(\hspace{-4pt} \begin {array}{c} 
	\frac{\delta_1}{\bar{\gamma}_{R_1}}, 4\lambda_1{\bar{y}_1}^{\frac{-2}{d}}, \frac{\delta_2}{\bar{\gamma}_{E_1}},
             a\lambda_1
    \end{array} \hspace{-2pt}
			\Bigg \vert \hspace{-2pt} \begin{array}{c}
	\Lambda_1: \Lambda_2; \Lambda_3; \Lambda_4; \Lambda_5
    \\
			\Pi_1:\Pi_2; \Pi_3;  \Pi_4;  \Pi_5
   \\
\end{array} \hspace{-3pt}\right)  \nonumber \\ &+    \mathcal{A'} \hspace{1pt}  H^{0,2: 2,3; 1,3; 2,3; 1,4}_{3,1: 3,2; 3,1; 3,2; 4,1}  \left(\hspace{-4pt} \begin {array}{c} 
		\frac{\delta_2}{\bar{\gamma}_{E_1}},
    a\lambda_1,   
    \frac{\delta_1}{\bar{\gamma}_{R_1}}, 4\lambda_1{\bar{y}_1}^{\frac{-2}{d}}
    \end{array} \hspace{-2pt}
			\Bigg \vert \hspace{-2pt} \begin{array}{c}
	\Lambda'_1: \Lambda_4; \Lambda_5; \Lambda_2; \Lambda_3
    \\
			\Pi'_1:\Pi_4; \Pi_5;  \Pi_2;  \Pi_3
   \\
\end{array} \hspace{-3pt}\right)  
 \hspace{-2pt}  +    \mathcal{A''} \hspace{1pt}  H^{0,1: 2,3; 1,3}_{3,2: 3,2; 3,1} \hspace{-2pt} \left(\hspace{-6pt} \begin {array}{c} 
		\frac{\delta_2}{\bar{\gamma}_{E_1}},
    a\lambda_1 
    \end{array} \hspace{-4pt}
			\Bigg \vert \hspace{-3pt} \begin{array}{c}
	\tau_1: \tau_2; \tau_3
    \\
			\nu_1:\nu_2; \nu_3
   \\
			\end{array} \hspace{-5pt}\right).
			\end{align}
\vspace{-10pt}			\hrulefill
		\end{figure*}
\begin{proof}
    The proof is elaborated in Appendix G.
\end{proof}

\subsubsection{SOP Analysis}  By utilizing the SOP definition in \eqref{SOP2} and the obtained PDFs and CDFs in Thms. \ref{theor_R_2} and \ref{theor_E_2}, 
the SOP for the considered system model can be obtained as the following theorem. 
\begin{theorem} \label{theor_SOP2}
    The SOP for the considered RIS-aided BC system with direct links under Fisher-Snedecor $\mathcal{F}$ fading channels is given by \eqref{SOP02}, where $\mathcal{P}=\frac{\eta_1\eta_2\mathcal{G}\mathcal{C}R_t}{a\bar{\gamma}_{E_2}R^{'4}_t}$, $\Xi_1=(1,1), (1-m_{S_{ST}, 1}), (1-m_{S_{TR}, 1})$, $\Xi_2= (0,1), (\frac{2-c}{2}, 1), (\frac{3-c}{2}, 1),   (2-m_{ST}, 1)$, $\Xi_3= (1,1), (1-m_{S_{ST}, 1}), (1-m_{S_{TE}, 1})$, $\Xi_4= (0,1), (1,1), (2-m_{ST}, 1), (1+m_{S_{ST}, 1})$, $\Upsilon_1= (-3;1,1,1,1)$, $\Upsilon_2= (m_{ST},1),(m_{TR},1)$, $\Upsilon_3= (1+m_{S_{ST}, 1})$, $\Upsilon_4= (m_{ST},1), (m_{TE},1)$, and $\Upsilon_5= (1+m_{S_{ST}, 1})$.
\end{theorem}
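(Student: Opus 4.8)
\textbf{Proof proposal for Theorem~\ref{theor_SOP2}.}
The plan is to start from the integral form of the SOP in \eqref{SOP2}, namely $P_\mathrm{sop}=\int_0^\infty F_{\gamma_R}(\gamma_t)\,f_{\gamma_E}(\gamma_E)\,\mathrm{d}\gamma_E$ with $\gamma_t=\gamma_E R_t+R'_t$, but now using the \emph{with-direct-link} expressions: $F_{\gamma_R}$ from \eqref{cdf_SNR_Reader} (a bivariate Fox's $H$-function whose arguments depend on $1/\gamma_t$) and $f_{\gamma_E}$ from \eqref{pdf_SNR_Eve_2} (another bivariate $H$-function in $1/\gamma_E$). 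First I would write both $H$-functions via their Mellin--Barnes contour-integral representations, turning the product into a four-fold (for $F_{\gamma_R}$: two complex variables $s_1,s_2$; for $f_{\gamma_E}$: two more $t_1,t_2$) contour integral with an integrand that, apart from the gamma-function factors, is a monomial in $\gamma_E$, in $\gamma_t=\gamma_E R_t+R'_t$, and in the various scale constants. The key simplification to exploit is that $\gamma_t$ is an affine function of $\gamma_E$; substituting $u=\gamma_E/R'_t$ (or equivalently pulling out $R'_t$) makes $\gamma_t = R'_t(1+R_t u/R'_t)$, so the $\gamma_E$-integral becomes a Beta-type integral $\int_0^\infty u^{A-1}(1+\kappa u)^{-B}\,\mathrm{d}u$ for suitable exponents $A,B$ that are affine in the Mellin variables, with $\kappa=R_t/R'_t$. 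That integral evaluates in closed form to $\kappa^{-A}\,\Gamma(A)\Gamma(B-A)/\Gamma(B)$, which introduces one more gamma-ratio factor and one more effective Mellin variable.

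Carrying that out, the remaining object is a multi-fold Mellin--Barnes integral in the variables $s_1,s_2,t_1,t_2$ whose integrand is a ratio of products of gamma functions times a product of powers of the constants $\delta_1/\bar\gamma_{R_1}$, $4\lambda_1\bar y_1^{-2/d}$, $\delta_2/\bar\gamma_{E_1}$, $a\lambda_1$, $R_t$, and $R'_t$. I would then collect the constant powers into the four (or five) natural groups so that each complex variable multiplies exactly one composite argument, recognizing the result as a multivariate Fox's $H$-function in the sense of \cite{ref61} with the parameter lists $\Xi_1,\dots,\Xi_4$ (numerator parameters of the ``inner'' $H$-functions), $\Upsilon_1,\dots,\Upsilon_5$ (denominator parameters), and the outer parameters carried by $\Upsilon_1$; the prefactor $\mathcal P=\eta_1\eta_2\mathcal G\mathcal C R_t/(a\bar\gamma_{E_2}R'^4_t)$ absorbs the leading constants and the $R'_t$ and $R_t$ powers generated both by $F_{\gamma_R}({\gamma_R}^{-3}$-type factor) and by the Beta integral. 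Matching the orders $(m,n;m_i,n_i)$ of the multivariate $H$ against the orders of the two constituent $H$-functions and the extra Beta factor is essentially bookkeeping once the contour manipulations are justified.

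The main obstacle I anticipate is twofold. First, \emph{interchanging} the $\gamma_E$-integration with the multiple Mellin--Barnes contours: one must check absolute convergence of the resulting iterated integral (choosing the real parts of the contours inside the fundamental strips of all the gamma functions \emph{and} so that the Beta integral converges, i.e.\ $\mathrm{Re}(A)>0$ and $\mathrm{Re}(B-A)>0$), which constrains the admissible contours and is exactly where the existence conditions on the fading parameters $m_{ST},m_{S_{ST}},m_{TR},m_{S_{TR}},m_{TE},m_{S_{TE}}$ (and on $c$, hence on $N$) enter. Second, \emph{correctly assembling} the affine-combined Mellin variable from the Beta integral into the multivariate $H$-function notation: the Beta step effectively convolves two of the existing variables, which is why some parameter rows (e.g.\ $\Xi_4$ and the $\Upsilon$'s) carry combined entries like $(1+m_{S_{ST}},1)$ alongside $(0,1),(1,1),(2-m_{ST},1)$ rather than appearing as separate inner $H$-blocks; getting the right $H^{m,n:\dots}_{p,q:\dots}$ indices and the right argument tuple there is the delicate part. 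The rest — expanding $F_{\gamma_R}$ and $f_{\gamma_E}$, substituting, and recognizing the contour integral as a multivariate $H$ — is routine, so I would present those steps compactly and spend the detail budget on the convergence/contour placement and on the final parameter identification, deferring the full computation to Appendix~H as the statement indicates.
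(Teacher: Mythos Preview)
Your proposal is essentially the same approach as the paper's: expand both $F_{\gamma_R}$ and $f_{\gamma_E}$ via their Mellin--Barnes representations (four contour variables $\zeta_1,\zeta_2,\zeta_3,\zeta_4$), interchange with the $\gamma_E$-integral to obtain the Beta-type integral $I_7=\int_0^\infty (R_t\gamma_E+R'_t)^{-3-\zeta_1-\zeta_2}\gamma_E^{-2-\zeta_3-\zeta_4}\,\mathrm{d}\gamma_E$, evaluate it by \cite[Eq.~3.194.3]{table_int}, and then recognize the resulting four-fold contour integral as a multivariate Fox $H$-function. One small inaccuracy: the Beta step does \emph{not} introduce a new Mellin variable; it produces the coupling factor $\Gamma(-1-\zeta_3-\zeta_4)\Gamma(-2-\zeta_1-\zeta_2)/\Gamma(4+\zeta_1+\zeta_2+\zeta_3+\zeta_4)$ in the existing four variables, and it is precisely the $\Gamma$ in the denominator (mixing all four $\zeta_i$) that becomes the outer block $\Upsilon_1=(-3;1,1,1,1)$ of the multivariate $H$.
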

\begin{figure*}[t]
			\normalsize
			\setcounter{equation}{40}
   \begin{align}  \label{SOP02}
&P_\mathrm{sop} =  \mathcal{P} \hspace{1pt}  H^{0,0: 2,3; 1,4; 2,3; 1,3}_{0,1: 3,2; 4,1; 3,2; 3,1}  \left(\hspace{-4pt} \begin {array}{c} 
		\frac{\delta_1}{\bar{\gamma}_{R_1}R'_t}, 
        \frac{4\lambda_1}{{\bar{y}_1}^{\frac{2}{d}}R'_t}, 
            \frac{\delta_2R_t}{\bar{\gamma}_{E_1}},
             a\lambda_1R_t
    \end{array} \hspace{-2pt}
			\Bigg \vert \hspace{-2pt} \begin{array}{c}
	––: \Xi_1; \Xi_2; \Xi_3; \Xi_4
    \\
			\Upsilon_1:\Upsilon_2; \Upsilon_3;  \Upsilon_4;  \Upsilon_5
   \\
			\end{array} \hspace{-3pt}\right).
			\end{align}
			\hrulefill
\vspace{-10pt}		\end{figure*}
\begin{proof}
    The proof is elaborated in Appendix H.
\end{proof}
\newtheorem{remark}{Remark}
\begin{remark}
The ASC and SOP behavior of RIS-aided BC systems under Fisher-Snedecor $\mathcal{F}$ fading channels are accurately modeled as shown in Thms. \ref{theor_ASC_1}–\ref{theor_SOP2}. 
The intricacies of these theorems reveal how different parameters intricately influence the system's secrecy performance.
Thms. \ref{theor_ASC_1} and \ref{theor_SOP_1} demonstrate that the ASC and SOP in the absence of direct links are primarily governed by the parameters within the Fox's H-function expressions, offering a nuanced understanding of how they modulate secrecy performance. These parameters include the number of RIS reflecting elements, the distance of every entity from the RIS, average SNR at Eve (\(\bar{\gamma}_{E_2}\)), and the fading and shadowing parameters (\(m_{ST}\), and \(m_{S_{ST}}\)), which dictate system susceptibility to channel variability. A higher average SNR at Eve or more severe fading conditions tend to degrade the secrecy performance, while a larger number $N$ or more favorable links at the reader lead to reduced SOP and increased ASC, as evidenced by the derived expressions.
The inclusion of direct links in Thms. \ref{theor_ASC2} and \ref{theor_SOP2} adds another layer of complexity. While the direct links provide an additional pathway for signal transmission, our analysis indicates that their contribution to enhancing secrecy performance is relatively minor compared to the significant role played by the RIS. This is particularly evident in the Fox's H-function expressions of Thms. \ref{theor_ASC2} and \ref{theor_SOP2}, where the direct link parameters subtly alter the overall secrecy metrics. The mathematical transition from Thms. \ref{theor_ASC_1} and \ref{theor_SOP_1} to Thms. \ref{theor_ASC2} and \ref{theor_SOP2}, though marked by an increase in complexity, underscores the dominant influence of RIS in shaping the secrecy performance of BC systems. The marginal improvement in secrecy performance with direct links highlights the criticality of optimizing the RIS configuration for achieving robust secrecy in BC systems.
\end{remark}

\section{Asymptotic Analysis of Secrecy Metrics} \label{sec_asy}

In this section, given the importance of the secrecy metrics performance in the high SNR regime, we evaluate the asymptotic
behaviour of both SOP and ASC by
exploiting the residue approach \cite{residal1}.

\subsection{Asymptotic ASC}
Since the exact analytical expression of ASC in first and second trms of \eqref{ASC1} is in terms of the bivariate Fox' H-function, we can derive the asymptotic behavior of the ASC at the high SNR regime (i.e., $\bar{\gamma}_R\rightarrow \infty$) by using the expansion of the bivariate Fox's H-function. To do this, we need to evaluate the residue of the corresponding integrands at the closest poles to the contour, namely, the minimum pole on the right for large Fox's H-function arguments and the maximum pole on the left for small ones. Hence, the asymptotic ASC can be determined according to the following proposition.

\begin{proposition} \label{asy_ASC_01}
The asymptotic ASC (i.e., $\bar{\gamma}_R\rightarrow \infty$) for the considered RIS-aided BC system under Fisher-Snedecor $\mathcal{F}$ fading channels is given by

\begin{align} \label{asy_ASC123}
\bar{C}_{\mathrm{s}}^{\mathrm{asy}} &= \mathcal{G}_1  G^{4,4}_{4,5}\left( \hspace{-6pt} \begin{array}{c}	\frac{{\bar{y}_1}^{\frac{2}{d}}a}{4}  \end{array} \hspace{-3pt} 		\Bigg\vert \hspace{-3pt} \begin{array}{c} 0, 1,2-m_{ST}, -2-m_{S_{ST}} \vspace{3pt} \\ 1+m_{S_{ST}}, \frac{c-4}{2}, \frac{c-5}{2}, -3+m_{ST}, -1 \\ \end{array} \hspace{-6pt} \right)   \nonumber \\ &+ 
\mathcal{G}_2  G^{3,5}_{5,4}\left( \hspace{-6pt} \begin{array}{c}	\frac{4}{{\bar{y}_1}^{\frac{2}{d}}a}  \end{array} \hspace{-3pt} 		\Bigg\vert \hspace{-3pt} \begin{array}{c} -2-m_{S_{ST}}, 0, \frac{2-c}{2}, \frac{3-c}{2}, 2-m_{ST}  \vspace{3pt} \\ 1+m_{S_{ST}}, -2,  -3+m_{ST}, -1 \\ \end{array} \hspace{-6pt} \right) \nonumber
\\ &-  \frac{\mathcal{A}}{\mathcal{G}}G^{3,3}_{4,3}\left( \hspace{-6pt} \begin{array}{c}	a\lambda_{1}  \end{array} \hspace{-2pt} 		\Bigg\vert \hspace{-2pt} \begin{array}{c} 1,  2-m_{ST}, -1, 0 \vspace{3pt} \\ 1+m_{S_{ST}}, -1, -1 \\ \end{array} \hspace{-6pt} \right),
\end{align}
where $\mathcal{G}_1 =\frac{\mathcal{G} \mathcal{C} \bar{y}_{1}^{\frac{4}{d}}}{16a\lambda_1^2\bar{\gamma}_{E}}$ and $\mathcal{G}_2 =\frac{\mathcal{G} \mathcal{C}} {a^3 \lambda_1^2\bar{\gamma}_{E}}$. 
\end{proposition}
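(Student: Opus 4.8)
\textbf{Proof proposal for Proposition~\ref{asy_ASC_01}.}

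The plan is to start from the exact ASC expression in \eqref{ASC1}, which consists of three terms: two bivariate Fox's H-functions and one Meijer's G-function. The third (Meijer's G) term does not depend on $\bar{\gamma}_R$ (it involves only $a\lambda_1$), so it carries over unchanged into the asymptotic expression; only the first two bivariate H-function terms need asymptotic treatment as $\bar{\gamma}_R \to \infty$. For those, I would write each bivariate Fox's H-function via its double Mellin--Barnes integral representation, where the argument $\tfrac{4\lambda_1}{{\bar{y}_1}^{2/d}}$ (equivalently, one of the two arguments) scales with $1/\bar{\gamma}_R$ through $\lambda_1 = \tfrac{m_{ST}\sigma_T^2}{m_{S_{ST}}P_S}$ and the average-SNR normalization hidden in $\bar{\gamma}_R$. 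As $\bar{\gamma}_R\to\infty$ this argument tends either to $0$ or to $\infty$ depending on which variable we isolate, so the relevant contour integral is dominated by the residue at the pole nearest the contour --- the rightmost pole on the left half-plane for a vanishing argument, or the leftmost pole on the right half-plane for a growing argument, as stated in the paragraph preceding the proposition.

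The key steps, in order, are: (i) express the two bivariate H-functions in \eqref{ASC1} as two-fold Mellin--Barnes contour integrals over variables $s_1,s_2$; (ii) identify the variable whose associated argument contains $\bar{\gamma}_R$, and close that contour in the direction dictated by whether the argument $\to 0$ or $\to\infty$; (iii) apply the residue theorem, picking up the single dominant residue (the first-order pole from the appropriate Gamma factor), which collapses one of the two integrals and leaves a single Mellin--Barnes integral --- i.e., a Meijer's G-function --- in the remaining variable; (iv) collect the Gamma-function prefactors generated by the residue evaluation into the constants $\mathcal{G}_1 = \tfrac{\mathcal{G}\mathcal{C}\bar{y}_1^{4/d}}{16a\lambda_1^2\bar{\gamma}_E}$ and $\mathcal{G}_2 = \tfrac{\mathcal{G}\mathcal{C}}{a^3\lambda_1^2\bar{\gamma}_E}$, and read off the resulting $G^{4,4}_{4,5}$ and $G^{3,5}_{5,4}$ parameter lists by matching the surviving numerator/denominator Gamma factors; (v) append the unchanged Meijer's G-function third term from \eqref{ASC1}. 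The shift of the pole parameter (e.g. the appearance of $-2-m_{S_{ST}}$ and the shifted entries $\tfrac{c-4}{2}, \tfrac{c-5}{2}, -3+m_{ST}$ in the G-function argument lists) should emerge naturally from evaluating the residue and re-indexing the remaining contour.

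The main obstacle I expect is bookkeeping in step (iii)--(iv): correctly determining \emph{which} pole is dominant for each of the two terms (the two terms have the roles of the two arguments swapped, so one term's argument vanishes while the other's diverges, forcing opposite contour closures), and then tracking the Gamma-function factors picked up by the residue so that the reduced single integral is genuinely a Meijer's G-function with exactly the parameter rows claimed. A secondary subtlety is justifying that the retained residue indeed gives the leading-order term --- i.e., that the next poles contribute strictly higher powers of $1/\bar{\gamma}_R$ (or $\bar{\gamma}_R$) and are therefore asymptotically negligible --- which requires checking the pole locations of the Gamma factors against the value of $c$ and the fading/shadowing parameters. The algebraic simplification from a collapsed bivariate H-function down to the compact Meijer's G-form, and verifying the constants $\mathcal{G}_1,\mathcal{G}_2$, is routine once the dominant pole is correctly identified, so I would defer that to the appendix.
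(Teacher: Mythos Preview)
Your proposal is correct and matches the paper's approach: the third Meijer's G term is carried over unchanged, and each bivariate Fox's H-function term is reduced to a single Meijer's G-function by evaluating the residue at the dominant pole of one contour, which the paper takes to be the coupling pole $\zeta_2=-2-\zeta_1$ arising from the factor $\Gamma^2(-2-\zeta_1-\zeta_2)$ in the Mellin--Barnes integrand. This coupling-pole choice (rather than a pole of a purely single-variable Gamma factor) is precisely what generates the shifted parameters $\tfrac{c-4}{2},\tfrac{c-5}{2},-3+m_{ST},-2-m_{S_{ST}}$ you anticipated, and the remaining $\zeta_1$-integral is then identified with the stated $G^{4,4}_{4,5}$ and $G^{3,5}_{5,4}$ functions with constants $\mathcal{G}_1,\mathcal{G}_2$.
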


\begin{proof}
    The proof is elaborated in Appendix I.
\end{proof}

\subsection{Asymptotic SOP}
With same strategy, we can derive the asymptotic behavior of the SOP at the high SNR regime (i.e., $\bar{\gamma}_R\rightarrow \infty$) by using the expansion of the bivariate Fox's H-function. 
Hence, the asymptotic SOP can be determined as follows.

\begin{proposition} \label{asy_SOP_01}
The asymptotic SOP (i.e., $\bar{\gamma}_R\rightarrow \infty$) for the considered RIS-aided BC system under Fisher-Snedecor $\mathcal{F}$ fading channels is given by
\begin{align}
\scalebox{0.93}{$\displaystyle P_{\mathrm{sop}}^{\mathrm{asy}}=\frac{\mathcal{G} \mathcal{C}R_t}{a^3\lambda_1^2\bar{\gamma}_{E_2}} G^{5,2}_{3,5}\left( \hspace{-6pt} \begin{array}{c}	\frac{{\bar{y}_1}^{\frac{2}{d}}a}{4R_t}  \end{array} \hspace{-2pt} 		\Big\vert \hspace{-2pt} \begin{array}{c} -m_{S_{ST}}, m_{ST}+4,2\\\frac{c}{2}, \frac{c-1}{2}, m_{ST}, 1, m_{S_{ST}}+3 \\ \end{array} \hspace{-6pt} \right) $}.
\end{align}
\end{proposition}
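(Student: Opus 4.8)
\textbf{Proof Proposal for Proposition~\ref{asy_SOP_01}.}

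The plan is to start from the exact SOP expression in \eqref{SOP1}, which is written as a bivariate Fox's H-function with arguments proportional to $\bar{y}_1^{2/d}/(4\lambda_1 R'_t)$ and $R_t/(a\lambda_1 R'_t)$. In the high-SNR regime $\bar{\gamma}_R\to\infty$, the average SNR at the reader grows without bound; since $\lambda_1$ is fixed but the relevant scaling enters through $\bar{y}_1$ (via $P_S$) and the normalization $\bar\gamma_R$, one of the two Fox's H arguments becomes large while the other stays $O(1)$. The first step is therefore to identify precisely which Mellin–Barnes variable (say $s_1$, associated with the reader/RIS branch) carries the asymptotically large argument, so that the double contour integral in \eqref{SOP1} can be reduced: the inner integral over that variable is evaluated by the residue theorem, picking up the residue at the dominant pole — the rightmost pole of the left half-plane contour when the argument is small, or equivalently the leading term of the series expansion of the univariate Fox's H-function in that variable.

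The second step is to carry out this residue evaluation explicitly. Closing the $s_1$-contour and summing the residue at the single dominant pole collapses the bivariate $H$-function to a univariate one in the remaining variable $s_2$. The surviving Gamma-function factors — those from the $\Xi$ and $\Upsilon$ parameter blocks that do not involve $s_1$, together with the residue contribution $\Gamma(\cdot)$ evaluated at the dominant pole — reassemble into a Meijer $G$-function. Bookkeeping the shifts in the parameter lists (the $-m_{S_{ST}}$, $m_{ST}+4$, $2$ in the upper row and $c/2$, $(c-1)/2$, $m_{ST}$, $1$, $m_{S_{ST}}+3$ in the lower row of the claimed $G^{5,2}_{3,5}$) is the main computational content: each entry should be traceable to a specific Gamma factor in \eqref{SOP1} after the pole at $s_1 = $ (the dominant value) has been substituted, and the prefactor $\mathcal{G}\mathcal{C}R_t/(a^3\lambda_1^2\bar{\gamma}_{E_2})$ should emerge from collecting $\mathcal{G} R'^2_t \mathcal{C}/(a\bar\gamma_{E_2} R_t)$ times the residue's constant and the appropriate powers of the (now-large) argument evaluated at the pole, with the $R'_t\to 0$ behavior absorbed since $R'_t = e^{R_s}-1$ stays fixed while the SNR scaling dominates.

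The third step is to confirm this is genuinely the leading-order term: one must check that the residue at the \emph{next} pole (the second-closest to the contour) yields a contribution that is of strictly higher order in the small parameter $1/\bar\gamma_R$ (equivalently, a higher power of the large Fox's H argument's reciprocal), so that retaining only the first residue is legitimate for the asymptotic statement. This is the standard justification underlying the residue method of \cite{residal1}, and it requires verifying that the poles are simple (generic fading/shadowing parameters) and correctly ordered.

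The main obstacle I anticipate is the parameter-tracking in step two: the bivariate Fox's H-function in \eqref{SOP1} has a dense parameter structure ($H^{0,1:4,1;3,1}_{1,0:2,5;1,3}$), and correctly matching each Gamma factor to its image in the reduced $G^{5,2}_{3,5}$ — including the half-integer shifts $c/2$ and $(c-1)/2$ coming from the $\sqrt{\gamma}$-structure of the RIS cascade PDF in \eqref{f_{Y_1}(y_1)} — is error-prone. A secondary subtlety is ensuring the argument of the resulting $G$-function is $\bar{y}_1^{2/d} a/(4R_t)$ rather than its reciprocal; this depends on whether the dominant pole lies to the left or right of the contour, which in turn depends on the sign conventions in the Mellin–Barnes representation and must be fixed consistently with the convergence strip used when deriving \eqref{SOP1} in Appendix~F.
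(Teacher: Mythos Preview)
Your proposal is correct and follows essentially the same route as the paper: start from the double Mellin--Barnes representation behind \eqref{SOP1} (i.e.\ \eqref{SOP_01_Gamma}), collapse one contour by the residue method of \cite{residal1} at the dominant pole, and identify the surviving single integral as the Meijer $G^{5,2}_{3,5}$.

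Two small alignment points. First, the paper takes the residue in the \emph{second} variable $\zeta_2$ (the Eve-side contour), not in the reader-branch variable you call $s_1$; the dominant pole is the coupling pole $\zeta_2=2-\zeta_1$ coming from the factor $\Gamma(-2+\zeta_1+\zeta_2)$ in \eqref{SOP_01_Gamma}, so either choice of variable to eliminate is cosmetic and both produce the ratio argument $\bar{y}_1^{2/d}a/(4R_t)$ in the final $G$-function. Second, your heuristic that ``one of the two Fox's H arguments becomes large while the other stays $O(1)$'' should be tightened: in \eqref{SOP1} neither argument carries $\bar{\gamma}_R$ explicitly (it sits inside the prefactor $\mathcal{G}$), and the paper simply selects the highest left pole of $L_2$ without invoking a large-argument expansion. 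This does not change the residue mechanics you outline, but it does affect how you justify which pole is dominant, so be careful when writing up step three.
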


\begin{proof}
The proof is elaborated in Appendix J.
\end{proof}

\begin{remark}
The asymptotic analysis of ASC and SOP in high SNR regimes, as presented in Props. \ref{asy_ASC_01} and \ref{asy_SOP_01}, offers profound insights into the secrecy performance of RIS-aided BC systems under Fisher-Snedecor $\mathcal{F}$ fading channels. The utilization of the residue approach to expand the bivariate Fox's H-function is a testament to the depth of our analysis. This method allows us to accurately capture the essence of ASC and SOP behaviors as the system approaches high SNR limits, offering a deeper understanding of system performance under such conditions. The derived expressions highlight the delicate balance among various system parameters, such as the number of RIS elements, fading/shadowing characteristics, and SNR levels. This asymptotic perspective is crucial, as it not only validates the robustness of our system model in varying conditions but also provides valuable benchmarks for system design and optimization in practical deployment scenarios. The results from this analysis underscore the significant impact of RIS in enhancing ASC and SOP, particularly in high SNR regimes, thus reinforcing the pivotal role of RIS in designing secure BC systems.
\end{remark}

\section{Simulation Results}\label{num-results}

In this section, we validate the theoretical expressions of the derived ASC and SOP for RIS-aided BC through Monte-Carlo simulations. We conduct simulations for various BC scenarios, including cases with only a direct link, only RIS-aided links, and both direct and RIS-aided links. We also evaluate the PLS performance of RIS-aided BC based on different system parameters. 

\subsection{Simulation Setup}

We consider an RIS-aided BC system, featuring a tag with ultra-limited resources, a passive eavesdropper, a reader, and an RIS with $M$ reflecting elements. 
As shown in Fig. \ref{fig_simulation_setup}, the tag remains stationary at the coordinates (0,0,0), transmitting its confidential information to the reader by modulating and reflecting the RF signal emitted from source. 
\begin{figure}[t]
    \centering
\includegraphics[width=0.36\textwidth]{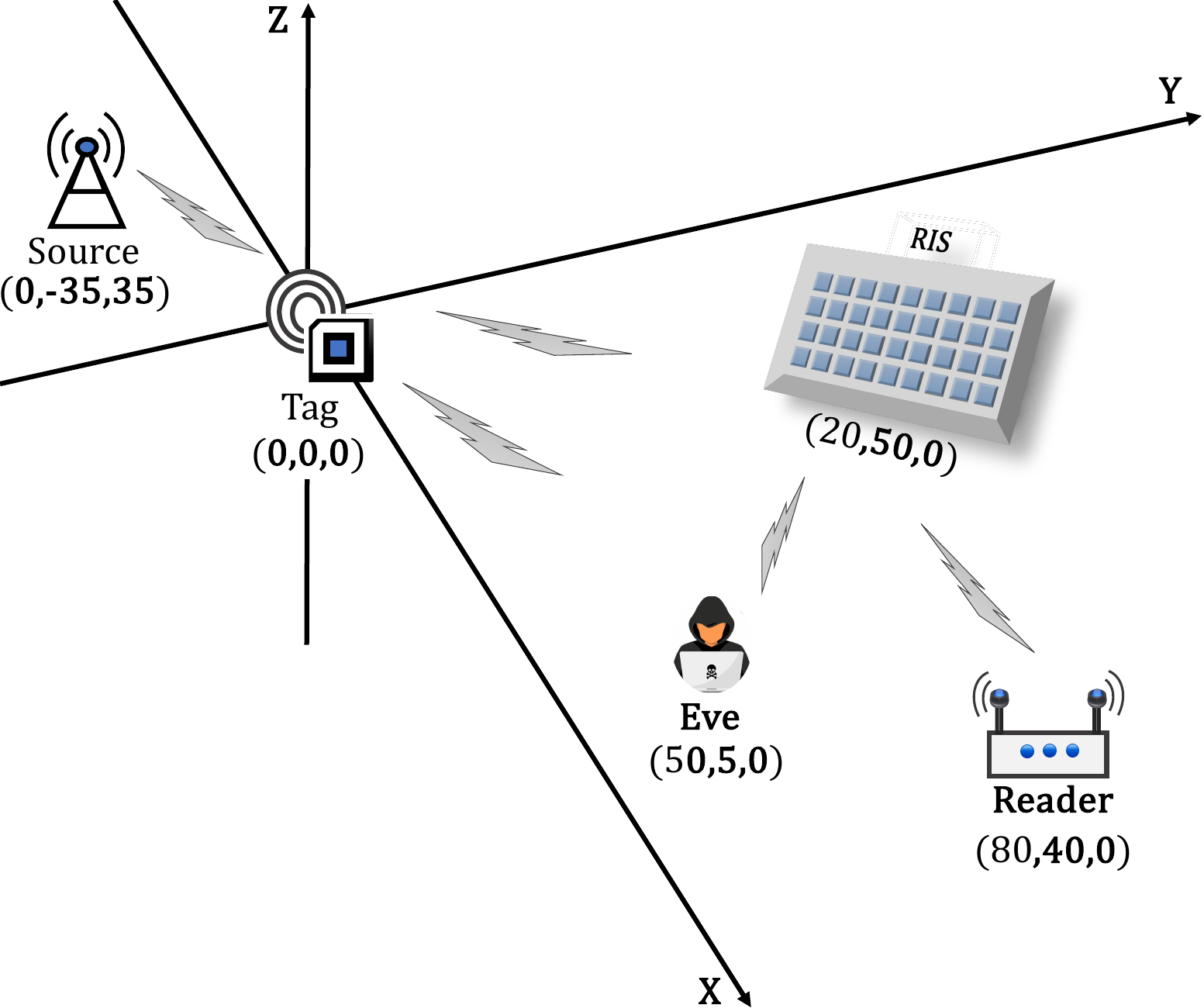}
    \caption{The simulation setup.}
    \label{fig_simulation_setup}
\end{figure}
The RIS is located at coordinates (20,50,0) to improve the received SNR at the reader, resulting in enhancing the BC system's PLS performance. Eve and the reader are located at (50,5,0) and (80,40,0), respectively, with the reader being positioned further away from the RIS and the tag compared to Eve. This setup, which represents a worst-case scenario, facilitates a detailed analysis of performance and results in the following fixed distances:
$d_{ST}= d_{T\Theta}= d_{TE}= d_{\Theta E}= 50$m, $d_{\Theta R}=60$m, and $d_{TR}=90$m, each derived from the framework of long-range BC \cite{ref_LoRa_BC1}. 
Additionally, the setup includes other parameters like $R_s=1$bps/Hz, $\sigma^2_R=-60$dbm, $\sigma^2_E=-40$dbm, $P_s=30$dBm, and $\chi =3.5$.
It is important to note that while the extended generalized bivariate/multivariate Fox's H-function is not readily available in popular mathematical software tools, one can implement it in MATLAB using programming functions presented in \cite{ref58}.


\subsection{Results and Discussions}

Fig. \ref{fig_ASC_01} illustrates the behavior of ASC under Fisher-Snedecor $\mathcal{F}$ fading channels for various values of $\bar{\gamma}_{R_2}$ and different numbers of RIS elements, denoted as $N$. As depicted in the figure, the ASC consistently increases with higher values of $\bar{\gamma}_{R_2}$ for a fixed $N$. Additionally, ASC shows an upward trend as $N$ increases, signifying that the utilization of RIS leads to a superior channel quality, enabling enhanced SNR at the reader through optimal phase adjustments.
\begin{figure}[t]
    \centering    \includegraphics[width=0.33\textwidth]{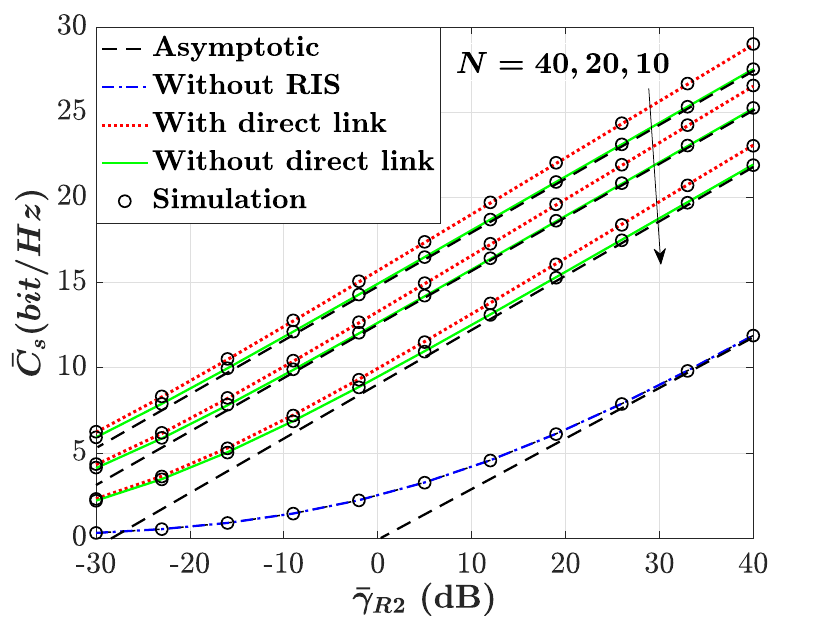}
    \caption{ASC versus $\Bar{\gamma}_{R_2}$ for different 
        numbers of RIS elements.}
    \label{fig_ASC_01}
\end{figure}
Fig. \ref{fig_ASC_02} demonstrates the variation of ASC concerning $\bar{\gamma}_{R_2}$ with fixed $N$ while considering selected values of $\bar{\gamma}_{E_2}$. Notably, ASC exhibits a consistent increase as $\bar{\gamma}_{R_2}$ rises, which is reasonable as it indicates improved main channel conditions. Additionally, as $\bar{\gamma}_{E_2}$ increases, ASC tends to decrease for a fixed $\bar{\gamma}_{R_2}$. Even when $\bar{\gamma}_{R_2} < \bar{\gamma}_{E_2}$, ASC remains greater than $4$ bits, implying that some level of secure communication is still achievable. However, the significance of secure communication becomes more pronounced with larger ASC values, which are attainable when the main channel's condition is superior to that of the eavesdropper ($\bar{\gamma}_{R_2} \geq \bar{\gamma}_{E_2}$) in practical BC scenarios.
\begin{figure}[t]
    \centering    \includegraphics[width=0.35\textwidth]{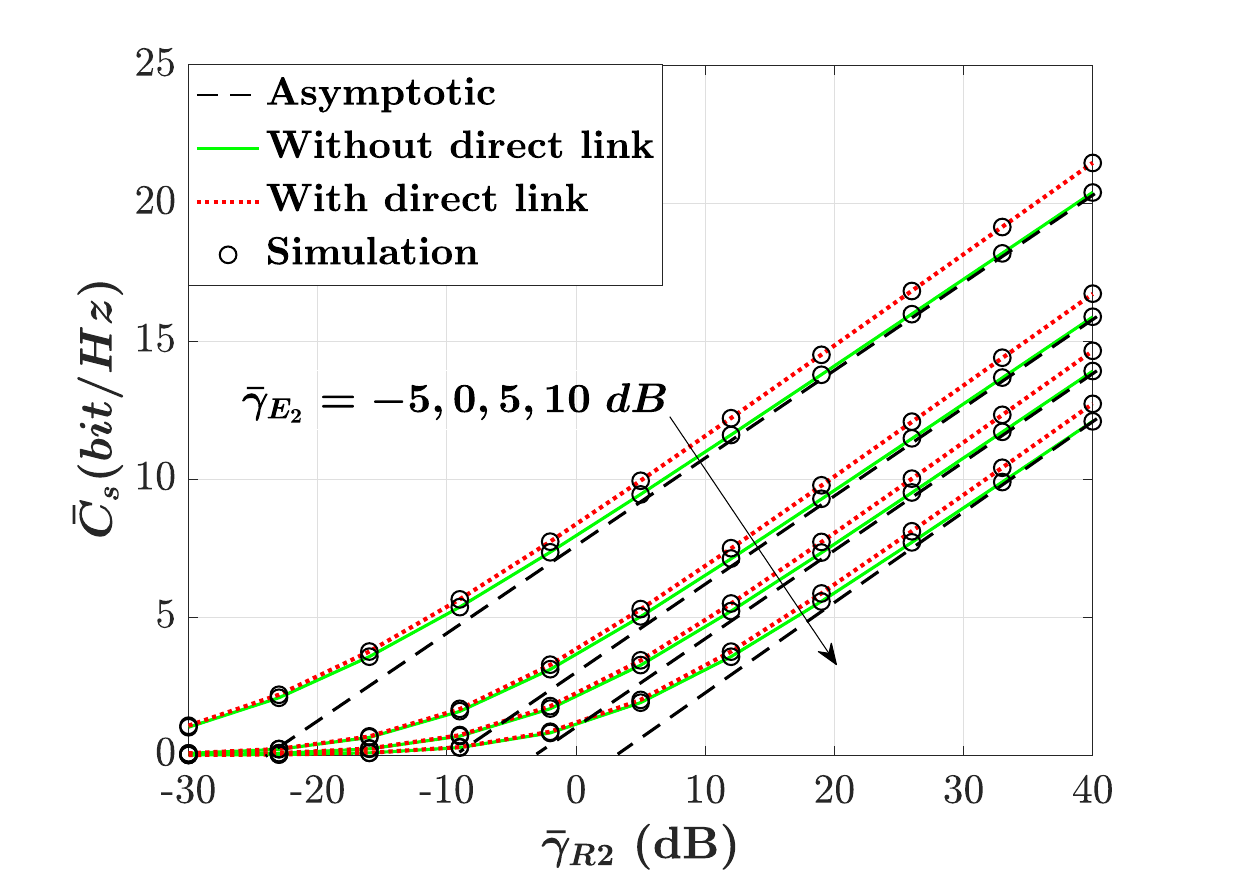}
    \caption{ASC versus $\Bar{\gamma}_{R_2}$ for different values of \\ $\Bar{\gamma}_{E_2}$ and $N=10$.}
    \label{fig_ASC_02}
\end{figure}
Fig. \ref{fig_ASC_03} illustrates the behavior of ASC concerning $\bar{\gamma}_{R_2}$ for various fading parameters $m_i$, $i \in \{ST, TR, TE\}$, in the presence of RIS deployment. Notably, as the fading becomes less severe (as $m_i$ increases), the ASC performance improves, indicating that the system's performance enhances (or degrades) in environments with lighter (or heavier) fading characteristics.
\begin{figure}[t]
    \centering    \includegraphics[width=0.35\textwidth]{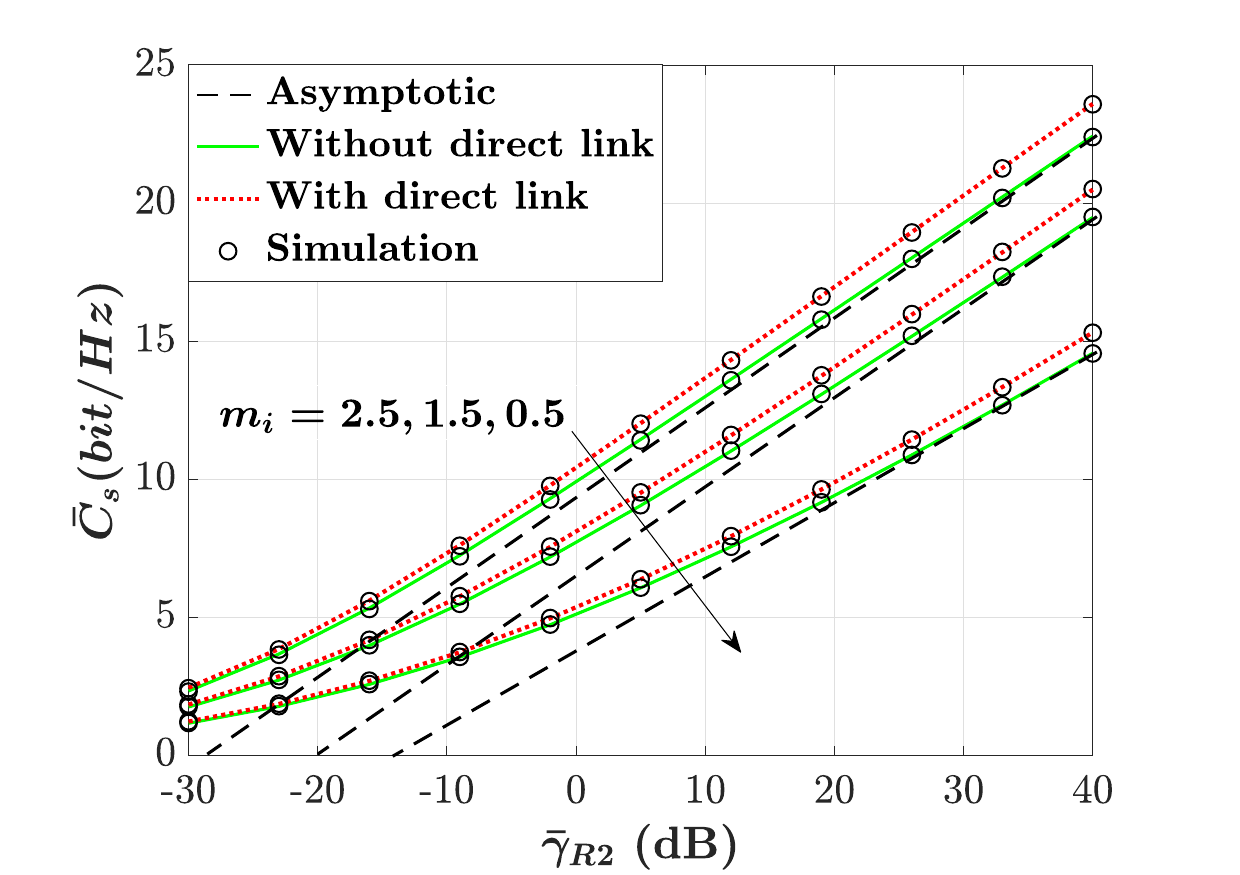}
    \caption{ASC versus ${S_R}$ for different values of $m_i$.}
    \label{fig_ASC_03}
\end{figure}

In practical BC systems, achieving secure communication is of paramount importance with the least possible value of SOP, as secure communication cannot be maintained when the system experiences a secrecy outage.
Fig. \ref{fig:SOP_01} demonstrates the variation of SOP concerning different values of $\bar{\gamma}_{R_2}$ and selected values of $N$. It is evident from the figure that SOP consistently decreases as $\bar{\gamma}_{R_2}$ increases for a fixed value of $N$. Additionally, as $N$ increases, SOP decreases since RIS deployment provides a higher-quality channel, leading to improved SNR at the reader. 
\begin{figure}[t]
    \centering    \includegraphics[width=0.35\textwidth]{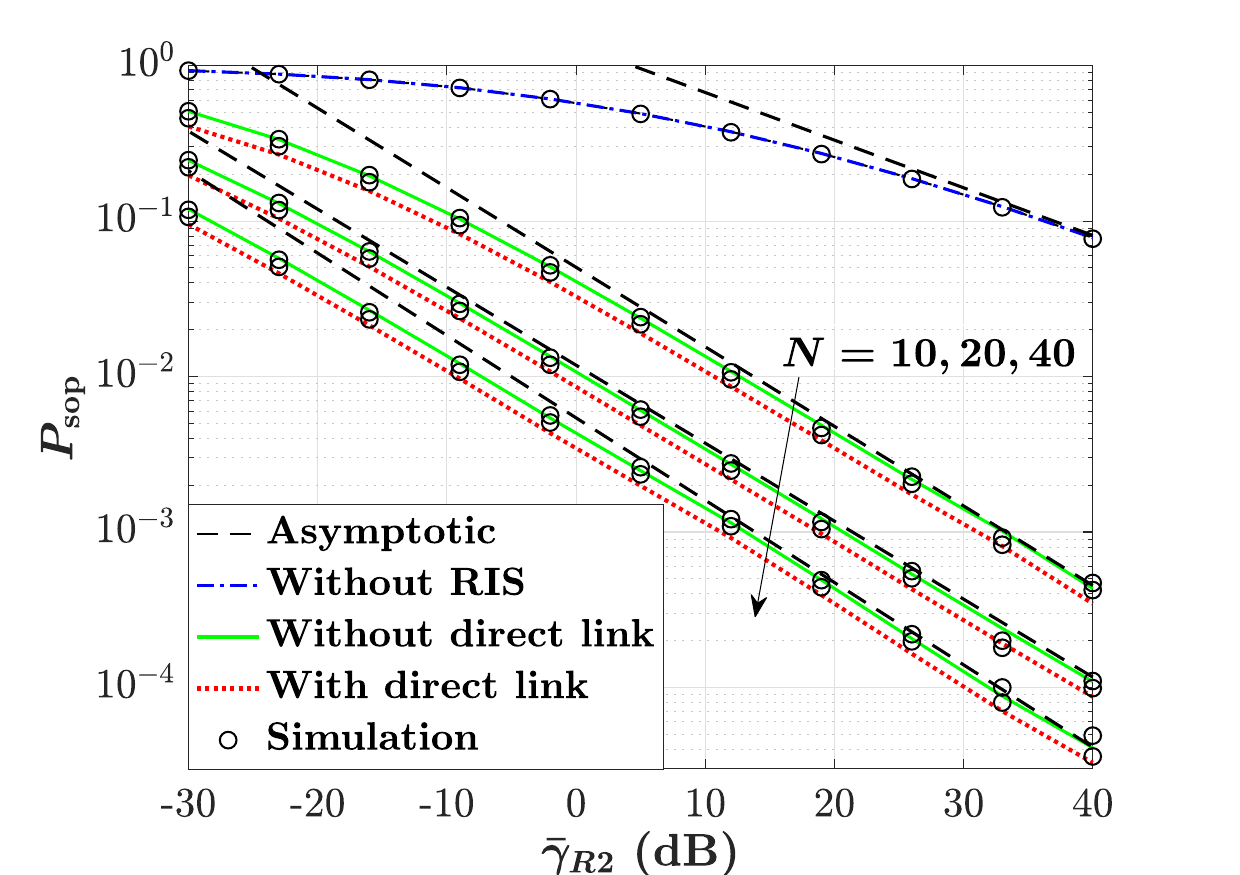}
    \caption{SOP versus $\Bar{\gamma}_{R_2}$ for different numbers of RIS elements.}
    \label{fig:SOP_01}
\end{figure}
Figure \ref{fig:SOP_02} presents a comprehensive insight into the behavior of SOP concerning $d_{\Theta R}$, while considering various values of $N$. 
As $d_{\Theta R}$ increases, the SOP exhibits a steady rise. This behavior emphasizes SOP becomes increasingly compromised when the RIS and the reader are positioned farther apart in the RIS-aided BC system. The higher SOP associated with larger distances underscores the importance of managing physical proximity to maintain robust secrecy in the communication process.
In addition, it can be observed that as $N$ grows, the figure highlights a consistent reduction in SOP. 
\begin{figure}[t]
    \centering    \includegraphics[width=0.35\textwidth]{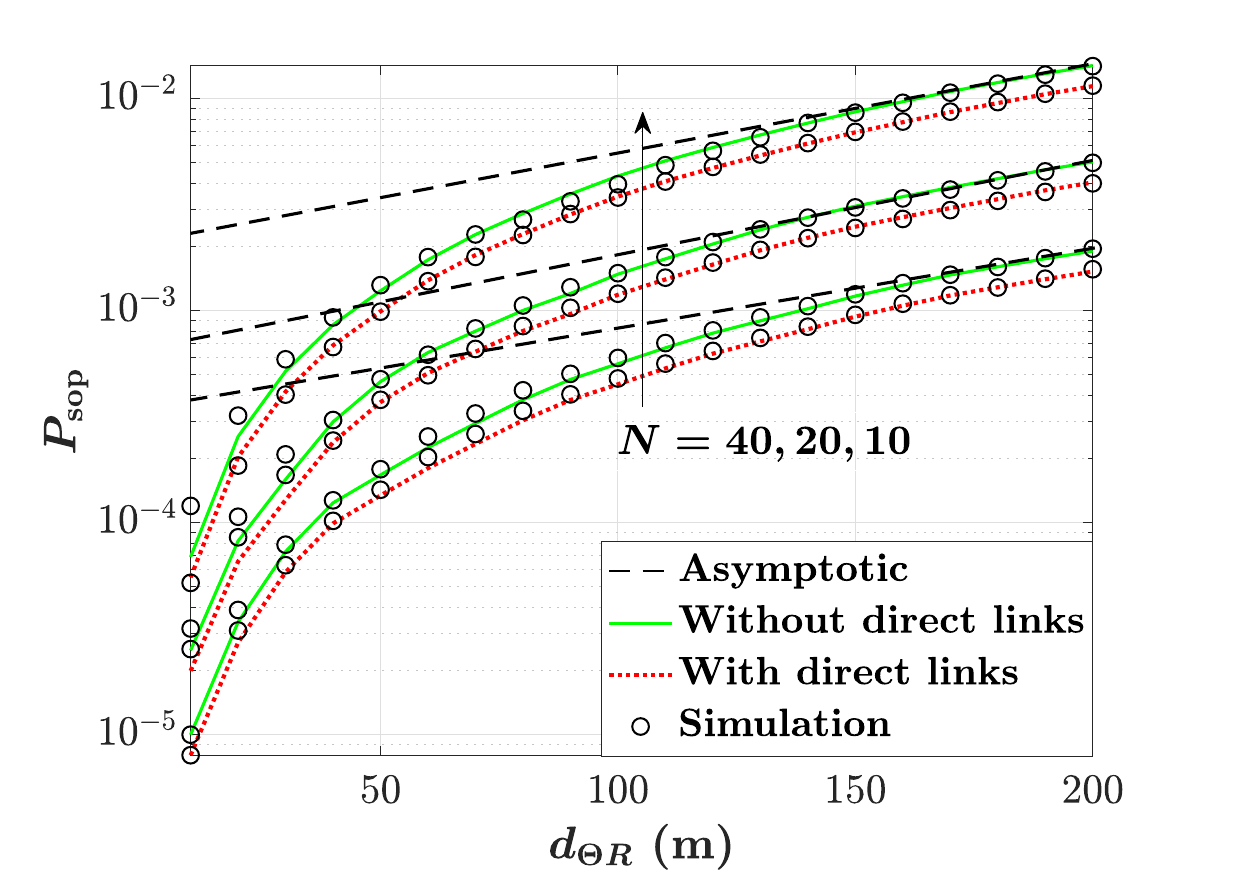}
    \caption{SOP versus $d_{\Theta R}$ for different numbers of RIS elements.}
    \label{fig:SOP_02}
\end{figure}
Fig. \ref{fig:SOP_03} depicts the performance of SOP in terms of $\bar{\gamma}_{R_2}$ for different fading parameters $m_i$ with RIS deployment. Notably, as the fading conditions become less severe (i.e., as $m_i$ increases), the SOP performance improves
due to enhanced signal quality and increased secrecy capacity. It can be also observed that, with the help of RIS, the BC system's ability to maintain secure communication even under less favorable channel conditions is significantly enhanced.
\begin{figure}[t]
    \centering    \includegraphics[width=0.35\textwidth]{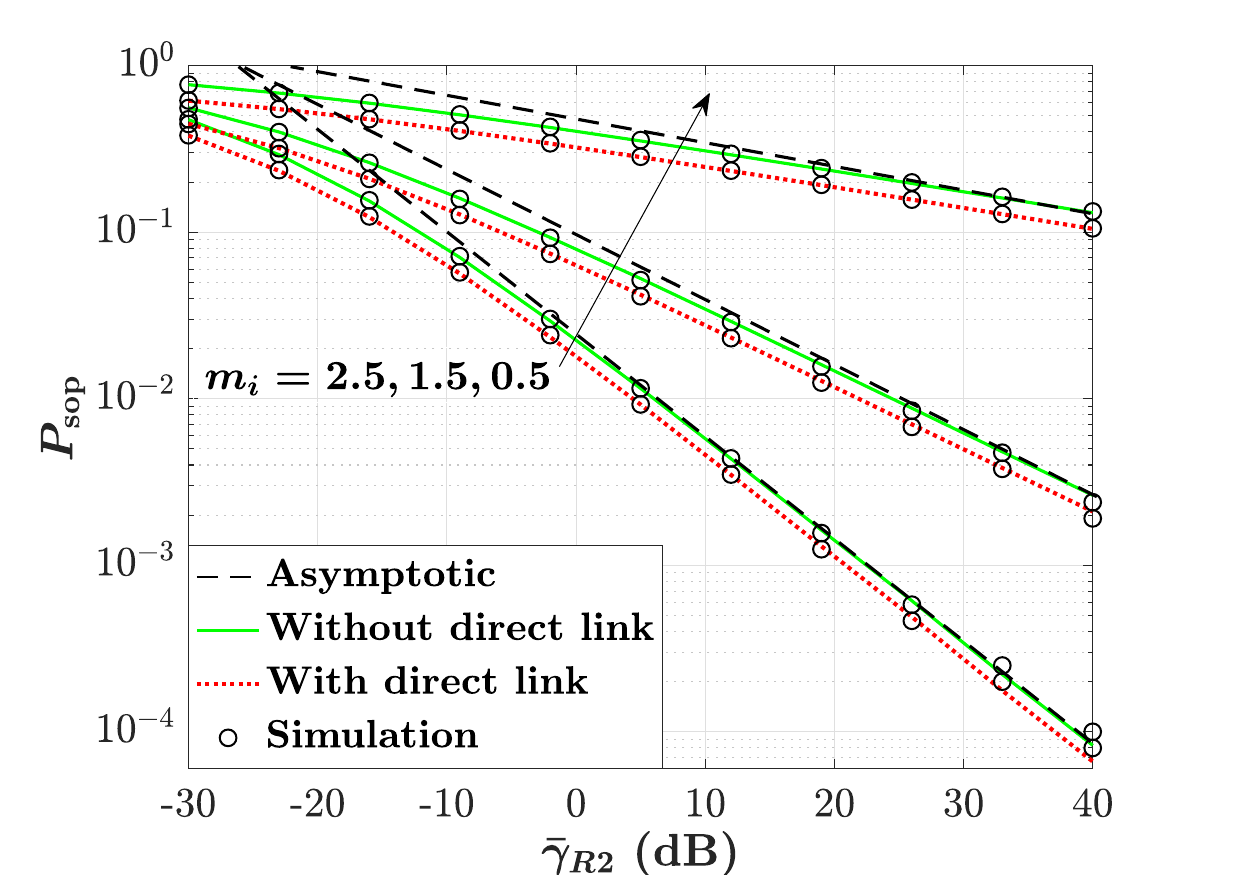}
    \caption{SOP versus $\Bar{\gamma}_{R_2}$ for different values of $m_i$.}
    \label{fig:SOP_03}
\end{figure}
Considering two scenarios (BC with and without direct links) in our analytical and simulation results reveals profound insights into the role of direct links and underscores the great potential of RIS integration to BC, where the results demonstrate a remarkable similarity in overall secrecy performance between the two scenarios. On the other hand, in cases where solely direct links exist (the scenario without RIS), the observed secrecy performance is suboptimal.
In summery, the analytical and simulation outcomes show that incorporating the RIS and increasing the number of RIS elements provide additional degrees of freedom for efficient beamforming in BC, leading to a significant enhancement in the system's secrecy performance.
%
%
\section{Conclusion}\label{conclusion}
 In this paper, we studied the secrecy performance of RIS-aided BC in terms of ASC and SOP under the Fisher-Snedecor $\mathcal{F}$ fading channels, where we considered two possible scenarios; RIS-aided BC without and with direct links.
First, for both scenarios, we derived compact analytical expressions for PDF and CDF of the SNR at both the reader and Eve. Additionally, we obtained analytical expressions of the ASC and SOP to assess the secrecy performance of the BC system under the influence of RIS. Furthermore, we provided an asymptotic analysis of the SOP and ASC to understand the system behavior in the high SNR regime.
Through Monte-Carlo simulations, we validated the analytical results, and the numerical findings demonstrated that using RIS can significantly improve the secrecy performance of BC compared to the conventional BC scenarios without RIS.
\appendices
 \section{Proof of Theorem 1}
 \label{pdf_SNR_R_01}
For computing the PDF of $X_1Y_1$, we use the distribution of the product of two random variables as
 \begin{align} \label{eq-pdf-x1x2}
f_{Y}(y)&= \int_{0}^{\infty}\int_{0}^{\infty} f_{X_1}(x_1)f_{Y_1}(y_1) \mathrm{d}x_1 \mathrm{d}y_1 , 
 \\  &=\int_{0}^{\infty} \frac{1}{\left|y_1\right|}\ f_{Y_1}(y_1)  f_{X_1}(\frac{y}{y_1}) \mathrm{d}y_1 .
\end{align}
By inserting \eqref{f_{X_1}(x_1)} and \eqref{f_{Y_1}(y_1)} in \eqref{eq-pdf-x1x2}, we get to solve 
\begin{align} \label{int_X1_Y1}
&f_{Y}(y) \hspace{-2pt}= \hspace{-3pt}\int_{0}^{\infty} {y_1}^{\frac{c-1}{2}} \hspace{2pt} \mathrm{e}^{-\frac{\sqrt{y_1}}{{\bar{y}_1}^{\frac{1}{d}}} } 
 G_{1,1}^{1,1}\left( \hspace{-4pt} \begin{array}{c}
				\frac{\lambda_{1}y}{y_1} \end{array} \hspace{-2pt}
			\Big\vert \hspace{-2pt} \begin{array}{c}
				-m_{S_{ST}}\\
				m_{ST}-1\\
			\end{array} \hspace{-4pt} \right) \mathrm{d}y_1, 
    \\ &\overset{(b)}{=} \int_{0}^{\infty} {u}^{c-2} \ \mathrm{e}^{-\frac{u}{{\bar{y}_1}^{\frac{1}{d}}} } 
 G_{1,1}^{1,1}\left( \hspace{-4pt} \begin{array}{c}
				\lambda_{1}yu^2 \end{array} \hspace{-2pt}
			\Big\vert \hspace{-2pt} \begin{array}{c}
				-m_{S_{ST}}\\
				m_{ST}-1\\
\end{array}\hspace{-4pt} \right) \mathrm{d}u,  \\&\overset{(c)}{=}   G^{1,3}_{3,1}\left( \hspace{-6pt} \begin{array}{c}	\frac{4\lambda_{1}y}{{\bar{y}_1}^{\frac{2}{d}}}   \end{array} \hspace{-2pt} 		\Big\vert \hspace{-2pt} \begin{array}{c} \frac{2-c}{2}, \frac{3-c}{2}, 2-m_{ST} \\ 1+m_{S_{ST}}\\ \end{array} \hspace{-6pt} \right),
\end{align}
where $(b)$ is followed by the integral after the variable change $u=\sqrt{y_1}$ and $(c)$ is derived from the \cite[Eq. 2.24.3.1]{ref59}. By considering $\gamma_R=\bar{\gamma}_RX_1Y_1$ and putting all the constant coefficients in \eqref{int_X1_Y1}, $f_{\gamma_R}(\gamma_R)$
is obtained as \eqref{f_{GR1}(GR1)} and the proof is completed. 
In order to compute $F_{\gamma_R}(\gamma_R)$, we use the CDF primary definition as
\begin{align} \label{cdf=pdf}
F_{\gamma_R}(\gamma_R)= \int_{0}^{\gamma_R} f_{\gamma_R}(\gamma_R) \mathrm{d}\gamma_R .
\end{align}
Next, by inserting \eqref{f_{GR1}(GR1)} into \eqref{cdf=pdf} and using \cite[Eq. 1.16.2.1]{ref59}, $F_{\gamma_R}(\gamma_R)$ is obtained as \eqref{F_{GR1}(GR1)} and the proof is completed.

 \section{Proof of Theorem 2}
 \label{pdf_SNR_E_01}
For completing the proof of Thm. \ref{theor_pdf_E}, we can take the same steps of Appendix \ref{pdf_SNR_R_01}. First, we insert \eqref{f_{Y_2}(y_2)} and \eqref{f_{X_1}(x_1)} into \eqref{eq-pdf-x1x2}. Then, by assuming $X=X_1Y_2$, we get to solve 
\begin{align} \label{int_X1_Y2}
&f_{X}(x)= \int_{0}^{\infty} \frac{1}{y_2} \ \mathrm{e}^{-\frac{y_2}{a} } 
 G_{1,1}^{1,1}\left( \hspace{-4pt} \begin{array}{c}
				\frac{\lambda_{1}x}{y_2} \end{array} \hspace{-2pt}
			\Big\vert \hspace{-2pt} \begin{array}{c}
				-m_{S_{ST}}\\
				m_{ST}-1\\
			\end{array} \hspace{-4pt} \right) \mathrm{d}y_2, 
   \\ &\overset{(d)}{=} \int_{0}^{\infty} \frac{1}{y_2} \ \mathrm{e}^{-\frac{y_2}{a} } 
 G_{1,1}^{1,1}\left( \hspace{-4pt} \begin{array}{c}
				\lambda_{1}xy_2 \end{array} \hspace{-2pt}
			\Big\vert \hspace{-2pt} \begin{array}{c}
				2-m_{ST}  \\
				1+m_{S_{ST}} \\
			\end{array} \hspace{-4pt} \right) \mathrm{d}y_2,  \\ &\overset{(c)}{=}   G^{1,2}_{2,1}\left( \hspace{-6pt} \begin{array}{c}	a\lambda_{1}x   \end{array} \hspace{-2pt} 		\Big\vert \hspace{-2pt} \begin{array}{c} 1, 2-m_{ST} \\ 1+m_{S_{ST}}\\ \end{array} \hspace{-6pt} \right),
\end{align}
where $(d)$ uses the Meijer's G-function properties in \cite[Eq. 8.2.1.14]{ref59}. 
By considering $\gamma_E=\bar{\gamma}_EX_1Y_2$ and putting all the constant coefficients in \eqref{int_X1_Y2}, $f_{\gamma_E}(\gamma_E)$
is obtained as \eqref{f_{GE1}(GE1)} and the proof is completed. Furthermore, in order to compute $F_{\gamma_E}(\gamma_E)$, we can take the same steps in Appendix \ref{pdf_SNR_R_01} and obtain the CDF of $\gamma_E$ as shown in \eqref{F_{GE1}(GE1)} by using \eqref{cdf=pdf}.

 \section{Proof of Theorem 3}
 \label{pdf_SNR_R_02}

For computing $f_{\gamma_R}(\gamma_R)$, first we obtain $M_{\gamma_{R_1}}$ and $M_{\gamma_{R_2}}$ as
\begin{align} \label{M_Gamma_R1}
M_{\gamma_{R_1}} \hspace{-4pt} &= \hspace{-2pt} \eta_1 \hspace{-3pt}\int_{0}^{\infty} \frac{\mathrm{e}^{t\gamma_{R_1}}}{\gamma_{R_1}}  
 G^{2,2}_{2,2} \hspace{-2pt} \left( \hspace{-7pt} \begin{array}{c}
\frac{\delta_1 \gamma_{R_1}}{\bar{\gamma}_{R_1}} \end{array} \hspace{-4pt}
\Bigg\vert \hspace{-3pt} \begin{array}{c} 1-m_{S_{ST}}, 1-m_{S_{TR}} \\ 
m_{ST}, m_{TR} \\
\end{array} \hspace{-6pt} \right) \hspace{-3pt} \mathrm{d}\gamma_{R_1}  \\ &\overset{(c)}{=}    
\eta_1 G^{2,3}_{3,2} \hspace{-2pt} \left( \hspace{-6pt} \begin{array}{c}
\frac{\delta_1}{s \bar{\gamma}_{R_1}} \end{array} \hspace{-4pt}
\Bigg\vert \hspace{-3pt} \begin{array}{c} 1, 1-m_{S_{ST}}, 1-m_{S_{TR}} \\ 
m_{ST}, m_{TR} \\
\end{array} \hspace{-6pt} \right) ,
\end{align}

\begin{align} \label{M_Gamma_R2}
M_{\gamma_{R_2}} \hspace{-4pt} &= \hspace{-2pt} \mathcal{G} \hspace{-3pt}\int_{0}^{\infty} \mathrm{e}^{t\gamma_{R_2}}
G^{1,3}_{3,1} \hspace{-2pt} \left( \hspace{-7pt} \begin{array}{c}
\frac{4\lambda_1 \gamma_{R_2}}{{\bar{y}_{1}}^{\frac{2}{d}}} \end{array} \hspace{-4pt}
\Bigg\vert \hspace{-3pt} \begin{array}{c} 
\frac{2-c}{2}, \frac{3-c}{2}, 2-m_{ST}  \\ 
1+m_{S_{ST}} \\
\end{array} \hspace{-6pt} \right) \hspace{-2pt} \mathrm{d}\gamma_{R_2}  \\ &\overset{(c)}{=}    
\frac{\mathcal{G} }{s} G^{1,4}_{4,1} \hspace{-2pt} \left( \hspace{-6pt} \begin{array}{c}
\frac{4\lambda_1}{s{\bar{y}_{1}}^{\frac{2}{d}}} \end{array} \hspace{-4pt}
\Bigg\vert \hspace{-3pt} \begin{array}{c} 
0, \frac{2-c}{2}, \frac{3-c}{2}, 2-m_{ST}  \\ 
1+m_{S_{ST}} \\
\end{array} \hspace{-6pt} \right) .
\end{align}
Then, by using \eqref{pdf-laplace} and the Laplace inverse formula, we have $f_{\gamma_{R}}(\gamma_{R})$ as \eqref{pdf_G_int_R},
\begin{figure*}[t]
			\normalsize
   \begin{align}  \label{pdf_G_int_R}
f_{\gamma_{R}}(\gamma_{R}) &= 
\frac{\eta_1 \mathcal{G}}{2\pi j}
\ointop_{L_{\gamma_{R}}}  \mathrm{e}^{s\gamma_{R}} \  s^{-1}   G^{2,3}_{3,2} \hspace{-2pt} \left( \hspace{-6pt} \begin{array}{c}
\frac{\delta_1}{s \bar{\gamma}_{R_1}} \end{array} \hspace{-4pt}
\Bigg\vert \hspace{-3pt} \begin{array}{c} 1, 1-m_{S_{ST}}, 1-m_{S_{TR}} \\ 
m_{ST}, m_{TR} \\
\end{array} \hspace{-6pt} \right)
G^{1,4}_{4,1} \hspace{-2pt} \left( \hspace{-6pt} \begin{array}{c}
\frac{4\lambda_1}{s{\bar{y}_{1}}^{\frac{2}{d}}} \end{array} \hspace{-4pt}
\Bigg\vert \hspace{-3pt} \begin{array}{c} 
0, \frac{2-c}{2}, \frac{3-c}{2}, 2-m_{ST}  \\ 
1+m_{S_{ST}} \\
\end{array} \hspace{-6pt} \right) 
\mathrm{d}s
\\ &\overset{(e)}{=}
\frac{\eta_1 \mathcal{G}}{\left(2\pi j\right)^3}
\underbrace{\ointop_{L_{\gamma_{R}}}
\mathrm{e}^{s\gamma_{R}} 
s^{-1-\zeta_1-\zeta_2}  \mathrm{d}s}_{I_1}
\ointop_{L_{1}} \hspace{-2pt}
\Gamma(m_{ST}-\zeta_1)  \Gamma(m_{TR}-\zeta_1) \Gamma(\zeta_1) \Gamma(m_{S_{ST}}+\zeta_1) \Gamma(m_{S_{TR}}+\zeta_1)
\left(\frac{\delta_1}{\bar{\gamma}_{R_1}}\right)^{\zeta_1}
\mathrm{d} \zeta_1
\nonumber \\ &\times
\ointop_{L_{2}} 
\Gamma(1+m_{S_{ST}}-\zeta_2)
\Gamma(1+\zeta_2) \Gamma(\frac{c}{2}+\zeta_2) \Gamma(\frac{c-1}{2}+\zeta_2) \Gamma(-1+m_{ST}+\zeta_2)
\left(\frac{4\lambda_1}{\bar{y}_{1}^{\frac{2}{d}}}\right)^{\zeta_2}  \mathrm{d} \zeta_2.
\end{align}
   \hrulefill
\vspace{-15pt}		\end{figure*}
where $(e)$ means using the integral-form demonstration of the Meijer's G-function and $L_x$ shows a specific contour. For solving $I_1$ in \eqref{pdf_G_int_R}, we use the following definition
\begin{align} \label{Gamma(w)}
   \frac{-2\pi j}{\Gamma(w)}=     \ointop_{C}  (-t)^{-w} \ \mathrm{e}^{-t} \ d_t ,
\end{align}
where, by assuming $s\gamma_R=-t$, we can compute $I_1$ as
\begin{align} \label{I_1}
    I_1=\frac{2\pi j \  {\gamma_R}^{-2-\zeta_1-\zeta_2}}{\Gamma\left(-1-\zeta_1-\zeta_2\right)}. 
\end{align}
Now, by putting \eqref{I_1} in \eqref{pdf_G_int_R}, we can re-write \eqref{pdf_G_int_R} as \eqref{pdf_H_int_R}.
\begin{figure*}[t]
			\normalsize
   \begin{align}  \label{pdf_H_int_R}
f_{\gamma_{R}}(\gamma_{R}) &= 
\frac{\eta_1 \mathcal{G} {\gamma_R}^{-2}}{\left(2\pi j\right)^2}
\ointop_{L_{1}} \ointop_{L_{2}} \hspace{-2pt}
\frac{\Gamma(m_{ST}-\zeta_1)  \Gamma(m_{TR}-\zeta_1) \Gamma(\zeta_1) \Gamma(m_{S_{ST}}+\zeta_1) \Gamma(m_{S_{TR}}+\zeta_1)}{\Gamma(-1-\zeta_1-\zeta_2)}
\left(\frac{\delta_1}{\bar{\gamma}_{R_1}\gamma_R}\right)^{\zeta_1}
\nonumber \\ &\times 
\Gamma(1+m_{S_{ST}}-\zeta_2)
\Gamma(1+\zeta_2) \Gamma(\frac{c}{2}+\zeta_2) \Gamma(\frac{c-1}{2}+\zeta_2) \Gamma(-1+m_{ST}+\zeta_2)
\left(\frac{4\lambda_1}{\bar{y}_{1}^{\frac{2}{d}}\gamma_R}\right)^{\zeta_2}  
\mathrm{d} \zeta_2 \mathrm{d} \zeta_1.
\end{align}
   \hrulefill
\vspace{-15pt}		\end{figure*}
According to the definition of bivariate Fox's H-function \cite[Eqs. 2.56-2.60]{ref60}, we can re-write \eqref{pdf_H_int_R} as \eqref{pdf_SNR_Reader}, so the proof is completed for $f_{\gamma_k}(\gamma_k)$. 

For computing $F_{\gamma_R}(\gamma_R)$, we first insert \eqref{M_Gamma_R1} and \eqref{M_Gamma_R2} into \eqref{cdf-laplace}. Then, by taking the exactly same steps of computing $f_{\gamma_R}(\gamma_R)$, $F_{\gamma_R}(\gamma_R)$ can be obtained as  \eqref{cdf_SNR_Reader}.

\section{Proof of Theorem 4}
\label{pdf_SNR_E_02}

In order to compute $f_{\gamma_E}(\gamma_E)$, first we compute $M_{\gamma_{E_1}}$ and $M_{\gamma_{E_2}}$ as
\begin{align} \label{M_Gamma_E1}
M_{\gamma_{E_1}} \hspace{-4pt} &= \hspace{-2pt} \eta_2 \hspace{-3pt}\int_{0}^{\infty} \frac{\mathrm{e}^{t\gamma_{E_1}}}{\gamma_{E_1}}  
 G^{2,2}_{2,2} \hspace{-2pt} \left( \hspace{-7pt} \begin{array}{c}
\frac{\delta_2 \gamma_{E_1}}{\bar{\gamma}_{E_1}} \end{array} \hspace{-4pt}
\Bigg\vert \hspace{-3pt} \begin{array}{c} 1-m_{S_{ST}}, 1-m_{S_{TE}} \\ 
m_{ST}, m_{TE} \\
\end{array} \hspace{-7pt} \right) \hspace{-3pt} \mathrm{d}\gamma_{E_1}  \\ &\overset{(c)}{=}    
\eta_2 G^{2,3}_{3,2} \hspace{-2pt} \left( \hspace{-6pt} \begin{array}{c}
\frac{\delta_2}{s \bar{\gamma}_{E_1}} \end{array} \hspace{-4pt}
\Bigg\vert \hspace{-3pt} \begin{array}{c} 1, 1-m_{S_{ST}}, 1-m_{S_{TE}} \\ 
m_{ST}, m_{TE} \\
\end{array} \hspace{-6pt} \right) ,
\end{align}

\begin{align} \label{M_Gamma_E2}
M_{\gamma_{E_2}} \hspace{-4pt} &= \hspace{-2pt} \frac{\mathcal{C}}{a\bar{\gamma}_{E_2}} \hspace{-3pt}\int_{0}^{\infty} \hspace{-2pt} \mathrm{e}^{t\gamma_{E_2}}
G^{1,2}_{2,1} \hspace{-2pt} \left( \hspace{-7pt} \begin{array}{c}
a \lambda_1 \gamma_{E_2} \end{array} \hspace{-5pt}
\Bigg\vert \hspace{-4pt} \begin{array}{c} 
1, 2-m_{ST}  \\ 
1+m_{S_{ST}} \\
\end{array} \hspace{-6pt} \right) \hspace{-2pt} \mathrm{d}\gamma_{E_2}  \\ &\overset{(c)}{=}    
\frac{\mathcal{C}}{a\bar{\gamma}_{E_2}s} 
G^{1,3}_{3,1} \hspace{-2pt} \left( \hspace{-6pt} \begin{array}{c}
\frac{a\lambda_1}{s} \end{array} \hspace{-4pt}
\Bigg\vert \hspace{-3pt} \begin{array}{c} 
0, 1, 2-m_{ST}  \\ 
1+m_{S_{ST}} \\
\end{array} \hspace{-6pt} \right) .
\end{align}
Then, by using \eqref{pdf-laplace} and the Laplace inverse formula, we have $f_{\gamma_{E}}(\gamma_{E})$ as \eqref{pdf_G_int_E}, 
\begin{figure*}[t]
			\normalsize
   \begin{align}  \label{pdf_G_int_E}
f_{\gamma_{E}}(\gamma_{E}) &= 
\frac{\eta_2 \mathcal{C}} {2\pi j a\bar{\gamma}_{E_2}}
\ointop_{L_{\gamma_{E}}}  \mathrm{e}^{s\gamma_{E}} \  s^{-1}   G^{2,3}_{3,2} \hspace{-2pt} \left( \hspace{-6pt} \begin{array}{c}
\frac{\delta_2}{s \bar{\gamma}_{E_1}} \end{array} \hspace{-4pt}
\Bigg\vert \hspace{-3pt} \begin{array}{c} 1, 1-m_{S_{ST}}, 1-m_{S_{TE}} \\ 
m_{ST}, m_{TE} \\
\end{array} \hspace{-6pt} \right)
G^{1,3}_{3,1} \hspace{-2pt} \left( \hspace{-6pt} \begin{array}{c}
\frac{a\lambda_1}{s} \end{array} \hspace{-4pt}
\Bigg\vert \hspace{-3pt} \begin{array}{c} 
0, 1, 2-m_{ST}  \\ 
1+m_{S_{ST}} \\
\end{array} \hspace{-6pt} \right) 
\mathrm{d}s
\\ &\overset{(e)}{=}
\frac{\eta_2 \mathcal{C}} {\left(2\pi j\right)^3 a\bar{\gamma}_{E_2}}
\underbrace{\ointop_{L_{\gamma_{E}}}
\mathrm{e}^{s\gamma_{E}} 
s^{-1-\zeta_1-\zeta_2}  \mathrm{d}s}_{I_2}
\ointop_{L_{1}} \hspace{-2pt}
\Gamma(m_{ST}-\zeta_1)  \Gamma(m_{TE}-\zeta_1) 
\Gamma(\zeta_1) \Gamma(m_{S_{ST}}+\zeta_1) \Gamma(m_{S_{TE}}+\zeta_1)
\left(\frac{\delta_2}{\bar{\gamma}_{E_1}}\right)^{\zeta_1}
\mathrm{d} \zeta_1
\nonumber \\ &\times
\ointop_{L_{2}} 
\Gamma(1+m_{S_{ST}}-\zeta_2)
\Gamma(1+\zeta_2) 
\Gamma(\zeta_2) 
\Gamma(-1+m_{ST}+\zeta_2)
\left(a\lambda_1\right)^{\zeta_2}
  \mathrm{d} \zeta_2.
\end{align}
   \hrulefill
\vspace{-15pt}		\end{figure*}
where $I_2$ can be obtained as \eqref{I_1} in exactly same way.
Then, by putting $I_2$ in \eqref{pdf_G_int_E}, we can re-write \eqref{pdf_G_int_E} as \eqref{pdf_H_int_E}, which equals to \eqref{pdf_SNR_Eve_2} based on bivariate Fox's H-function definition. Therefore, the proof is completed for $f_{\gamma_{E}}(\gamma_{E})$.
\begin{figure*}[t]
			\normalsize
   \begin{align}  \label{pdf_H_int_E}
f_{\gamma_{E}}(\gamma_{E}) &= 
\frac{\eta_2 \mathcal{C}} {\left(2\pi j\right)^2 a\bar{\gamma}_{E_2}}
\ointop_{L_{1}} \ointop_{L_{2}} \hspace{-2pt}
\frac{\Gamma(m_{ST}-\zeta_1)  \Gamma(m_{TE}-\zeta_1) \Gamma(\zeta_1) \Gamma(m_{S_{ST}}+\zeta_1) \Gamma(m_{S_{TE}}+\zeta_1)}{\Gamma(-1-\zeta_1-\zeta_2)}
\left(\frac{\delta_2}{\bar{\gamma}_{E_1}\gamma_E}\right)^{\zeta_1}
\nonumber \\ &\times 
\Gamma(1+m_{S_{ST}}-\zeta_2)
\Gamma(1+\zeta_2) 
\Gamma(\zeta_2) 
\Gamma(-1+m_{ST}+\zeta_2)
\left(a\lambda_1\right)^{\zeta_2}  
\mathrm{d} \zeta_2 \mathrm{d} \zeta_1.
\end{align}
   \hrulefill
\vspace{-15pt}		\end{figure*}

For computing $F_{\gamma_E}(\gamma_E)$, we first insert \eqref{M_Gamma_E1} and \eqref{M_Gamma_E2} into \eqref{cdf-laplace}. Then, by taking the exactly same steps of computing $f_{\gamma_E}(\gamma_E)$, $F_{\gamma_E}(\gamma_E)$ can be obtained as  \eqref{cdf_SNR_Eve_2}.

\section{Proof of Theorem 5}
\label{ASC_01}
In order to obtain ASC without considering the direct links, we can re-write \eqref{ASC_1} as \eqref{ASC_int}. 
\begin{figure*}[t]
			\normalsize
\begin{align} \label{ASC_int}
\bar{C}_s\hspace{-1 pt}= \hspace{-2 pt} \underbrace{\int_{0}^{\infty} \hspace{-2 pt} log(1+\gamma_R) f_{\gamma_R}(\gamma_R)F_{\gamma_E}(\gamma_R)  d_{\gamma_R}}_{J_1} \hspace{-1 pt}+\hspace{-3 pt} \underbrace{\int_{0}^{\infty} \hspace{-2 pt} log(1+\gamma_E) f_{\gamma_E}(\gamma_E)F_{\gamma_R}(\gamma_E)  d_{\gamma_E}}_{J_2} \hspace{-1 pt}-\hspace{-3 pt}  \underbrace{\int_{0}^{\infty} \hspace{-2 pt} log(1+\gamma_E) f_{\gamma_E}(\gamma_E)  d_{\gamma_E}}_{J_3}.
\end{align}
\hrulefill
      \vspace{-15pt}
		\end{figure*}
For solving $J_1$, we use Meijer's G-function demonstration of the logarithm function as shown in \cite[Eq. 8.4.6.5]{ref59}. Then we can re-write \eqref{ASC_int} as \eqref{ASC_J1_Meijer1}.
\begin{figure*}[t]
			\normalsize
\begin{align} \label{ASC_J1_Meijer1}
J_1 &= \frac{\mathcal{G} \mathcal{C}}{a\bar{\gamma}_{E}\ln{2}}
\int_{0}^{\infty} \hspace{-2pt} \gamma_R \hspace{2pt}
G^{1,2}_{2,2} \hspace{-2pt} \left( \hspace{-7pt} \begin{array}{c}
\gamma_R \end{array} \hspace{-4pt}
\Bigg\vert \hspace{-3pt} \begin{array}{c} 
1,1  \\ 
1,0 \\
\end{array} \hspace{-6pt} \right)
G^{1,3}_{3,1} \hspace{-2pt} \left( \hspace{-7pt} \begin{array}{c}
\frac{4\lambda_1 \gamma_{R}}{{\bar{y}_{1}}^{\frac{2}{d}}} \end{array} \hspace{-4pt}
\Bigg\vert \hspace{-3pt} \begin{array}{c} 
\frac{2-c}{2}, \frac{3-c}{2}, 2-m_{ST}  \\ 
1+m_{S_{ST}} \\
\end{array} \hspace{-6pt} \right) 
G^{1,3}_{3,2}\left( \hspace{-6pt} \begin{array}{c}	a\lambda_{1}  \gamma_R \end{array} \hspace{-2pt} 		\Big\vert \hspace{-2pt} \begin{array}{c} 0, 1,  2-m_{ST} \\ 1+m_{S_{ST}} , -1\\ \end{array} \hspace{-6pt} \right)
\hspace{-2pt} \mathrm{d}\gamma_{R}.
\end{align}
\hrulefill
\vspace{-15pt}
		\end{figure*}
For solving the integral in \eqref{ASC_J1_Meijer1}, we extend one of the Meijer's G-functions and re-write \eqref{ASC_J1_Meijer1} as \eqref{ASC_J1_Meijer2}. 
\begin{figure*}[t]
			\normalsize
\begin{align} \label{ASC_J1_Meijer2}
J_1 &= \frac{\mathcal{G} \mathcal{C}}{2\pi ja\bar{\gamma}_{E}\ln{2}}
\ointop_{L_{1}} \hspace{-2pt}
\frac{
\Gamma(1+m_{S_{TR}}-\zeta_1)
\Gamma(1+\zeta_1) \Gamma(\zeta_1) 
\Gamma(-1+m_{ST}+\zeta_1)
}
{\Gamma(2+\zeta_1)}
\left(a\lambda_1\right)^{\zeta_1}
\mathrm{d}\zeta_1
\nonumber \\ &\times
\underbrace{
\int_{0}^{\infty} \hspace{-2pt} 
{\gamma_R}^{1+\zeta_1} \hspace{2pt}
G^{1,2}_{2,2} \hspace{-2pt} \left( \hspace{-7pt} \begin{array}{c}
\gamma_R \end{array} \hspace{-4pt}
\Bigg\vert \hspace{-3pt} \begin{array}{c} 
1,1  \\ 
1,0 \\ \end{array} \hspace{-6pt} \right)
G^{1,3}_{3,1} \hspace{-2pt} \left( \hspace{-7pt} \begin{array}{c}
\frac{4\lambda_1 \gamma_{R}}{{\bar{y}_{1}}^{\frac{2}{d}}} \end{array} \hspace{-4pt}
\Bigg\vert \hspace{-3pt} \begin{array}{c} 
\frac{2-c}{2}, \frac{3-c}{2}, 2-m_{ST}  \\ 
1+m_{S_{ST}} \\
\end{array} \hspace{-6pt} \right) 
 \mathrm{d}\gamma_{R}}_{I_3}.
\end{align}
\hrulefill
\vspace{-15pt}		\end{figure*}
We can also solve $I_3$ in \eqref{ASC_J1_Meijer2}, as follows based on \cite[Eq. 2.24.1.1]{ref59}.
\begin{align} \label{I_3}
 \scalebox{0.99}{$\displaystyle I_3 \hspace{-2pt}=\hspace{-2pt}G^{3,4}_{5,3}\hspace{-2pt} \left( \hspace{-7pt} \begin{array}{c}	\frac{4\lambda_{1}}{{\bar{y}_1}^{\frac{2}{d}}}   \end{array} \hspace{-5pt} 		\Bigg\vert \hspace{-5pt} \begin{array}{c} 
 \frac{2-c}{2}, \frac{3-c}{2}, 2-m_{ST}, -2-\zeta_1, -1-\zeta_1 
 \\ 1+m_{S_{ST}}, -2-\zeta_1, -2-\zeta_1\\ \end{array} \hspace{-7pt} \right) $},
 \end{align}
Now, by plugging \eqref{I_3} (in the form of its Meijer's G-function definition) into \eqref{ASC_J1_Meijer2}, we will have $J_1$ as \eqref{ASC_01_Gamma}. According to the bivariate Fox's H-function definition, we can show \eqref{ASC_01_Gamma} as the first term of \eqref{ASC1} and the proof is completed for $J_1$. By taking the exactly same steps for computing $J_1$, we can compute $J_2$ as \eqref{ASC_02_Gamma}, which can be shown as the second term of \eqref{ASC1} based on the bivariate Fox's H-function definition. Therefore, the proof will be completed for $J_2$.
\begin{figure*}[t]
			\normalsize
\begin{align} \label{ASC_01_Gamma}
J_1&= \frac{\mathcal{G} \mathcal{C}}{a\bar{\gamma}_{E}(2\pi j)^2 }   \ointop_{L_1} \ointop_{L_2} 
\frac{\Gamma^2\left(-2-\zeta_1-\zeta_2\right) \Gamma\left(3+\zeta_1+\zeta_2\right)}{\Gamma\left(-1-\zeta_1-\zeta_2\right)} 
\frac{\Gamma\left(1+ m_{S_{ST}}-\zeta_1\right) \Gamma\left(1+\zeta_1\right) \Gamma\left(\zeta_1\right) \Gamma\left(-1+m_{ST}+\zeta_1\right)}{\Gamma\left(2+\zeta_1\right)}   
\left(a\lambda_1\right)^{\zeta_1}
\nonumber \\
 &\times 
\Gamma\left(1+ m_{S_{ST}}-\zeta_2\right) \Gamma\left(\frac{c}{2}+\zeta_2\right) \Gamma\left(\frac{c-1}{2}+\zeta_2\right) \Gamma\left(-1+m_{ST}+\zeta_2\right)
 \left(\frac{4\lambda_1}{{\bar{y}_1}^{\frac{2}{d}}} \right)^{\zeta_2} 
 \mathrm{d}{\zeta_2} \mathrm{d}{\zeta_1}.
 \end{align}
\hrulefill
\vspace{-10pt}		\end{figure*}  
\begin{figure*}[t]
			\normalsize
\begin{align} \label{ASC_02_Gamma}
J_2&= \frac{\mathcal{G} \mathcal{C}}{a\bar{\gamma}_{E}(2\pi j)^2 }   \ointop_{L_1} \ointop_{L_2} 
\frac{\Gamma\left(1+ m_{S_{ST}}-\zeta_1\right) \Gamma\left(1+\zeta_1\right) \Gamma\left(\frac{c}{2}+\zeta_1\right)  \Gamma\left(\frac{c-1}{2}+\zeta_1\right)  \Gamma\left(-1+m_{ST}+\zeta_1\right)} {\Gamma\left(2+\zeta_1\right)} \left(\frac{4\lambda_1}{{\bar{y}_1}^{\frac{2}{d}}} \right)^{\zeta_1}  
\nonumber \\
 &\times 
\Gamma\left(1+ m_{S_{ST}}-\zeta_2\right) \Gamma\left(\zeta_2\right)  \Gamma\left(-1+m_{ST}+\zeta_2\right)
 \left(a\lambda_1\right)^{\zeta_2}
 \frac{\Gamma^2\left(-2-\zeta_1-\zeta_2\right) \Gamma\left(3+\zeta_1+\zeta_2\right)}{\Gamma\left(-1-\zeta_1-\zeta_2\right)} 
 \mathrm{d}{\zeta_1} \mathrm{d}{\zeta_2} . 
 \end{align}
\hrulefill
\vspace{-15pt}		\end{figure*} 
In order to solve $J_3$, we need to solve the following integral, 
\begin{align} \label{J3_int1}
J_3 & \hspace{-2pt}=\scalebox{0.92}{$\displaystyle \hspace{-2pt}  \frac{\mathcal{A}}{\mathcal{G}} \hspace{-2pt}  \int_{0}^{\infty} \hspace{-3pt}  G^{1,2}_{2,2} \hspace{-2pt} \left( \hspace{-7pt} \begin{array}{c} \gamma_R \end{array} \hspace{-4pt} \Bigg\vert \hspace{-3pt} \begin{array}{c}  1,1  \\  1,0 \\ \end{array} \hspace{-6pt} \right) \hspace{-2pt} G^{1,2}_{2,1} \hspace{-2pt} \left( \hspace{-7pt} \begin{array}{c}	a\lambda_{1}  \gamma_E \end{array} \hspace{-4pt} 		\Big\vert \hspace{-3pt} \begin{array}{c} 
1,  2-m_{ST} \\ 1+m_{S_{ST}} \\ \end{array} \hspace{-6pt} \right)
\hspace{-2pt} \mathrm{d}\gamma_{E} $} .
\end{align}
By using \cite[Eq. 2.24.1.1]{ref59}, we can obtain \eqref{J3_int1} as the third term of \eqref{ASC1}, and thus, the proof is completed for $J_3$.
By completing the proof for $J_3$, the proof will be completed for ASC in Thm. \ref{theor_ASC_1}.

\section{Proof of Theorem 6}
\label{SOP_01}
For computing SOP, we can rewrite \eqref{SOP2} as \eqref{SOP_int1},
\begin{figure*}[t]
			\normalsize
\begin{align} \label{SOP_int1}
P_\mathrm{sop} &= \frac{\mathcal{G} \mathcal{C}}{a\bar{\gamma}_{E_2}}
\int_{0}^{\infty} \hspace{-2pt} 
\left(R_t\gamma_E+R'_t\right) 
G^{1,4}_{4,2} \hspace{-2pt} \left( \hspace{-7pt} \begin{array}{c}
\frac{4\lambda_1 \left(R_t\gamma_E+R'_t\right)}{{\bar{y}_{1}}^{\frac{2}{d}}} \end{array} \hspace{-4pt}
\Bigg\vert \hspace{-3pt} \begin{array}{c} 
0, \frac{2-c}{2}, \frac{3-c}{2}, 2-m_{ST}  \\ 
1+m_{S_{ST}}, -1 \\
\end{array} \hspace{-6pt} \right) 
G^{1,2}_{2,1}\left( \hspace{-6pt} \begin{array}{c}	a\lambda_{1}  \gamma_E \end{array} \hspace{-2pt} 		\Big\vert \hspace{-2pt} \begin{array}{c} 
1,  2-m_{ST} \\ 1+m_{S_{ST}} \\ 
\end{array} \hspace{-6pt} \right)
\hspace{-2pt} \mathrm{d}\gamma_{E}
\\ &\overset{(e)}{=}
\frac{\mathcal{G} \mathcal{C}}{a\bar{\gamma}_{E_2}\left(2\pi j\right)^2}
\underbrace{\int_{0}^{\infty} \hspace{-2pt} 
\left(R_t\gamma_E+R'_t\right)^{1+\zeta_1} {\gamma_E}^{\zeta_2} \mathrm{d}\gamma_E}_{I_4}
\ointop_{L_{2}} 
\Gamma(1+m_{S_{ST}}-\zeta_2)
\Gamma(\zeta_2) 
\Gamma(-1+m_{ST}+\zeta_2)
\left(a\lambda_1\right)^{\zeta_2}
  \mathrm{d} \zeta_2
\nonumber \\ &\times
\ointop_{L_{1}} \hspace{-2pt} 
\frac {\Gamma(1+m_{S_{ST}}-\zeta_1)
\Gamma(1+\zeta_1) \Gamma(\frac{c}{2}+\zeta_1) \Gamma(\frac{c-1}{2}+\zeta_1) \Gamma(m_{ST}+\zeta_1)
}
{\Gamma(2+\zeta_1)}
\left(\frac{4\lambda_1}{{\bar{y}_{1}}^{\frac{2}{d}}}\right)^{\zeta_1}
\mathrm{d} \zeta_1.
\end{align}
\hrulefill
\vspace{-15pt}		\end{figure*}
where $(e)$ means using the integral-form demonstration of the Meijer's G-function. In order to solve $I_4$ in \eqref{SOP_int1}, we use \cite[Eq. 3.194.3]{table_int}. Then we have
\begin{align} \label{I_4}
    I_4=
    \left(\frac{R'_t}{R_t}\right)^{1+\zeta_2}
    {R'_t}^{1+\zeta_1}
    \frac{\Gamma\left(1+\zeta_2\right)
    \Gamma\left(-2-\zeta_1-\zeta_2\right)
    }
    {\Gamma\left(-1-\zeta_1\right)}. 
\end{align}
By inserting \eqref{I_4} into \eqref{SOP_int1} and changing $\zeta_1$ to $-\zeta_1$ and $\zeta_2$ to $-\zeta_2$, we can rewrite \eqref{SOP_int1} as \eqref{SOP_01_Gamma}, which can be shown as \eqref{SOP1} based on the bivariate Fox H-function definition. Therefore, the proof will be completed for SOP in Thm. \ref{theor_SOP_1}.
\begin{figure*}[t]
			\normalsize
\begin{align} \label{SOP_01_Gamma}
P_\mathrm{sop}&= \frac{\mathcal{G} R'^2_t\mathcal{C}}{a\bar{\gamma}_{E_2}R_t(2\pi j)^2 }   \ointop_{L_1} \ointop_{L_2} \Gamma\left(-2+\zeta_1+\zeta_2\right)  \frac{\Gamma\left(1+ m_{S_{ST}}+\zeta_1\right) \Gamma\left(1-\zeta_1\right) \Gamma\left(\frac{c}{2}-\zeta_1\right) \Gamma\left(\frac{c-1}{2}-\zeta_1\right) \Gamma\left(m_{ST}-\zeta_1\right) \hspace{-2pt}
\left(\frac{{\bar{y}_1}^{\frac{2}{d}}}{4\lambda_1 R'_t} \right)^{\zeta_1}}{\Gamma\left(2-\zeta_1\right) \Gamma\left(-1+\zeta_1\right)}  \nonumber \\
 &\times \Gamma\left(1+ m_{S_{ST}}+\zeta_2\right) \Gamma\left(-\zeta_2\right) \Gamma\left(-1+m_{ST}-\zeta_2\right) \Gamma\left(1-\zeta_2\right) \left(\frac{R_t}{a\lambda_1 R'_t} \right)^{\zeta_2} \mathrm{d}{\zeta_2} \mathrm{d}{\zeta_1}.
 \end{align}
\hrulefill
\vspace{-15pt}		\end{figure*}

\section{Proof of  Theorem 7}
\label{ASC_02}
In order to obtain ASC with considering the direct links, we can follow the same steps we took in Appendix \ref{ASC_01}. For this, we start with computing $J_1$ in \eqref{ASC_int}. By representing the Meijer G demonstration of the logarithm function and using the integral-form demonstration of bivariate Fox H-function in \eqref{pdf_SNR_Reader} and \eqref{cdf_SNR_Eve_2}, we can write $J_1$ as \eqref{J1_ASC2_int1}. 
\begin{figure*}[t]
			\normalsize
   \begin{align}  \label{J1_ASC2_int1}
&J_1= \frac{\eta_1\eta_2\mathcal{G}\mathcal{C}}{a\bar{\gamma}_{E_2}ln(2) \left(2\pi j\right)^4} 
\underbrace{
\int_{0}^{\infty} \hspace{-2pt} 
{\gamma_R}^{-5-\zeta_1-\zeta_2-\zeta_3-\zeta_4} \hspace{2pt}
G^{1,2}_{2,2} \hspace{-2pt} \left( \hspace{-7pt} \begin{array}{c}
\gamma_R \end{array} \hspace{-4pt}
\Bigg\vert \hspace{-3pt} \begin{array}{c} 
1,1  \\ 
1,0 \\ \end{array} \hspace{-6pt} \right) 
 \mathrm{d}\gamma_{R}}_{I_5}   
\nonumber  \\ &\times \hspace{-4pt} 
\ointop_{L_{1}} \hspace{-2pt}
\ointop_{L_{2}}\hspace{-2pt}
\scalebox{0.85}{$\displaystyle
\frac{\Gamma(m_{ST}\hspace{-2pt}-\hspace{-2pt}\zeta_1)  \Gamma(m_{TR}\hspace{-2pt}-\hspace{-2pt}\zeta_1) \Gamma(\zeta_1) \Gamma(m_{S_{ST}}\hspace{-2pt}+\hspace{-2pt}\zeta_1) \Gamma(m_{S_{TR}}\hspace{-2pt}+\hspace{-2pt}\zeta_1) \hspace{-3pt}
\left(\hspace{-2pt}\frac{\delta_1}{\bar{\gamma}_{R_1}}\hspace{-2pt}\right)^{\zeta_1} \hspace{-2pt}
\Gamma(1\hspace{-2pt}+\hspace{-2pt}m_{S_{ST}}\hspace{-2pt}-\hspace{-2pt}\zeta_2) \Gamma(1\hspace{-2pt}+\hspace{-2pt}\zeta_2) \Gamma(\frac{c}{2}\hspace{-2pt}+\hspace{-2pt}\zeta_2) \Gamma(\frac{c-1}{2}\hspace{-2pt}+\hspace{-2pt}\zeta_2) \Gamma(m_{ST}\hspace{-2pt}-\hspace{-2pt}1\hspace{-2pt}+\hspace{-2pt}\zeta_2) \hspace{-3pt}
\left(\hspace{-2pt}\frac{4\lambda_1}{\bar{y}_{1}^{\frac{2}{d}}}\hspace{-3pt}\right)^{\zeta_2}}
{\Gamma(-1-\zeta_1-\zeta_2)}
\mathrm{d}{\zeta_2} \mathrm{d}{\zeta_1}
$}
   \nonumber  \\  &\times 
    \hspace{-4pt} 
\ointop_{L_{3}} \hspace{-2pt}
\ointop_{L_{4}}\hspace{-2pt}
\scalebox{0.88}{$\displaystyle
\frac{\Gamma(m_{ST}\hspace{-2pt}-\hspace{-2pt}\zeta_3)  \Gamma(m_{TE}\hspace{-2pt}-\hspace{-2pt}\zeta_3) \Gamma(\zeta_3) \Gamma(m_{S_{ST}}\hspace{-2pt}+\hspace{-2pt}\zeta_3) \Gamma(m_{S_{TE}}\hspace{-2pt}+\hspace{-2pt}\zeta_3) \hspace{-3pt}
\left(\hspace{-2pt}\frac{\delta_2}{\bar{\gamma}_{E_1}}\hspace{-2pt}\right)^{\zeta_3} \hspace{-2pt}
\Gamma(1\hspace{-2pt}+\hspace{-2pt}m_{S_{ST}}\hspace{-2pt}-\hspace{-2pt}\zeta_4) \Gamma(1\hspace{-2pt}+\hspace{-2pt}\zeta_4) \Gamma(\zeta_4) \Gamma(m_{ST}\hspace{-2pt}-\hspace{-2pt}1\hspace{-2pt}+\hspace{-2pt}\zeta_4) \hspace{-2pt}
\left(\hspace{-1pt}a\lambda_1\hspace{-1pt}\right)^{\zeta_4}}
{\Gamma(-2-\zeta_3-\zeta_4)}
\mathrm{d}{\zeta_4} \mathrm{d}{\zeta_3}
$}.
\end{align}
   \hrulefill
\vspace{-10pt}		\end{figure*}
In order to solve $I_5$ in \eqref{J1_ASC2_int1}, we use \cite[Eq. 2.24.2.1]{ref59}. Then we have
\begin{align} \label{I_5}
 I_5 \hspace{-2pt}=\hspace{-2pt} \scalebox{0.899}{$\displaystyle  
    \frac{\Gamma\left(5+\zeta_1+\zeta_2+\zeta_3+\zeta_4\right)
    \Gamma\left(-1-\zeta_1-\zeta_2\right)
    \Gamma\left(-2-\zeta_3-\zeta_4\right)
    }
    {\Gamma^2\left(4+\zeta_1+\zeta_2+\zeta_3+\zeta_4\right)
    \Gamma\left(-3-\zeta_1-\zeta_2-\zeta_3-\zeta_4\right)
    }  $}. 
\end{align}
Now, by inserting \eqref{I_5} into \eqref{J1_ASC2_int1} and then using the multivariate Fox H-function definition \cite{ref61}, we can rewrite $J_1$ as shown in the first item of \eqref{ASC02}. 
By taking the exactly same steps for computing $J_1$, we can compute $J_2$ as shown in the second item of \eqref{ASC02}. 
In order to solve $J_3$, we use the Meijer G demonstration of the logarithm function and the integral-form demonstration of bivariate Fox H-function in \eqref{pdf_SNR_Eve_2}. Therefore, we can write $J_3$ as \eqref{J3_ASC2_int1}. 
\begin{figure*}[t]
			\normalsize
   \begin{align}  \label{J3_ASC2_int1}
&J_3= \frac{\eta_2\mathcal{C}}{a\bar{\gamma}_{E_2}ln(2) \left(2\pi j\right)^4} 
\underbrace{
\int_{0}^{\infty} \hspace{-2pt} 
{\gamma_E}^{-2-\zeta_1-\zeta_2} \hspace{2pt}
G^{1,2}_{2,2} \hspace{-2pt} \left( \hspace{-7pt} \begin{array}{c}
\gamma_E \end{array} \hspace{-4pt}
\Bigg\vert \hspace{-3pt} \begin{array}{c} 
1,1  \\ 
1,0 \\ \end{array} \hspace{-6pt} \right) 
 \mathrm{d}\gamma_{E}}_{I_6}   
\\ \nonumber  &\times \hspace{-4pt} 
\ointop_{L_{1}} \hspace{-2pt}
\ointop_{L_{2}}\hspace{-2pt}
\scalebox{0.95}{$\displaystyle
\frac{\Gamma(m_{ST}\hspace{-2pt}-\hspace{-2pt}\zeta_1)  \Gamma(m_{TE}\hspace{-2pt}-\hspace{-2pt}\zeta_1) \Gamma(\zeta_1) \Gamma(m_{S_{ST}}\hspace{-2pt}+\hspace{-2pt}\zeta_1) \Gamma(m_{S_{TE}}\hspace{-2pt}+\hspace{-2pt}\zeta_1) 
\hspace{-3pt} 
\left(\hspace{-2pt}\frac{\delta_2}{\bar{\gamma}_{E_1}}\hspace{-2pt}\right)^{\zeta_1} 
\hspace{-2pt}
\Gamma(1\hspace{-2pt}+\hspace{-2pt}m_{S_{ST}}\hspace{-2pt}-\hspace{-2pt}\zeta_2) \Gamma(1\hspace{-2pt}+\hspace{-2pt}\zeta_2)  \Gamma(\zeta_2) \Gamma(m_{ST}\hspace{-2pt}-\hspace{-2pt}1\hspace{-2pt}+\hspace{-2pt}\zeta_2) 
\hspace{-3pt}
\left(\hspace{-1pt}a\lambda_1\hspace{-2pt}\right)^{\zeta_2}}
{\Gamma(-1-\zeta_1-\zeta_2)}
\mathrm{d}{\zeta_2} \mathrm{d}{\zeta_1}
$}.
\end{align}
   \hrulefill
\vspace{-15pt}		\end{figure*}
In order to solve $I_6$ in \eqref{J3_ASC2_int1}, we use \cite[Eq. 2.24.2.1]{ref59}. Then we have
\begin{align} \label{I_6}
 I_6 = \scalebox{0.999}{$\displaystyle  
    \frac{
    \Gamma\left(2+\zeta_1+\zeta_2\right)
    }
    {\Gamma^2\left(1+\zeta_1+\zeta_2\right)
    \Gamma\left(-\zeta_1-\zeta_2\right)
    }  $}. 
\end{align}
Now, by inserting \eqref{I_6} into \eqref{J3_ASC2_int1} and then using the multivariate Fox H-function definition \cite{ref61}, we can rewrite $J_3$ as shown in the third item of \eqref{ASC02}. Therefore, the proof will be completed for ASC in Thm. \ref{theor_ASC2}.

\section{Proof of Theorem 8}
\label{SOP_02}
By inserting the integral-form demonstration of bivariate Fox H-function of of \eqref{cdf_SNR_Reader} and \eqref{pdf_SNR_Eve_2} into \eqref{SOP2}, we can rewrite SOP as \eqref{SOP2_int1}. 
\begin{figure*}[t]
			\normalsize
   \begin{align}  \label{SOP2_int1}
&P_\mathrm{sop}= \frac{\eta_1\eta_2\mathcal{G}\mathcal{C}}{a\bar{\gamma}_{E_2}ln(2) \left(2\pi j\right)^4} 
\underbrace{\int_{0}^{\infty} \hspace{-2pt} 
\left(R_t\gamma_E+R'_t\right)^{-3-\zeta_1-\zeta_2} {\gamma_E}^{-2-\zeta3-\zeta_4} \mathrm{d}\gamma_E}_{I_7}   
\nonumber  \\ &\times \hspace{-4pt} 
\ointop_{L_{1}} \hspace{-2pt}
\ointop_{L_{2}}\hspace{-2pt}
\scalebox{0.85}{$\displaystyle
\frac{
\Gamma(m_{ST}\hspace{-2pt}-\hspace{-2pt}\zeta_1)  
\Gamma(m_{TR}\hspace{-2pt}-\hspace{-2pt}\zeta_1) 
\Gamma(\zeta_1) 
\Gamma(m_{S_{ST}}\hspace{-2pt}+\hspace{-2pt}\zeta_1) 
\Gamma(m_{S_{TR}}\hspace{-2pt}+\hspace{-2pt}\zeta_1) 
\hspace{-3pt}
\left(\hspace{-2pt}\frac{\delta_1}{\bar{\gamma}_{R_1}}\hspace{-2pt}\right)^{\zeta_1} 
\hspace{-2pt}
\Gamma(1\hspace{-2pt}+\hspace{-2pt}m_{S_{ST}}\hspace{-2pt}-\hspace{-2pt}\zeta_2) 
\Gamma(1\hspace{-2pt}+\hspace{-2pt}\zeta_2) \Gamma(\frac{c}{2}\hspace{-2pt}+\hspace{-2pt}\zeta_2) \Gamma(\frac{c-1}{2}\hspace{-2pt}+\hspace{-2pt}\zeta_2) \Gamma(m_{ST}\hspace{-2pt}-\hspace{-2pt}1\hspace{-2pt}+\hspace{-2pt}\zeta_2) \hspace{-3pt}
\left(\hspace{-2pt}\frac{4\lambda_1}{\bar{y}_{1}^{\frac{2}{d}}}\hspace{-3pt}\right)^{\zeta_2}}
{\Gamma(-2-\zeta_1-\zeta_2)}
\mathrm{d}{\zeta_2} \mathrm{d}{\zeta_1}
$}
   \nonumber  \\  &\times 
    \hspace{-4pt} 
\ointop_{L_{3}} \hspace{-2pt}
\ointop_{L_{4}}\hspace{-2pt}
\scalebox{0.88}{$\displaystyle
\frac{\Gamma(m_{ST}\hspace{-2pt}-\hspace{-2pt}\zeta_3)  \Gamma(m_{TE}\hspace{-2pt}-\hspace{-2pt}\zeta_3) \Gamma(\zeta_3) \Gamma(m_{S_{ST}}\hspace{-2pt}+\hspace{-2pt}\zeta_3) \Gamma(m_{S_{TE}}\hspace{-2pt}+\hspace{-2pt}\zeta_3) \hspace{-3pt}
\left(\hspace{-2pt}\frac{\delta_2}{\bar{\gamma}_{E_1}}\hspace{-2pt}\right)^{\zeta_3} \hspace{-2pt}
\Gamma(1\hspace{-2pt}+\hspace{-2pt}m_{S_{ST}}\hspace{-2pt}-\hspace{-2pt}\zeta_4) \Gamma(1\hspace{-2pt}+\hspace{-2pt}\zeta_4) \Gamma(\zeta_4) \Gamma(m_{ST}\hspace{-2pt}-\hspace{-2pt}1\hspace{-2pt}+\hspace{-2pt}\zeta_4) \hspace{-2pt}
\left(\hspace{-1pt}a\lambda_1\hspace{-1pt}\right)^{\zeta_4}}
{\Gamma(-1-\zeta_3-\zeta_4)}
\mathrm{d}{\zeta_4} \mathrm{d}{\zeta_3}
$}.
\end{align}
   \hrulefill
\vspace{-15pt}		\end{figure*}
In order to solve $I_7$ in \eqref{SOP2_int1}, we use \cite[Eq. 3.194.3]{table_int}. Then we have
\begin{align} \label{I_7}
   \scalebox{0.88}{$\displaystyle
   I_7 \hspace{-2pt}=\hspace{-2pt}
    \left(\frac{R_t}{R'_t}\right)^{1+\zeta_3+\zeta_4} \hspace{-2pt}
    {R'_t}^{-3-\zeta_1-\zeta_2}
    \frac{\Gamma\hspace{-2pt}\left(-2\hspace{-2pt}-\hspace{-2pt}\zeta_1\hspace{-2pt}-\hspace{-2pt}\zeta_2\right)\Gamma\hspace{-2pt}\left(-1\hspace{-2pt}-\hspace{-2pt}\zeta_3\hspace{-2pt}-\hspace{-2pt}\zeta_4\right)}
    {\Gamma\left(4+\zeta_1+\zeta_2+\zeta_3+\zeta_4\right)}
    $}.
\end{align}
By inserting \eqref{I_7} into \eqref{SOP2_int1} and then using the multivariate Fox H-function definition \cite{ref61}, we can represent \eqref{SOP2_int1} as shown in \eqref{SOP2}, thus, the proof is completed for SOP in Thm. \ref{theor_SOP2}.

\section{Proof of Proposition 1}
\label{asy_ASC_01_proof}

In the case of $\bar{\gamma}_R\rightarrow\infty$, the asymptotic ASC is defined
as
\begin{align} \label{ASC_asy_eq_J}
\bar{C}_{\mathrm{s}}^{\mathrm{asy}}= J'_1+J'_2-J_3 .
\end{align}
In order to compute $J'_1$, we can see that the bivariate Fox's H-function in
\eqref{ASC_01_Gamma} is evaluated at the highest poles on the left of $L_2$, i.e., $\zeta_2=-2-\zeta_1$. Thus, we have the integral in \eqref{proof_r1} for the counter $L_2$.
\begin{figure*}[t]
			\normalsize
\begin{align}  \label{proof_r1}
&\scalebox{0.95}{$\displaystyle\mathcal{R}_1 \hspace{-2pt}=\hspace{-2pt} \frac{1}{2\pi j}\ointop_{L_2}
\frac{\Gamma^2\left(-2-\zeta_1-\zeta_2\right) \Gamma\left(3+\zeta_1+\zeta_2\right)}{\Gamma\left(-1-\zeta_1-\zeta_2\right)}
 \underset{\xi(\zeta_2)}{\underbrace{\Gamma\hspace{-2pt} \left(1+ m_{S_{ST}}-\zeta_2\right) \Gamma\hspace{-2pt}\left(\frac{c}{2}+\zeta_2\right) \Gamma\hspace{-2pt}\left(\frac{c-1}{2}+\zeta_2\right) \Gamma\hspace{-2pt}\left(-1+m_{ST}+\zeta_2\right)
 \hspace{-2pt}\left(\frac{4\lambda_1}{{\bar{y}_1}^{\frac{2}{d}}} \right)^{\zeta_2}}}$}.
\end{align}
\hrulefill
\vspace{-15pt}		\end{figure*}
Since $\mathcal{R}_1=\mathrm{Res}\left[\xi(\zeta_2),-2-\zeta_1\right]$, we can rewrite \eqref{proof_r1} as
\begin{align} \label{Res1_ASC_J1}
\mathcal{R}_1
&=\lim_{\zeta_2\rightarrow -2-\zeta_1}\left(-2-\zeta_1-\zeta_2\right)\xi(\zeta_2)\\ \nonumber  \label{Res2_ASC_J1}
&=\Gamma\hspace{-2pt} \left(3+ m_{S_{ST}}+\zeta_1\right) \Gamma\hspace{-2pt}\left(\frac{c-4}{2}-\zeta_1\right) \Gamma\hspace{-2pt}\left(\frac{c-5}{2}-\zeta_1\right) 
\nonumber \\ &\times
\Gamma\hspace{-2pt}\left(-3+m_{ST}-\zeta_1\right)
 \hspace{-2pt}\left(\frac{4\lambda_1}{{\bar{y}_1}^{\frac{2}{d}}} \right)^{-2-\zeta_1}
\end{align}
Now, by inserting \eqref{Res2_ASC_J1} into \eqref{proof_r1}, $J'_1$ can be determined as \eqref{J1_asy_int_1}
\begin{figure*}[t]
			\normalsize
\begin{align} \label{J1_asy_int_1}
J'_1&= \frac{\mathcal{G} \mathcal{C}{\bar{y}_1}^{\frac{4}{d}}}{16a\lambda_1^2\bar{\gamma}_{E} 2\pi j }   \ointop_{L_1} 
\frac{\Gamma\left(1+ m_{S_{ST}}-\zeta_1\right) \Gamma\left(1+\zeta_1\right) \Gamma\left(\zeta_1\right) \Gamma\left(-1+m_{ST}+\zeta_1\right)}{\Gamma\left(2+\zeta_1\right)} 
\nonumber \\ &\times 
\Gamma \left(3+ m_{S_{ST}}+\zeta_1\right) \Gamma\left(\frac{c-4}{2}-\zeta_1\right) \Gamma\left(\frac{c-5}{2}-\zeta_1\right)
\Gamma\left(-3+m_{ST}-\zeta_1\right)
\left(\frac{{\bar{y}_1}^{\frac{2}{d}}a}{4} \right)^{\zeta_1}
\mathrm{d}{\zeta_1}.
 \end{align}
 \hrulefill
\vspace{-10pt}		\end{figure*}
 where, by using the definition of the univariate Meijer's G-function, the proof is completed for $J_1$ (the first term of \eqref{asy_ASC123}).
 Similarly, $J'_2$ i.e., the secoend term in \eqref{asy_ASC123} can be obtained by computing the bivariate
Fox’s H-function in \eqref{ASC_02_Gamma} at the pole $\zeta_2=-2-\zeta_1$.
Finally, by plugging $J_3$ in the third term of \eqref{ASC1} into \eqref{asy_ASC123}, the proof of the asymptotic
ASC is completed.

\section{Proof of Proposition 2}
\label{asy_SOP_01_proof}

In the case of $\bar{\gamma}_R\rightarrow\infty$, the bivariate Fox's H-function in
\eqref{SOP_01_Gamma} is evaluated at the highest poles on the left of $L_2$, i.e., $\zeta_2=2-\zeta_1$. Thus, we have the integral in \eqref{proof-r2} for the counter $L_2$.
\begin{figure*}[t]
			\normalsize
\begin{align} 
&\mathcal{R}_2= \frac{1}{2\pi j}\ointop_{L_2}\Gamma\left(-2+\zeta_1+\zeta_2\right)
 \underset{\chi(\zeta_2)}{\underbrace{\Gamma\left(-\zeta_2\right) \Gamma\left(-1+m_{ST}-\zeta_2\right) \Gamma\left(1-\zeta_2\right)\left(1+ m_{S_{ST}}+\zeta_2\right)\left(\frac{R_t}{a\lambda_1 R'_t} \right)^{\zeta_2} \mathrm{d}{\zeta_2}}}.
\label{proof-r2}
\end{align}
\hrulefill
\vspace{-10pt}		\end{figure*}
Since $\mathcal{R}_2=\mathrm{Res}\left[\chi(\zeta_2),2-\zeta_1\right]$, we can rewrite \eqref{proof-r2} as
\begin{align} \label{Res_SOP}
\mathcal{R}_2
&=\lim_{\zeta_2\rightarrow 2-\zeta_1}\left(-2+\zeta_1+\zeta_2\right)\chi(\zeta_2)\\ \nonumber  \label{Res_SOP2}
&=\Gamma\left(\zeta_1-2\right)\Gamma\left(\zeta_1+m_{ST}-3\right)\Gamma\left(\zeta_1-1\right)\\
&\times\Gamma\left(3+m_{S_{ST}}-\zeta_1\right)\left(\frac{R_t}{a\lambda_1 R_t'}\right)^{2-\zeta_1}.
\end{align}
Now, by inserting \eqref{Res_SOP2} into \eqref{SOP_01_Gamma}, the asymptotic SOP can be determined as
\begin{align} \nonumber
&P_\mathrm{sop}^\mathrm{asy}= \frac{\mathcal{G} \mathcal{C}R_t}{a^3\lambda_1^2\bar{\gamma}_{E_2} 2\pi j }   \ointop_{L_1} \Gamma\left(1+ m_{S_{ST}}+\zeta_1\right) \left(\frac{{\bar{y}_1}^{\frac{2}{d}}a}{4R_t} \right)^{\zeta_1}\\\nonumber
&\times\frac{ \Gamma\left(\frac{c}{2}-\zeta_1\right) \Gamma\left(\frac{c-1}{2}-\zeta_1\right) \Gamma\left(m_{ST}-\zeta_1\right)\Gamma\left(1-\zeta_1\right) }{\Gamma\left(2-\zeta_1\right)}\\
&\times \scalebox{0.98}{$\displaystyle \Gamma\left(\zeta_1-2\right)\Gamma\left(\zeta_1+m_{ST}-3\right)\Gamma\left(3+m_{S_{ST}}-\zeta_1\right)\mathrm{d}{\zeta_1} $},
 \end{align}
 where, by using the definition of the univariate Meijer's G-function, the proof is completed.

\end{document}